\documentclass[ms,
copyedit]{informs4TD}

\usepackage{lmodern} %

\usepackage{graphicx, url}
\usepackage{graphics}
\usepackage{color}
\usepackage{tablefootnote}
\usepackage{multirow}
\usepackage{array}
\usepackage{mathrsfs}  
\usepackage{amssymb}
\usepackage{amsmath}
\usepackage{natbib}%
\usepackage{enumitem}

\usepackage{booktabs}
\usepackage{threeparttable}

\usepackage{subcaption}
\usepackage{tcolorbox}

\TheoremsNumberedThrough

\usepackage{footmisc}

\theoremstyle{TH} %

\newcounter{mainpropnum}
\theoremstyle{TH} %

\usepackage{longtable}

\newlength{\defbaselineskip}
\newcommand{\setlinespacing}[1]%
           {\setlength{\baselineskip}{#1 \defbaselineskip}}

\usepackage{natbib}
 \bibpunct[, ]{(}{)}{,}{a}{}{,}%
 \def\bibfont{\small}%

\setcitestyle{citesep={;}}

\makeatletter
\renewcommand*\env@matrix[1][c]{\hskip -\arraycolsep
  \let\@ifnextchar\new@ifnextchar
  \array{*\c@MaxMatrixCols #1}}
\makeatother

\usepackage{color}
\definecolor{hopkins-blue}{RGB}{0,45,114}
\definecolor{columbia-blue}{RGB}{185, 217, 235}
\definecolor{chicago-maroon}{RGB}{128,0,0}
\definecolor{cornell-red}{RGB}{179,27,27}
\definecolor{lawngreen}{RGB}{0,250,154}
\definecolor{uci-gold}{RGB}{254,204,7}
\definecolor{gray}{RGB}{192,192,192}

\usepackage{accents}

\usepackage{nicefrac}
\usepackage[colorlinks,citecolor=chicago-maroon,urlcolor=chicago-maroon,linkcolor=chicago-maroon]{hyperref}
\usepackage[nameinlink]{cleveref}
\usepackage{xurl}
\crefname{assumption}{Assumption}{Assumptions}
\crefname{lemma}{Lemma}{Lemmas}
\crefname{theorem}{Theorem}{Theorems}
\crefname{corollary}{Corollary}{Corollaries}
\crefname{proposition}{Proposition}{Propositions}
\crefname{claim}{Claim}{Claims}
\crefname{subclaim}{Subclaim}{Subclaims}
\crefname{procedure}{Procedure}{Procedures}
\crefname{algorithm}{Algorithm}{Algorithms}
\crefname{example}{Example}{Examples}
\crefname{figure}{Figure}{Figures}
\crefname{section}{Section}{Sections}
\crefname{appendix}{Appendix}{Appendices}
\crefname{table}{Table}{Tables}
\crefname{subproposition}{Proposition}{Propositions}

\newcommand{\PaperTitle}{Power Couple? AI Growth and Renewable Energy Investment}

\TITLE{\PaperTitle}

\RUNAUTHOR{Gui and Dai}
\RUNTITLE{\PaperTitle} %

\ARTICLEAUTHORS{
\AUTHOR{Luyi Gui{$^\ast$} \hspace{0.75in} 
Tinglong Dai{$^\dagger$}}
\medskip
\AFF{
{$^\ast$}Paul Merage School of Business, University of California, Irvine; Irvine, California 92697, email: \href{mailto:luyig@uci.edu}{\tt luyig@uci.edu}\\ \smallskip
{$^\dagger$}Carey Business School, Johns Hopkins University, Baltimore, Maryland 21202;
Data Science and AI Institute, Johns Hopkins University, Baltimore, Maryland 21218,
email: \href{mailto:dai@jhu.edu}{\tt dai@jhu.edu}}
}

\ABSTRACT{\OneAndAHalfSpacedXI \small
Artificial intelligence (AI) and renewable energy are increasingly being described as a ``power couple,'' based on the idea that rapid growth in AI will spur clean-energy investment. Yet growing AI demand could also deepen reliance on fossil power. We study when each outcome arises in a game in which renewable-capacity investment and AI scaling interact. The key is how the market value of greater AI capability grows relative to the energy needed to achieve it. When value grows at least as fast as energy use (market-led scaling), the developer pushes toward frontier capability even when additional electricity comes from fossil sources. Renewable investment can then enable further AI growth without eliminating fossil use. As climate damages increase, AI becomes more valuable for adaptation, strengthening incentives to sustain frontier capability despite the associated emissions. We call this the ``adaptation trap.'' When energy requirements grow faster than capability value (resource-led scaling), energy costs place greater limits on AI expansion. Renewable investment then makes additional capability less costly while also reducing emissions. As climate damages rise, the growing value of AI for adaptation can justify enough clean-capacity expansion to support AI entirely with renewable power. We call this the ``adaptation pathway.'' A calibrated case study shows that both mechanisms can arise at empirically plausible magnitudes. The results suggest that decarbonizing AI requires renewable capacity to keep pace with the growth of compute demand.  \looseness=-1
}

\KEYWORDS{artificial intelligence, renewable energy, scaling law, sustainable operations}

\OneAndAHalfSpacedXI

\begin{document}

\maketitle

\section{Introduction}

Over the past few years, artificial intelligence (AI) has emerged as both a significant climate challenge \emph{and} a promising climate solution.  Frontier training and large-scale inference consume substantial electricity: data centers consumed about 415 TWh of electricity in 2024, roughly 1.5\% of the global total, and their consumption is projected to more than double by 2030 \citep{iea2025}; training a single state-of-the-art model can require energy comparable to that used by thousands of households in a year \citep{iea2025}. At the same time, AI is increasingly deployed to support climate action: optimizing grids for renewable integration, improving weather and climate forecasts, and strengthening disaster response and other adaptation efforts \citep{Choi2024,rolnick2019tackling,stauffer2025aiCleanEnergyFuture}. This dual role has sharpened industry and policy debate over AI's net implications for climate change.

It is generally agreed that the climate impact of AI hinges on \emph{how} its energy demand is met: AI will exacerbate climate change if powered primarily by fossil fuels, but can be a solution if powered by carbon-free renewable energy. Today, fossil power (natural gas and coal) still supplies most of the energy used by data centers worldwide \citep{iea2025}. As such, the latter scenario has been strongly advocated: the United Nations recently urged AI developers to run data centers on 100\% renewable energy by 2030 \citep{reuters2025}. Understanding how the energy mix powering AI will evolve, and whether this ``100\% renewable by 2030'' vision can be achieved, remains an open question. Prevailing views fall into two camps.

A first view frames AI's energy footprint as a capacity problem: renewables are scaling, so most incremental data-center load will likely be met by new renewable supply. The International Energy Agency projects that renewables will meet about half of the growth in data-center electricity demand, with natural gas and coal supplying much of the remainder, and the split varying substantially across scenarios \citep{iea2025}. Some go further, arguing that AI-driven demand could itself accelerate clean investment and hasten substitution away from fossil generation \citep{fransoo2026navigating,spglobal2025}, a view echoed in the emerging ``power couple'' framing of AI growth and renewable investment \citep{wef2025powercouple,rmi2025powercouple,iea2023powercouple}. The underlying logic is that renewables' widening cost advantage makes clean procurement increasingly attractive \citep{irena2025}, and AI-enabled grid optimization can ease integration frictions \citep{wef2025}. In this account, emissions hinge on the pace of clean-capacity deployment.

A second view warns the opposite:  rapid growth in data-center demand can \emph{entrench} fossil generation. Because many data centers are concentrated in regions with carbon-intensive grids, utilities facing abrupt load increases often meet reliability needs by running existing coal and gas plants harder and postponing planned retirements \citep{eesi2025}. Industry reporting also documents cases in which utilities repower shuttered fossil assets \citep{tbj2025} or commission new gas-fired capacity specifically to serve data-center loads \citep{utilitydive2025}. In this account, clean-capacity expansion alone is not enough to reduce emissions. Instead, if clean capacity does not scale sufficiently in step with compute, AI growth can create a carbon \emph{lock-in} that slows the renewable transition.

Reconciling these views is difficult because AI development and the energy transition coevolve %
\citep{cohen2026omforum,netessineShunko2026InvisibleBackbone}.
 Notably, the pace of the energy transition influences economic incentives for AI development, while AI itself can drive or impede that transition through its substantial power demand and potential grid impacts. Capturing this interplay and analyzing the resulting equilibrium are essential to understanding how AI and energy shape each other. Existing analyses typically treat developments in the AI and energy sectors independently, focusing on one side (e.g., AI innovation or clean-energy investment) while holding the other side exogenous.

This paper offers an initial step toward filling that gap. We develop a model in which renewable capacity and AI capability are set by separate actors whose choices feed back on each other: clean-energy availability affects the cost of scaling AI, and the resulting scale changes the economic and environmental value of additional clean capacity. To capture this interaction tractably, the baseline model puts a welfare-maximizing policymaker on the clean-energy side and a profit-maximizing developer on the AI side that selects a capability level (for instance, a frontier, computation-intensive AI versus a more modest, utility-grade AI). The policymaker's capacity choice stands in for the policy and investment machinery that determines clean-energy availability; an extension explicitly introduces a utility that responds to policy incentives.%

Our model captures three features of the AI context. First, the compute and energy a given capability requires depend on the prevailing scaling regime, which shapes how much renewable investment lowers operating costs. Across regimes, the capability--compute relationship can be close to proportional over the relevant range or exhibit sharply diminishing marginal gains from additional compute. We capture this variation through a scaling parameter. \looseness=-1

Second, AI markets can exhibit supermodular returns to capability: incremental breakthroughs unlock new uses and generate disproportionately large value. As the technology matures and key applications saturate, these returns may become submodular. We represent these differences through the elasticity of willingness-to-pay. That elasticity sets how sharply returns curve, but that curvature alone does not determine whether capability reaches frontier scale; what matters is how fast willingness-to-pay rises relative to the energy needed to raise capability. \looseness=-1 

Third, the model allows AI to generate climate-related \emph{benefits}, particularly through adaptation and resilience.%
\footnote{AI may also help reduce emissions (mitigation), for example, by facilitating renewable integration or optimizing energy efficiency \citep{Choi2024}. However, such effects are more uncertain (due to potential rebound effects) and slower to materialize, whereas adaptation benefits are more immediate and observable in practice. We focus on adaptation benefits as the primary social value of AI in the climate context.}
Accordingly, we capture a potential trade-off: although capability is energy-intensive, its social value could rise with climate damage because it also aids climate adaptation, even as its energy use can worsen that damage.

Solving the model reveals two distinct ways in which AI growth can interact with the energy transition. When the market value of capability grows at least as fast as its energy requirements, as may be true near today's frontier, developers have strong incentives to pursue frontier capability even when doing so requires fossil power. Renewable investment can therefore support further AI growth without fully displacing fossil generation. This creates an ``adaptation trap'': as climate damages worsen, AI becomes more valuable for adaptation, strengthening incentives to sustain frontier capability even while some of its electricity continues to come from fossil sources. The equilibrium combines high AI capability with persistent emissions.

When the market value of capability grows more slowly than its energy requirements, developers become more sensitive to energy costs. Expanding renewable capacity can then lower the cost of additional AI capability while reducing the emissions associated with it. As climate damages worsen and AI becomes more valuable for adaptation, the policymaker has a stronger incentive to expand clean capacity. Beyond a threshold, renewable capacity can grow enough to meet the full electricity needs of AI development. We call this an ``adaptation pathway'': greater climate stress encourages clean-energy investment, which in turn supports further AI development without additional fossil emissions.

We complement these analytical results with a calibrated case study that captures observed differences in electricity markets, renewable costs, and AI demand growth. The numerical examples show that both the adaptation trap and the adaptation pathway arise under empirically plausible conditions. They also show that powering AI entirely with renewable energy may remain difficult in some settings unless the cost of expanding renewable capacity falls substantially and climate policy remains sufficiently strong.

We also examine two extensions. First, AI may itself lower renewable-energy costs, for example by improving grid integration or accelerating innovation. Such cost reductions can have an unexpected effect: unless they are large enough to support full decarbonization, they can make frontier AI more attractive while leaving some fossil use in place, thereby worsening the adaptation trap. Second, competition among AI developers can promote cleaner development when market potential is moderate. Competition reduces profit margins, making firms more sensitive to energy costs and increasing their incentives to use renewable power, which in turn encourages renewable investment. When market potential is sufficiently strong, however, margins remain large enough to sustain frontier scaling with fossil power, and both developers can fall into the adaptation trap.

The remainder of the paper is organized as follows. \cref{sec:literature} discusses related literature. \cref{sec:conceputal_framework} introduces the modeling framework. \cref{sec:developer-decision,sec:policymaker-decision} analyze the AI developer's and policymaker's decisions, respectively, deriving the main equilibrium results. \cref{sec:case-study} presents the case study, and \cref{sec:extensions} considers the extensions. %
\cref{sec:conclusions} concludes with implications for policy and for future research on the AI--climate nexus.

\section{Literature}
\label{sec:literature}

The AI boom has made electricity a binding constraint on innovation, so energy policy increasingly functions as technology policy. Understanding the interaction between AI and energy means studying technological change, clean-capacity investment, and climate damages together. We organize the related work around three streams.

\subsection{Environmental Policy and Technology Innovation}

First, our paper relates to work on directed innovation and carbon lock-in. In the directed technical change tradition, policy shifts private returns to steer invention; in a canonical model, \citet{acemoglu2012environment} show that well-designed climate policy can redirect innovation toward clean technologies and sustain the transition even after policy is relaxed. Empirically, higher energy prices and taxes reallocate inventive activity toward cleaner technologies \citep{aghion2016auto,popp2002}. A parallel tradition considers why cost-effective energy-saving technologies often diffuse slowly, the so-called ``energy paradox,'' and traces it to a combination of pollution and innovation-and-adoption market failures \citep{jaffe_stavins_1994,jaffe_newell_stavins_2005}, whereas a long-running debate considers whether well-designed environmental regulation can itself stimulate offsetting innovation \citep{porter_vanderlinde_1995}. Complementary operations management work makes the policy--diffusion margin explicit at the firm and market level, from early models of green design responses to environmental requirements \citep{chen2001design} to more recent analyses of how subsidy design, strategic supply, and regulatory uncertainty shape adoption and investment \citep{cohen2019commitment,cohen2016demanduncertainty,wang2021green}. We share this literature's interest in how environmental policy shapes technology, but the policy margin differs. Rather than steering invention between clean and dirty alternatives, clean-energy investment here sets the scale at which a developer deploys a single energy-intensive, adaptation-providing technology, and that equilibrium scale is what we study. \looseness=-1

The lock-in literature starts from a different diagnosis. Increasing returns, network effects, and institutional complementarity can entrench an inferior regime long after superior alternatives exist \citep{arthur1989}. In energy systems, carbon-intensive technologies are embedded in a dense web of infrastructure, regulation, skills, and sunk investments that together form a techno-institutional complex resistant to change \citep{unruh2000,unruh2002}. These perspectives raise a central question for the AI context: if policy can redirect innovation, why does carbon lock-in remain durable, and if lock-in dominates, why do transitions occur at all? We address this question by modeling the joint determination of the innovation margin and the capacity margin and showing that %
lock-in can arise endogenously from the equilibrium interaction between private scaling incentives and public clean-capacity choices under certain conditions. Counterintuitive incentive effects of this kind recur in the economics of AI deployment more broadly: more reliable AI, for instance, can paradoxically make it costlier to sustain human oversight \citep{bastani_cachon_2025}, and the most capable machine need not be the one a firm should deploy once it must also motivate human effort \citep{guan_qi_wang_2025}; relatedly, when a human will only partially follow algorithmic advice, the best recommendation departs from the best decision \citep{grand_clement_pauphilet_2026}.

\subsection{Clean Capacity Investment and Carbon Footprint of AI}

Second, a literature in operations and energy economics studies clean-capacity investment under uncertainty, accounting for intermittency and valuing operational flexibility, to meet load at acceptable cost and reliability. Much of this work treats demand as given, or as responding to prices through reduced-form elasticities and demand-response programs. Recent contributions endogenize strategic behavior on the supply side: \citet{aflaki2017} analyze renewable investment under intermittency and strategic behavior, and \citet{kok2020} show how operational flexibility reshapes the optimal mix of renewable and conventional generation. Related work studies joint choices over renewables, storage, and load control under tight resource constraints, yielding micro-foundations for data-center and ``behind-the-meter'' procurement and showing how clean investment depends on operational substitutability across these margins \citep{kaps2023offgrid}. Closest to our two-source energy abstraction, \citet{peng_wu_souza_2024} study joint operations and investment across renewable, flexible (natural-gas), and storage capacities and characterize when these resources act as substitutes or complements.  A closely related strand emphasizes contracting and market design: \citet{gao_sunar_birge_2025} analyze renewable power purchase agreements (PPAs) and show how contract structure and timing affect green-capacity expansion, whereas work on corporate renewable targets highlights how commitment mechanisms and investment timing interact with uncertainty \citep{sunar_birge_2019,trivella_mohseni_taheri_nadarajah_2023}. At the decentralized margin, \citet{gao_alshehri_birge_2024} show that aggregating distributed energy resources can improve efficiency while also creating market power concerns, and operational constraints such as curtailment imply that installed capacity need not translate one-for-one into clean output \citep{wu_kapuscinski_2013}. We differ from this stream in what is treated as endogenous: it optimizes a supply portfolio against given or price-responsive demand, whereas we make AI load endogenous to clean-energy availability; investment in clean capacity changes the economics of AI scaling, which in turn changes equilibrium electricity demand.

In parallel, an emerging literature quantifies the energy use and carbon footprint of the digital economy, especially large-scale AI. \citet{strubell-etal-2019-energy} highlight that frontier model development can be energy-intensive, motivating emissions accounting and efficiency efforts. Policy and industry analyses warn that data-center electricity demand could grow rapidly with AI workloads \citep{wef2025,eesi2025} and document local grid stress from clustered data-center expansion \citep{utilitydive2025}. Firms have responded with decarbonization commitments and record corporate renewable procurement \citep{bnef2024corporate,google2020environmental,microsoft2020carbonnegative}. A common limitation is that these analyses rarely model how developers adjust capability, and thus compute intensity, when energy becomes cheaper or cleaner. Capacity-planning and procurement models provide detail on supply, contracts, and market design but typically treat demand as technologically passive; the AI-footprint literature treats compute growth as an outcome rather than a strategic choice. Neither captures that clean-capacity investment can raise equilibrium compute demand by lowering the effective marginal cost of electricity, creating scope for rebound-type effects \citep{Gillingham2016}. We model this feedback explicitly by endogenizing both clean-capacity investment and AI scaling. \looseness=-1

\subsection{Adaptation and Mitigation}

Third, a climate-policy literature examines whether adaptation can expand without weakening mitigation. In the benchmark framing, adaptation and mitigation are substitutes, so cheaper adaptation lowers the marginal benefit of abatement. Integrated assessment work that incorporates adaptation formalizes this channel and shows how the timing of adaptation shapes optimal mitigation \citep{debruin2009}, with related arguments in surveys emphasizing uncertainty and learning \citep{ingham2007}. At the same time, syntheses stress that climate policy is a portfolio problem: adaptation is unavoidable in the near term, but without mitigation, warming can outstrip adaptive capacity \citep{ipcc2007wg2ch18}. This tension underlies concerns about an ``adaptation moral hazard.''

Beyond the benchmark, crowd-out is not inevitable. Adaptation can shift the distribution and timing of losses and the political constraints on mitigation, allowing complementarity \citep{ingham2013}. What this debate largely abstracts from is adaptation delivered by a technology with steeply increasing returns to scale. In particular, AI-based adaptation is compute-intensive and scalable, with complementarities between capability and deployment; with supermodular returns, small advantages compound, and scale races become self-reinforcing \citep{bulow1985,parker2005}. In AI markets specifically, a data feedback loop known as the ``AI flywheel,'' in which wider adoption generates more data that further improves capability, reinforces these dynamics \citep{gurkan_devericourt_2022}. Embedding this scaling logic into the adaptation--mitigation problem changes the relevant margin: worsening climate damages can raise the return to capability, translating adaptation demand into faster compute growth and, unless marginal electricity is carbon-free, higher emissions. Our model formalizes this interaction and shows that  %
when a technology with steeply increasing returns delivers adaptation, how its energy interface is governed shapes whether adaptation and mitigation are complementary or in tension.

\section{Model}
\label{sec:conceputal_framework}

We develop a parsimonious model that isolates the core incentives linking AI development and renewable investment. The model endogenizes both AI scale and the clean capacity available to support it; for tractability, we represent their interaction as a sequential game. In a single planning period, two actors move sequentially. A welfare-maximizing policymaker chooses renewable capacity $y$ intended to power AI development. A profit-maximizing AI developer then chooses an AI capability level $x\in[0,1]$.

The capacity $y$ should be read as ``dedicated'' clean supply available to the developer, for example, via long-term power purchase agreements (PPAs) or on-site generation, rather than undifferentiated grid supply. This focus matches how AI developers often secure clean electricity and clarifies the margin at which policy and corporate procurement interact.

\subsection{Capability, Energy Demand, and Emissions}
We consider a single period, a technological phase in which a prevailing paradigm governs the mapping from compute to capability. For example, in one phase, pretraining scale drives progress: larger datasets, more parameters, more training compute. In another, incremental compute is shifted toward inference (``test-time'' scaling). The one-period structure provides a tractable way to study how the equilibrium between clean capacity and AI capability depends on the economics of scaling within a phase.

Within the phase, the developer chooses an AI capability level $x$ that indexes performance along the scaling frontier. We normalize the maximum attainable performance in the phase to $1$, reflecting that improvements eventually plateau. The developer chooses $x\in[0,1]$ to maximize profits, trading off market value against costs. %

Compute is energy-intensive. Motivated by empirical ``scaling laws'' that relate model performance to training compute through approximate power laws \citep{hestness2017deep,hoffmann2022train,kaplan2020scaling}, we assume that achieving capability $x$ requires compute equal to $x^{1+\alpha}$, where $\alpha\ge 0$ governs how rapidly compute requirements rise with performance. If each unit of compute consumes $k$ units of energy, the developer's energy demand is $k\,x^{1+\alpha}$.

Developers procure electricity from the grid and from dedicated renewables (PPAs or on-site generation).\footnote{In practice, developers' alternative to grid power is likely a portfolio of energy resources, including dedicated renewable generation, efficient natural-gas technologies such as on-site combined cooling, heat, and power systems, and energy storage. We abstract from this mix and focus on the key economic and environmental attributes it typically delivers: lower generation costs and emissions than grid power, but at the expense of substantial upfront investment.} %
In practice, grids are mixed and remain dominated by fossil generation. 
In the United States, natural gas and coal together supply roughly three-fifths of electricity generation \citep{eia2025epm}; in China, coal generates the majority of electricity \citep{eia2025china}. Renewable PPAs, by contrast, match most energy demand with renewable generation under varying matching requirements. %
In the model, we abstract the two sources as ``fossil'' and ``renewable'' to contrast carbon-intensive and carbon-free supply. %

The developer incurs unit costs of $c_f$ and $c_r$ for energy from the fossil and renewable sources, respectively. In the main model, we interpret these parameters as unit generation costs, as in cases in which developers use on-site energy production. Developers procuring electricity from energy operators through PPAs or from utilities at prevailing electricity tariffs likely pay more than the underlying generation costs. The margins accrue to the energy operators or utilities and drive the strategic behavior of these entities. We abstract from this strategic layer and focus on the underlying generation cost in the main model. We refer readers to \cref{subsec:decentralized} for an extended model that explicitly incorporates a utility, its strategic decision, and the associated cash flows.

We take $c_r<c_f$ to reflect the general cost advantage of renewable sources in energy generation. %
For the time period considered, we treat renewables as capacity-constrained, whereas fossil generation can be ramped to cover residual demand. This captures the practical asymmetry that clean expansion requires lead time (siting, interconnection, transmission, storage), whereas fossil capacity in the grid can respond relatively quickly to load growth, including from data centers \citep{utilitydive2025,eesi2025}. For example, in Virginia, which has the world's highest concentration of data centers, utilities have pursued aggressive generation and transmission expansion plans, including substantial fossil capacity additions, to keep pace with AI demand growth.\footnote{We acknowledge that grid capacity is constrained in practice. However, such constraints are typically reflected in interconnection queues for new data centers. Once connected, data centers generally rely on the utility's obligation to serve, under which residual demand is met through grid power.  }

Because $c_r<c_f$, a profit-maximizing developer uses all available renewables first and relies on fossil energy only for demand above renewable capacity. Accordingly, 
given the developer's capability choice $x$ and the renewable capacity $y$,\footnote{The model can be extended to include an existing renewable capacity level $y_0$ in addition to the investment $y$; all results remain robust.} the share of the developer's energy demand met by renewables is
\begin{equation} \label{eq:beta_x_y}
\beta(x,y) \;=\; \min\left\{\frac{y}{kx^{1+\alpha}},\,1\right\}.
\end{equation}
If $kx^{1+\alpha}\le y$, renewables cover all energy use and $\beta(x,y)=1$; otherwise, $\beta(x,y)<1$.

A more rigorous formulation of \cref{eq:beta_x_y} would replace renewable capacity $y$ in the numerator with the quantity of renewable energy that can be effectively deployed to satisfy the developer's electricity demand. The distinction matters because leading renewable sources, including solar and wind, are intermittent: the same nameplate capacity may yield different levels of usable renewable electricity depending on natural conditions and the effectiveness of firming measures, such as energy storage and data centers' ability to shift or flexibly schedule workloads. \Cref{subsec:utilization} examines this refinement explicitly: the developer's problem retains its structure under intermittent availability, whereas both scaling regimes tilt toward more carbon-intensive outcomes. \looseness=-1

Market willingness-to-pay for capability $x$ is $\theta x^\lambda$, where $\theta>0$ denotes the valuation at the capability frontier, $x=1$, and $\lambda>0$ governs the elasticity of value with respect to capability. When $\lambda>1$, private returns to capability are convex: incremental improvements can unlock new tasks and markets, so valuation rises more than proportionally with $x$. When $0<\lambda<1$, by contrast, marginal gains become progressively less valuable. The monopoly benchmark can therefore be interpreted as a reduced-form representation of a regime with concentrated rents and strong incentives to race toward the frontier; \cref{subsec:competition} relaxes this assumption by introducing competition.

Achieving a given capability entails multiple cost components (hardware, facilities, labor, and energy). We assume energy sources differ only in their unit energy cost, normalizing other costs away.\footnote{In practice, the energy source can affect operations (e.g., workload shifting). We abstract from these operational margins to focus on investment and scaling incentives. A common capability-dependent non-energy cost would shift the developer's activation thresholds by raising the effective unit cost of compute, but it enters both energy branches symmetrically and leaves the regime comparison between $\lambda$ and $1+\alpha$ unchanged.} Equivalently, $c_r$ and $c_f$ are all-in unit compute costs under renewable versus fossil-based power.

The developer's profit is
\begin{equation}
\Pi(x,y) \;=\; \theta x^\lambda
-\Big(\beta(x,y)c_r+\big(1-\beta(x,y)\big)c_f\Big)\,k\,x^{1+\alpha}.
\end{equation}
The optimal capability choice conditional on $y$ is $x^*(y)\doteq \arg\max_{x\in[0,1]} \Pi(x,y)$.

\subsection{Renewable Capacity and the Policy Objective}

In practice, decisions regarding renewable-capacity investment are distributed across governments, utilities, and private firms, including AI companies themselves. Nevertheless, public policy heavily shapes investment outcomes. Governments can directly invest in clean generation, transmission, and storage infrastructure; utility investments in renewable projects are often subject to regulatory approval and procurement requirements such as Renewable Portfolio Standards; technology companies' investments in clean firm power can be facilitated by policies such as Clean Transition Tariffs; and third-party renewable developers frequently rely on financial incentives, including tax credits and investment subsidies.

Collectively, these mechanisms imply a mapping from policy choices to renewable investment outcomes. We therefore interpret $y$ as the clean capacity made available to AI through this broader investment and policy environment. The baseline model collapses these institutions into a welfare-maximizing actor that directly chooses $y$; this collapse isolates the feedback between clean-energy availability and AI scaling. This single-instrument abstraction allows us to focus on the fundamental interaction between policy-induced clean-energy investment and market-driven AI demand growth, suppressing implementation frictions that may arise in translating policy instruments into realized investment outcomes. We formalize the policy-to-investment process in \cref{subsec:decentralized}.

The relevant decision variable is the period-specific capacity $y$ that is available to power the phase of AI development under study. Although much energy infrastructure is long-lived, climate and industrial policy are frequently revised, and AI itself is a salient driver of such revisions. The one-period formulation can therefore be understood as capturing an investment decision within a phase, %
and examining the clean-capacity response to prospective AI load within a given scaling regime.

A central policy concern is emissions from fossil-powered compute and the resulting climate damages. Let $d_0\geq 0$ denote the baseline emissions stock in the period considered, absent AI. 
This can be interpreted as the atmospheric carbon stock above a given climate stability threshold.
Let $e_f>0$ be the carbon emissions per unit of fossil-generated electricity, and treat renewables as zero-carbon. Given $(x,y)$, emissions attributable to AI development are
\begin{equation}
E(x,y) \;=\; e_f\big(1-\beta(x,y)\big)\,k\,x^{1+\alpha},
\end{equation}
so total emissions are $d_0+E(x,y)$. Both $d_0$ and $E$ are measured in the same unit (e.g., million metric tons of CO$_2$e).

At the same time, higher AI capability can reduce the realized damages from a given emissions stock by improving forecasting, preparedness, and other adaptation services. We capture this adaptation channel through a reduced-form formulation in which capability $x$ scales down climate damages by a factor $1-bx$, where $b\in (0,1)$ measures the effectiveness of AI-enabled adaptation. %
The resulting climate-damage term is
\begin{equation}
D(x,y) \;=\; (1-bx)\,\big(d_0+E(x,y)\big).
\end{equation}

Expanding renewable capacity is costly and increasingly so at the margin: prime sites are developed first, and further scale requires transmission, interconnection, and balancing resources. We therefore model investment cost as an increasing, convex function $V(y)$ with $V(0)=0$. In an extension, we allow $V$ to depend on capability, $V(x,y)$, to capture the possibility that higher AI capability lowers the effective cost of integrating renewables, for example, through improved grid management and planning (see \cref{subsec:reductions-renewable-costs}).

The policymaker chooses $y$ to maximize social welfare, anticipating the developer's best response $x^*(y)$. Welfare includes economic surplus generated by AI (which may exceed the developer's private profit when there are spillovers), subtracts climate damages net of adaptation benefits, and subtracts the investment cost of renewable capacity. A compact representation is
\begin{equation}
W(y)
\;=\;
\underbrace{\eta\,\Pi\big(x^*(y),y\big)}_{\text{economic surplus}}
\;-\;
\underbrace{\xi\,D\big(x^*(y),y\big)}_{\text{climate damages net of adaptation}}
\;-\;
\underbrace{V(y)}_{\text{renewable-investment cost}},
\end{equation}
where $\eta>1$ captures spillovers from AI profitability to broader economic value creation and $\xi>0$ scales the policymaker's weight on climate damages relative to economic surplus.

It is without loss of generality to restrict the policy choice to $y\in[0,k]$ so that dedicated renewable energy does not exceed the maximum demand $k$. The optimal renewable capacity is
\begin{equation}
y^* \;\doteq\; \arg\max_{y\in[0,k]} W(y).
\end{equation}

\section{Developer Capability Choice}
\label{sec:developer-decision}

We solve by backward induction, starting from the AI developer's best response to renewable capacity $y$. The developer chooses capability $x\in[0,1]$ to maximize profits, trading off market value $\theta x^\lambda$ against the energy cost of supplying $k x^{1+\alpha}$ units of electricity at an endogenous unit price determined by the clean share $\beta(x,y)$. A key force is how willingness-to-pay scales with capability relative to how energy requirements scale, captured by $\lambda$ versus $1+\alpha$. Intuitively, when $\lambda \ge 1+\alpha$, returns keep pace with energy needs, and the developer prefers higher capability; when $\lambda < 1+\alpha$, energy costs become increasingly binding at high $x$.

We begin with $\lambda \ge 1+\alpha$, an environment in which the marginal private payoff from capability does not taper faster than the marginal energy requirement. This aligns with convex valuations in frontier markets and scaling regimes in which additional compute remains effective. We refer to this as \emph{market-led scaling}. When multiple capability choices yield the same profit, we assume the developer selects the larger $x$.

\begin{proposition}
\label{lem:1}
    In the market-led scaling scenario in which $\lambda\geq 1+\alpha$, there exist two thresholds $y_1\leq y_2$ such that for any given $y\in [0,k]$, %
    the optimal AI capability choice $x^*$ equals
    \begin{equation}
        x^*(y)=\begin{cases}
            0 \quad \forall y< y_1\\
            \left(\frac{y}{k}\right)^{\frac{1}{1+\alpha}} \quad \forall y\in [y_1,y_2)\\
            1 \quad \forall y\geq y_2
        \end{cases}
    \end{equation}
    Accordingly, the emissions from AI development $E(x^*(y),y)=0$ if $y<y_2$. Otherwise, $E(x^*(y),y)>0$ unless $y=k$.
\end{proposition}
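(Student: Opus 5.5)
Fix $y\in[0,k]$ and let $\bar x(y)\doteq (y/k)^{1/(1+\alpha)}\in[0,1]$ denote the capability at which renewables exactly cover energy demand. The plan is to split $[0,1]$ at $\bar x(y)$, solve the developer's problem separately on each side, and then compare the two local optima.

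On $[0,\bar x]$ we have $\beta=1$, so profit is $\Pi_r(x)=\theta x^\lambda-c_rkx^{1+\alpha}$. On $[\bar x,1]$ we have $\beta=y/(kx^{1+\alpha})$, so profit is
\[
\Pi_f(x)=\theta x^\lambda-c_fkx^{1+\alpha}+(c_f-c_r)y .
\]
The key observation is that, for any unit cost $c$, the function $g_c(x)=\theta x^\lambda-ckx^{1+\alpha}=x^{1+\alpha}(\theta x^{\lambda-1-\alpha}-ck)$ is handled by studying the sign of the bracket. When $\lambda\ge 1+\alpha$, the bracket is nondecreasing in $x$. Hence $g_c$ is negative on an initial interval and then positive and increasing, so on any interval $[a,b]$ its maximum is attained at an endpoint. (If $\lambda=1+\alpha$, the bracket is constant and the same endpoint conclusion holds.) Applying this to both branches, the global maximizer lies in $\{0,\bar x(y),1\}$. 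So it suffices to compare three values:
\[
\Pi(0)=0,\qquad \Pi(\bar x)=\theta (y/k)^{\lambda/(1+\alpha)}-c_r y,\qquad \Pi(1)=\theta-c_f k+(c_f-c_r)y .
\]

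The next step is to show that these comparisons produce thresholds in $y$. First, $\Pi(\bar x)\ge 0$ iff $\theta (y/k)^{\lambda/(1+\alpha)-1}\ge c_rk$. Since $\lambda/(1+\alpha)\ge1$, the left side is nondecreasing in $y$, which gives a threshold $y_1$. Second, consider the difference $h(y)=\Pi(1)-\Pi(\bar x(y))$. Writing $u=y/k$, we get $h=\theta(1-u^{\lambda/(1+\alpha)})-c_fk(1-u)$, so $h(k)=0$. I expect the main obstacle to be showing that $\{y:h(y)\ge0\}$, intersected with the relevant range, is an upper interval $[y_2,k]$. Here $h$ is concave in $u$ (because $u^{\lambda/(1+\alpha)}$ is convex) and vanishes at $u=1$. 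A concave function that vanishes at the right endpoint has a nonnegative set of the form $[u_0,1]$ or $\{1\}$ within $[0,1]$, since concavity forbids returning to zero after a negative stretch. This yields $y_2$, which may equal $k$.

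The remaining step is to assemble the ordering. Note that $h(y)\ge0$ together with $\Pi(1)\ge 0$ may hold for small $y$ even when $\Pi(\bar x)<0$. The cleaner route is therefore to define $y_2=\inf\{y:\Pi(1)\ge\max(0,\Pi(\bar x))\}$ and $y_1=\min(y_2,\inf\{y:\Pi(\bar x)\ge0\})$. Both $\Pi(1)$ and $\Pi(\bar x)$ are monotone in the appropriate sense ($\Pi(1)$ increases linearly in $y$), so each defining set is an upper interval. Combined with the tie-breaking rule toward larger $x$, this gives the three-piece form with $y_1\le y_2$, where the middle interval may be empty.

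For the emissions claim: if $y<y_2$, then $x^*\le\bar x$, so $\beta=1$ and $E=0$. If $y\ge y_2$, then $x^*=1$ and $E=e_f(k-y)$, which is positive unless $y=k$.
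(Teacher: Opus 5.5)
Your proof is correct in outline, and its assembly step is cleaner than the paper's, but the endpoint step needs a one-line repair. The sign pattern of $g_c(x)=x^{1+\alpha}(\theta x^{\lambda-1-\alpha}-ck)$ (negative, then positive and increasing) does not imply that the maximum over an arbitrary interval is attained at an endpoint. On an interval where $g_c<0$, nothing you have said rules out an interior peak.

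\begin{itemize}
\item For $[0,\bar x]$ this is harmless, because $g_{c_r}(0)=0$ dominates every negative value.
\item For $\Pi_f$ on $[\bar x,1]$ with $\theta<c_fk$ (so $g_{c_f}<0$ on $(0,1]$), your sign argument gives nothing. This is exactly where an interior, partly fossil-powered capability has to be excluded.
\end{itemize}

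The fix is to factor the derivative the same way: $g_c'(x)=x^{\alpha}\bigl(\theta\lambda x^{\lambda-1-\alpha}-(1+\alpha)ck\bigr)$. The bracket is nondecreasing, so $g_c$ is decreasing then increasing, strictly when $\lambda>1+\alpha$. Strictness also rules out interior ties under the larger-$x$ tie-break. This single argument replaces the paper's two separate second-derivative treatments: $\alpha=0$ (convexity) and $\alpha>0$ (concave-then-convex with $\pi'<0$ near $0$).

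Two smaller repairs are needed.
\begin{itemize}
\item $\Pi(\bar x(y))=0$ at $y=0$, so $\{y:\Pi(\bar x)\ge0\}=\{0\}\cup[y_1^0,k]$, where $y_1^0$ is the threshold from your power condition. This set is not an upper interval, and $\Pi(\bar x(y))$ itself is not monotone in $y$; only its sign on $(0,k]$ is. The inner infimum defining $y_1$ must therefore be taken through the power condition on $(0,k]$; otherwise $y_1$ collapses to $0$.
\item You need the convention $\inf\emptyset=+\infty$. It plays the role of the paper's out-of-range thresholds ($k+1$, or $(c_fk-\theta)/(c_f-c_r)>k$) when $\theta<c_rk$.
\end{itemize}

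The genuine difference from the paper is how the thresholds are assembled. The paper works in $\theta$. It introduces $\bar\theta=(1+\alpha)c_fk/\lambda$ and the second root $\bar{\bar\theta}$ of a concave auxiliary function $F(\theta)=G(y_1^0)$. It shows $y_1^0<y_2^0$ when $c_r/c_f<(1+\alpha)/\lambda$ and $\theta\in(\bar\theta,\bar{\bar\theta})$, then handles cases (A), (B), (C) and $\lambda=1+\alpha$ separately.

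You instead define $y_2$ as the infimum of the set where $x=1$ is weakly optimal. That set is the intersection of $\{\Pi(1)\ge0\}$ (linear in $y$) with $\{h\ge0\}$, a superlevel set of a concave function vanishing at $u=1$ (the paper's $G$ is $-h$). You then take $y_1=\min\{y_2,y_1^0\}$. The ordering, the empty-coupling-zone cases, and the boundary case $\lambda=1+\alpha$ all follow uniformly, with no root-counting in $\theta$.

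What the paper's case analysis buys is explicit parametric conditions for which zones are nonempty, which it reuses in \cref{prop:1}(i)(a) and \cref{prop:3}. Your framework gives these as short corollaries, each yielding $y_2\ge k$:
\begin{itemize}
\item $h'(1)=c_fk-\theta\lambda/(1+\alpha)\ge0$ forces $h<0$ on $[0,1)$ when $\lambda>1+\alpha$;
\item $\theta\le c_rk$ forces $\Pi(1)<0$ for $y<k$.
\end{itemize}
You would need to state them explicitly for the proposition to serve those later proofs.
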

\emph{Proof.} All proofs of results stated in the main text are relegated to \cref{ec:proofs}.

\cref{lem:1} shows that the link between renewable capacity and fossil reliance can be discontinuous. The developer's best response has three regions. When $y<y_1$, the developer is inactive ($x^*(y)=0$): even modest capability would require enough fossil power that energy costs dominate market value.

When $y\in[y_1,y_2)$, the developer produces a positive-capability model but stays fully within renewables. In particular,
\[
x^*(y)=\left(\frac{y}{k}\right)^{\frac{1}{1+\alpha}}
\]
exactly exhausts clean capacity and yields $E(x^*(y),y)=0$. Renewables bind, so capability tracks the clean-energy constraint exactly. We call this interval the ``coupling zone.''

Once $y$ crosses $y_2$, the developer jumps to the frontier choice $x^*(y)=1$. Unless $y\ge k$, residual demand is met by fossil generation, and emissions become strictly positive. Thus, additional renewables can shift the equilibrium from a clean, capacity-constrained regime to a frontier regime in which fossil power is used at the margin.

The mechanism is straightforward. Under $\lambda\ge 1+\alpha$, the marginal market payoff from capability remains large relative to the implied increase in energy requirements, so the developer is drawn toward $x=1$. When $y$ is small, reaching high capability entails paying $c_f$ on a large fossil share, which deters development or keeps it tightly constrained by clean capacity. As $y$ rises, the developer can scale while remaining carbon-free. Beyond $y_2$, the remaining fossil cost at $x=1$ is no longer decisive, and the developer absorbs it rather than forgo frontier capability. For $y\ge y_2$, capability no longer depends on renewable availability; we refer to this region as the ``decoupling zone.''

We next consider the case in which compute requirements rise faster than willingness-to-pay as capability increases, that is, $\lambda<1+\alpha$. We refer to this as the \emph{resource-led scaling} scenario. Define
\[
f(c)\;\doteq\;\left(\frac{\theta\lambda}{(\alpha+1)\,c\,k}\right)^{\frac{1}{1+\alpha-\lambda}},
\]
which is decreasing in $c$.

\begin{proposition}
\label{lem:2}
In the resource-led scaling scenario in which $\lambda<1+\alpha$, for any $y\in[0,k]$, the optimal capability choice is
\begin{equation}
x^*(y)=
\begin{cases}
\min\{f(c_f),1\}, & \text{if } y<k\,f(c_f)^{1+\alpha},\\[4pt]
\left(\frac{y}{k}\right)^{\frac{1}{1+\alpha}}, & \text{if } y\in\bigl[k\,f(c_f)^{1+\alpha},\,k\,f(c_r)^{1+\alpha}\bigr),\\[6pt]
\min\{f(c_r),1\}, & \text{if } y\ge k\,f(c_r)^{1+\alpha}.
\end{cases}
\end{equation}
Accordingly, $E(x^*(y),y)=0$ for $y\ge k\,f(c_f)^{1+\alpha}$; otherwise, $E(x^*(y),y)>0$ unless $y=k$.
\end{proposition}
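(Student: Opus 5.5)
The plan is to exploit the fact that, for fixed $y$, the profit $\Pi(\cdot,y)$ is continuous and piecewise smooth, with a single kink at the capacity-exhausting level
\[
\bar x(y)\doteq \left(\frac{y}{k}\right)^{\frac{1}{1+\alpha}}\in[0,1].
\]
This kink is where $\beta(x,y)$ switches from $1$ to $y/(kx^{1+\alpha})$. On $[0,\bar x]$ we have $\Pi(x,y)=g_{c_r}(x)$. On $[\bar x,1]$ we have $\Pi(x,y)=g_{c_f}(x)+(c_f-c_r)\,y$. Here
\[
g_c(x)\doteq \theta x^\lambda-c\,k\,x^{1+\alpha}.
\]
The constant $(c_f-c_r)y$ comes from expanding $\beta c_r+(1-\beta)c_f$ times $kx^{1+\alpha}$ when $\beta<1$. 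The two expressions agree at $\bar x$, so $\Pi$ is continuous.

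The key step is a single-crossing (unimodality) property of $g_c$ under $\lambda<1+\alpha$. Factoring the derivative gives
\[
g_c'(x)=x^{\lambda-1}\bigl(\theta\lambda-(1+\alpha)ck\,x^{1+\alpha-\lambda}\bigr).
\]
Since the exponent $1+\alpha-\lambda$ is positive, the bracket is strictly decreasing in $x$. Hence $g_c'>0$ for $x<f(c)$ and $g_c'<0$ for $x>f(c)$. This holds whether $\lambda$ is above or below $1$, and the possibly infinite slope at $0$ when $\lambda<1$ is harmless. So $g_c$ is strictly increasing then strictly decreasing, with its peak at $f(c)$. Because $c_r<c_f$ and $f$ is decreasing, we have $f(c_f)<f(c_r)$.

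It follows that the maximizer of $\Pi$ over $[0,\bar x]$ is $\min\{f(c_r),\bar x\}$. The maximizer over $[\bar x,1]$ is $\max\{\bar x,\min\{f(c_f),1\}\}$.

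Next I compare the two branches in three cases according to where $\bar x$ lies. Note that $\bar x<f(c_f)$ is equivalent to $y<kf(c_f)^{1+\alpha}$, and similarly for $f(c_r)$.

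\begin{itemize}
\item \emph{Case $\bar x<f(c_f)$.} The left branch is increasing on all of $[0,\bar x]$. So its best value is $\Pi(\bar x)$, which is weakly dominated by the right branch's maximum at $\min\{f(c_f),1\}$. This gives the first line of the statement.
\item \emph{Case $f(c_f)\le \bar x<f(c_r)$.} The left branch is increasing up to $\bar x$, and the right branch is decreasing from $\bar x$. So $x^*=\bar x$, which is the coupling line.
\item \emph{Case $\bar x\ge f(c_r)$.} The right branch is decreasing from $\bar x$, and the left branch peaks at $f(c_r)\le\bar x$. So $x^*=f(c_r)$. This equals $\min\{f(c_r),1\}$ because $\bar x\le 1$ forces $f(c_r)\le 1$ here.
\end{itemize}

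Wherever a threshold $kf(c)^{1+\alpha}$ exceeds $k$, the corresponding interval is simply empty within $[0,k]$. For instance, if $f(c_f)\ge 1$, only the first case occurs and $x^*=1$. In that event the tie-breaking rule (pick the larger $x$) is consistent with the formula. Strict unimodality guarantees uniqueness otherwise.

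For emissions, note that $E(x^*,y)>0$ if and only if $x^*>\bar x$, i.e., fossil power is used at the margin. In the second and third cases $x^*\le\bar x$, so $E=0$; this covers $y\ge kf(c_f)^{1+\alpha}$. In the first case $x^*=\min\{f(c_f),1\}>\bar x$ unless $\bar x=1$, i.e., $y=k$. So $E>0$ except at $y=k$.

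The main obstacle I expect is not the calculus but bookkeeping at the boundaries. These are the kink at $\bar x$, the truncation at $x=1$ when $f(\cdot)\ge 1$, and the degenerate endpoints $y=0$ and $y=k$. I would handle them uniformly through the branch-wise maximizers written with $\min/\max$ above, rather than through first-order conditions.
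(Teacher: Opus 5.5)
Your proposal is correct and follows essentially the paper's route: you split the profit at the kink $(y/k)^{1/(1+\alpha)}$ into the $c_r$-piece and the $c_f$-piece, show that each piece is single-peaked at $f(c_r)$ and $f(c_f)$ respectively, and then compare the pieces according to where the kink lies relative to those peaks. The only difference is that you obtain single-peakedness in one step from the factorization $g_c'(x)=x^{\lambda-1}\bigl(\theta\lambda-(1+\alpha)ck\,x^{1+\alpha-\lambda}\bigr)$, covering $\lambda\le 1$ and $\lambda>1$ together, whereas the paper treats these cases separately via second derivatives (concave in the first case; convex-then-concave with positive initial slope in the second); your $\min/\max$ bookkeeping also makes the truncation at $x=1$ and the emissions claim more explicit than the paper's brief ``incorporating the boundary constraint.''
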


\cref{lem:2} implies a different pattern from the market-led case. Capability remains nondecreasing in $y$, but the best response is typically interior and varies continuously: when $\lambda<1+\alpha$, the marginal energy requirement grows faster than the marginal market payoff, so the developer becomes increasingly sensitive to the unit energy price at high $x$. Expanding renewables therefore mainly works by lowering the effective cost of capability, inducing smooth adjustments in $x$ rather than a jump to the frontier.

The emissions implications also reverse. For low $y$, the developer behaves as if facing the fossil cost $c_f$ and chooses $x^*(y)=\min\{f(c_f),1\}$, which is independent of $y$ and generally entails fossil use, a decoupling zone. Once $y$ reaches $k\,f(c_f)^{1+\alpha}$, renewables suffice to support the developer's preferred scale without fossil generation, so emissions fall to zero. Over the intermediate range, higher $y$ lowers the effective unit cost toward $c_r$ and can raise capability on a carbon-free basis; for sufficiently large $y$, the unit cost is pinned down by $c_r$, and $x^*(y)$ again becomes independent of $y$.

In sum, the response to clean capacity depends on whether scaling is market-led ($\lambda\ge 1+\alpha$) or resource-led ($\lambda<1+\alpha$). In the former, higher $y$ can unlock frontier capability with residual fossil use at the margin; in the latter, higher $y$ primarily substitutes renewables for fossil supply and supports interior capability choices. The next section shows how this distinction shapes the policymaker's renewable investment decision.

\section{Renewable Capacity in Equilibrium}
\label{sec:policymaker-decision}

Policy and industry discussions increasingly call for powering frontier AI with carbon-free electricity, often presuming that renewable expansion will naturally follow AI's rising load \citep{reuters2025}. However, whether this is true and whether it delivers decarbonization depend on strategic responses. In our framework, renewable capacity $y$ affects welfare through two margins: it raises the clean share $\beta(x,y)$ at a given capability level, and it shifts equilibrium capability through the developer's best response $x^*(y)$ (characterized in \cref{sec:developer-decision}). 
These two margins correspond to a \emph{composition} effect and a \emph{scale} effect: holding $x$ fixed, higher $y$ displaces fossil generation, but higher $y$ can also increase total energy demand by raising $x^*(y)$.
The policymaker chooses $y$ anticipating $x^*(y)$, and the relevant outcome is the induced equilibrium pair $(x^*(y^*),y^*)$ and its implied emissions and damages.
Accordingly, whether clean investment reduces emissions or instead accelerates them depends on which effect dominates in equilibrium and, ultimately, on the prevailing AI scaling regime. We analyze these effects in this section.

\subsection{Market-Led Scaling: The Adaptation Trap}
We begin with market-led scaling, $\lambda\ge 1+\alpha$, in which willingness-to-pay rises at least as fast as energy requirements as capability increases. 
As \cref{lem:1} illustrates, in this regime, %
clean capacity plays an enabling role %
and can move the developer's capability choice to the frontier even when full decarbonization is not achieved. We show that such a decoupling between frontier capability and full decarbonization can arise at the optimal level of renewable investment. \looseness=-1 

\begin{proposition}
\label{prop:1}
    In the market-led scaling scenario in which $\lambda\geq 1+\alpha$, at the equilibrium renewable capacity $y^*$,
    \begin{enumerate}
        \item[(i)] if either (a) $\theta\leq \max\{\frac{1+\alpha}{\lambda }\cdot c_fk, c_rk\}$, or (b) $V'(k)\leq \eta (c_f-c_r)+(1-b) e_f \xi$, then $E(x^*(y^*),y^*)=0$ holds $\forall d_0\geq 0$;
        \item[(ii)] otherwise, there exists a threshold $\bar{d}_m$ such that $E(x^*(y^*),y^*)>0$ if and only if $d_0\geq \bar{d}_m$. Furthermore, $y^*$ remains unchanged $\forall d_0\geq \bar{d}_m$.
    \end{enumerate}
\end{proposition}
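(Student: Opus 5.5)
The plan is to reduce the policymaker's problem to a comparison between the two regions of the developer's best response in \cref{lem:1}. On $[0,y_2)$, emissions are zero: either $x^*=0$, or $x^*=(y/k)^{1/(1+\alpha)}$ and renewables exactly cover demand. On $[y_2,k]$, we have $x^*=1$ and $\beta=y/k$, so emissions are positive unless $y=k$.

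In the decoupling zone, welfare is explicit:
\[
W(y)=\eta\bigl(\theta-c_ry-c_f(k-y)\bigr)-\xi(1-b)\bigl(d_0+e_f(k-y)\bigr)-V(y).
\]
This is concave in $y$, and its derivative is $\eta(c_f-c_r)+\xi(1-b)e_f-V'(y)$. The parameter $d_0$ enters only through the additive constant $-\xi(1-b)d_0$.

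\textbf{Part (i)(b).} The derivative above is nonincreasing in $y$ because $V$ is convex. If $V'(k)\le\eta(c_f-c_r)+(1-b)e_f\xi$, the derivative is nonnegative on the whole zone. The best point in $[y_2,k]$ is then $y=k$, which has zero emissions. The best point in $[0,y_2)$ also has zero emissions. Hence $E(x^*(y^*),y^*)=0$ for every $d_0$.

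\textbf{Part (i)(a).} Here I would show that the frontier with fossil power is never strictly preferred for $y<k$, so that effectively $y_2\ge k$ (or the decoupling zone collapses to $\{k\}$). Take the case $\theta\le\frac{1+\alpha}{\lambda}c_fk$ and set $g(x)=\theta x^\lambda-c_fkx^{1+\alpha}$. Then
\[
g'(x)=x^\alpha\bigl(\theta\lambda x^{\lambda-1-\alpha}-(1+\alpha)c_fk\bigr)\le 0
\]
on $[0,1]$, using $\lambda\ge1+\alpha$. So expanding capability beyond the clean-capacity level $(y/k)^{1/(1+\alpha)}$ at the fossil price lowers profit, and the developer never enters the fossil-using frontier when $y<k$. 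Now take the case $\theta\le c_rk$. Frontier profit is at most $\theta-c_rk\le0$, and a coupling choice weakly dominates it. I would check that the tie-breaking rule (the developer selects the larger $x$) does not create a knife-edge exception: it can only matter at $y=k$, where emissions are zero anyway.

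\textbf{Part (ii).} When both conditions fail, $y_2<k$ and the maximizer $\hat y$ of $W$ on $[y_2,k]$ satisfies $\hat y<k$. This holds because $V'(k)$ exceeds the constant marginal benefit, so $\hat y=\max\{y_2,\tilde y\}$, where $\tilde y$ solves the first-order condition. Since $d_0$ is only an additive constant in this zone, $\hat y$ does not depend on $d_0$. This yields the claim that $y^*$ is constant for $d_0\ge\bar d_m$.

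It remains to compare $W(\hat y)$ with $S(d_0)=\sup_{y<y_2}W(y)$. In the clean region, each $y$ gives welfare that is affine in $d_0$ with slope $-\xi(1-bx^*(y))$. Because the jump at $y_2$ is discontinuous, $x^*(y)\le\bar x<1$ there. Therefore $W(\hat y)-S(d_0)$ is continuous and strictly increasing in $d_0$, with slope at least $\xi b(1-\bar x)>0$. It follows that there is a unique crossing point $\bar d_m$, possibly $0$ if the difference is already nonnegative at $d_0=0$. Ties are resolved toward the frontier.

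The main obstacle is the clean-region supremum over the half-open interval $[0,y_2)$. I must show that capability stays bounded away from $1$ there and that the supremum is either attained or harmless. I would handle this by using the jump characterization of $y_2$ from the proof of \cref{lem:1}. I would also need to confirm that, once $(a)$ fails, $y_2<k$ genuinely holds.
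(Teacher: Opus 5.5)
Modulo the remarks below, this is essentially the paper's proof. The paper also settles (i)(b) from the sign of $\eta(c_f-c_r)+\xi(1-b)e_f-V'(y)$ on the frontier branch. It reduces (i)(a) to $y_2\ge k$ by citing situations (A) and (C-ii) in the proof of \cref{lem:1}; you instead argue directly from the monotonicity of $\theta x^\lambda-c_fkx^{1+\alpha}$ on $[0,1]$. For (ii), the paper uses two facts: the maximizer ($y_2'$ or $y_2$) of the concave frontier piece does not depend on $d_0$, and $W(y)-W_2(y_2')$ has $d_0$-slopes $-b\xi$ and $-b\xi(1-(y/k)^{1/(1+\alpha)})$ for $y<y_2$. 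It then defines $\bar d_m$ as a supremum of pointwise thresholds. Your uniform slope bound $\xi b(1-\bar x)$ packages that step more cleanly, since it gives continuity and strict monotonicity of the gap function at once.

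Three of your side remarks need correcting.

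\emph{The case $\theta\le c_rk$.} What strictly beats the frontier for $y<k$ is $x=0$, because frontier profit equals $\theta-c_rk-(c_f-c_r)(k-y)<0$. The coupling point itself can do worse than the frontier when $\theta>\frac{1+\alpha}{\lambda}c_fk$. That is possible here whenever $c_r/c_f>(1+\alpha)/\lambda$.

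\emph{The tie-breaking check.} It fails at $\lambda=1+\alpha$, $\theta=c_fk$. There $\theta x^\lambda-c_fkx^{1+\alpha}\equiv 0$, so for every $y<k$ the developer is indifferent along the entire fossil branch. The larger-$x$ rule then gives $x^*(y)=1$ for all $y$; this is the case $y_1=y_2=0$ in the proof of \cref{lem:1}. If (b) fails, then $y^*<k$ and $E>0$ for every $d_0$, even though (a) holds with equality. So (a) needs a strict inequality in that case. The paper's proof has the same blind spot, because situation (A) is derived only for $\lambda>1+\alpha$.

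\emph{The half-open interval $[0,y_2)$.} Your worry is legitimate, and the paper ignores it, but it cannot be settled by proving attainment. At $d_0=0$, welfare drops by $\xi(1-b)e_f(k-y_2)>0$ at $y_2$, because developer profit is continuous there. Suppose the coupling zone is nonempty, $V'$ is small on $[0,y_2]$, and $V'$ rises steeply just beyond $y_2$. Then clean-region welfare is increasing on the coupling zone, its supremum is approached only as $y\uparrow y_2$ and exceeds $\max_{[y_2,k]}W$, and $W$ has no maximizer. The right conclusion is the dichotomy for $y^*$ whenever it exists, which is exactly what your monotone gap function delivers.
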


Part (i) describes two cases in which the equilibrium is carbon-free even under market-led scaling. Under (a), frontier capability is not sufficiently valuable, relative to energy costs, to justify the discrete jump to $x=1$ in \cref{lem:1}. The developer instead chooses an interior capability level that can be served by renewables, or remains inactive, so $E(x^*(y^*),y^*)=0$. Under (b), full decarbonization is socially efficient: the marginal cost of expanding clean capacity to $k$ is below its marginal benefit, which combines the private cost saving $\eta(c_f-c_r)$ with avoided climate damages $(1-b)e_f\xi$. The policymaker therefore sets $y^*=k$, delivering frontier capability on carbon-free power.

Part (ii) characterizes the remaining region, in which frontier incentives are strong and the last units of clean capacity are expensive. Here climate deterioration can tilt the equilibrium toward carbon-intensive AI. Because damages take the form $(1-bx)(d_0+E(x,y))$, a higher baseline emissions stock $d_0$ increases the marginal value of capability through adaptation: holding emissions fixed, a higher $x$ reduces the damage burden more. In the market-led regime, however, once $y$ is high enough to make scaling feasible, the developer's choice is largely pinned down by market incentives, not by marginal energy costs. Once clean capacity suffices to support frontier scaling, further renewable investment no longer raises capability; the remaining gain, displacing fossil electricity, may not justify the last units of capacity. The threshold $\bar d_m$ formalizes the trade-off: beyond it, the welfare gain from inducing frontier capability (the scale effect) is already maximized, whereas the increase in investment cost outweighs the welfare loss from residual marginal emissions (the composition effect). Consequently, renewable investment stops short of the decarbonization level $y=k$, at or moderately above the level that triggers $x^*(y)=1$.

This logic delivers an \emph{adaptation trap}. Partial clean investment is attractive because it enables frontier capability and hence adaptation benefits. But when frontier scaling is powered at the margin by fossil electricity, equilibrium emissions remain positive, and the worse the baseline, the greater the value of adaptation and hence the stronger the incentive to keep enabling frontier scaling while tolerating residual fossil use. To summarize, carbon intensity persists under market-led scaling because adaptation benefits from frontier capability (the scale effect) outweigh the benefits of full decarbonization (the composition effect).

\subsection{Resource-Led Scaling: The Adaptation Pathway}
We next consider resource-led scaling, $\lambda<1+\alpha$, in which energy requirements rise faster with capability than willingness-to-pay. In this regime, the developer's capability choice becomes increasingly sensitive to the unit cost of energy. Clean-energy availability therefore has greater influence over AI scaling: by lowering the marginal cost of electricity serving AI load, it changes the private return to scaling and can shift $x^*(y)$, as characterized in \cref{lem:2}. This stronger coupling between energy costs and capability choice yields a qualitatively different outcome from that under market-led scaling. \looseness=-1

\begin{proposition}
\label{prop:2}
    In the resource-led scaling scenario in which $\lambda<1+\alpha$, at the equilibrium renewable capacity $y^*$,
    \begin{enumerate}
    \item[(i)] if $\theta\geq \frac{1+\alpha}{\lambda}\cdot c_fk$ and $V'(k)> \eta(c_f-c_r)+(1-b) e_f \xi$, then $E(x^*(y^*),y^*)>0$ holds $\forall d_0\geq 0$;
    \item[(ii)] otherwise, there exists a threshold $\bar{d}_c$ such that $E(x^*(y^*),y^*)=0$ if and only if $d_0\geq \bar{d}_c$.
    \end{enumerate}
\end{proposition}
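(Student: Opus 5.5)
We will work directly from the best response in \cref{lem:2}. Write $x_f\doteq\min\{f(c_f),1\}$ and $y_a\doteq k\,f(c_f)^{1+\alpha}$. By \cref{lem:2}, for $y<y_a$ the developer's capability is fixed at $x_f$ and emissions are strictly positive, $E=e_f(kx_f^{1+\alpha}-y)>0$. For $y\ge y_a$ emissions are zero and $x^*(y)\ge x_f$ is continuous and nondecreasing. The first condition in (i), $\theta\ge\frac{1+\alpha}{\lambda}c_fk$, is equivalent to $f(c_f)\ge 1$, i.e., $x_f=1$ and $y_a\ge k$. The proof therefore splits on whether $f(c_f)\ge 1$.

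\emph{Case $f(c_f)\ge 1$.} Here $x^*(y)\equiv 1$ on $[0,k]$, and $\beta=y/k$. Welfare then reduces to
\[
W(y)=\eta\bigl(\theta-c_fk+(c_f-c_r)y\bigr)-\xi(1-b)\bigl(d_0+e_f(k-y)\bigr)-V(y).
\]
This function is concave in $y$, and $d_0$ enters only as an additive constant, so $y^*$ does not depend on $d_0$. Its derivative is $\eta(c_f-c_r)+(1-b)e_f\xi-V'(y)$, which is decreasing in $y$. If $V'(k)>\eta(c_f-c_r)+(1-b)e_f\xi$, then $y^*<k$ and $E>0$ for every $d_0$, which proves (i). Otherwise $y^*=k$ and $E=0$ for all $d_0$, so (ii) holds with $\bar d_c=0$.

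\emph{Case $f(c_f)<1$.} Now $y_a<k$. Let $W_L(d_0)\doteq\max_{y\in[0,y_a]}W(y)$ and $W_H(d_0)\doteq\max_{y\in[y_a,k]}W(y)$. Both maxima are attained: on $[0,y_a]$ the capability $x$ is constant and $E$ is continuous, and $E\to 0$ as $y\uparrow y_a$, so $W$ is continuous at $y_a$. The carbon-free outcome is chosen exactly when $\Delta(d_0)\doteq W_H(d_0)-W_L(d_0)\ge 0$; if the optimum lies strictly below $y_a$, then $E>0$.

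The key structural observation concerns how $d_0$ enters each branch. On $[0,y_a]$, $d_0$ enters every candidate only through the term $-\xi(1-bx_f)d_0$, so $W_L$ is affine in $d_0$ with slope $-\xi(1-bx_f)$. On $[y_a,k]$, the candidate with capacity $y$ has $d_0$-slope $-\xi(1-bx^*(y))\ge-\xi(1-bx_f)$. Hence $W_H$ is a maximum of affine functions and is convex, and $\Delta$ is nondecreasing in $d_0$.

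\emph{Main obstacle.} I expect the hard step to be proving that $\Delta$ eventually becomes positive, with a clean single crossing. The plan is to use that $f(c_r)>f(c_f)$ because $c_r<c_f$. So at any $\tilde y\in(y_a,\min\{kf(c_r)^{1+\alpha},k\}]$ we have $x^*(\tilde y)>x_f$, and $\Delta(d_0)\ge W(\tilde y)-W_L(d_0)$ grows at rate at least $\xi b\,(x^*(\tilde y)-x_f)>0$. Therefore $\Delta\to\infty$ as $d_0\to\infty$. Monotonicity of $\Delta$ then gives a threshold $\bar d_c\doteq\inf\{d_0\ge 0:\Delta(d_0)\ge 0\}$, with $\bar d_c=0$ if $\Delta(0)\ge 0$. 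It remains to handle ties at $\Delta=0$ with the larger-$y$ selection, which puts the boundary point $d_0=\bar d_c$ on the carbon-free side, so $E=0$ if and only if $d_0\ge\bar d_c$.

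Combining the two cases, (i) is exactly the situation with $f(c_f)\ge1$ and $V'(k)$ large, and (ii) covers everything else.
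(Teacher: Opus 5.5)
Your proof is correct and follows the paper's own argument. The paper also splits on whether $f(c_f)\ge 1$ and applies the derivative test at $y=k$ to the concave frontier welfare; when $f(c_f)<1$, it compares the fossil piece with the carbon-free pieces, using that $d_0$ enters with slope $-\xi(1-bx)$, so the carbon-free branch gains at rate $b\xi\,(x^*(y)-f(c_f))$. Your framing via $\Delta(d_0)=W_H-W_L$, with the explicit divergence step from $f(c_r)>f(c_f)$, is slightly tidier than the paper's pointwise thresholds $\bar d'(y)$, whose slope actually vanishes at $y=y_a$.
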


Part (i) identifies the corner case in which carbon-intensive AI persists: market incentives to scale remain strong, whereas the marginal cost of fully covering AI load with renewables is high relative to its benefits. In this region, the policymaker rationally stops short of full decarbonization, and some of the electricity used by AI continues to come from fossil generation.

Outside this region, climate deterioration can push the equilibrium toward carbon-free AI. The first step parallels the market-led case: a higher baseline emissions stock $d_0$ raises the social value of adaptation, so the policymaker induces higher capability. The second step is where resource-led scaling changes the conclusion. Because the developer is energy-cost sensitive, higher capability %
requires a meaningful reduction in the effective marginal cost of scaling. 
This potentially aligns the welfare gains from adaptation (through the scale effect) and mitigation (through the composition effect), because investing in renewables accomplishes both: it substitutes away from emissions-intensive supply and lowers the energy cost faced by the developer over the relevant range. %
As a result, when $d_0$ is large enough, the policymaker finds it optimal to invest sufficiently in clean capacity so that the induced increase in $x^*(y)$ occurs on carbon-free marginal electricity, yielding $E(x^*(y^*),y^*)=0$. The threshold $\bar d_c$ captures when adaptation value is high enough to justify this shift.

This mechanism yields an \emph{adaptation pathway}. Under resource-led scaling, worsening climate conditions increase the value of capability, and that increase translates into stronger incentives to expand clean capacity, because clean capacity is precisely what makes additional capability privately viable. In this regime, the feedback runs in a decarbonizing direction: climate stress strengthens clean investment, and clean investment supports capability growth on carbon-free power.

The adaptation pathway mechanism can be undermined by intermittency. Notably, an increase in nameplate renewable capacity could increase emissions under high output variability when scaling is not resource-led enough (i.e., when $1+\alpha$ is not sufficiently higher than $\lambda$). Consequently, crossing the threshold $\bar d_c$ may shift the equilibrium to a more carbon-intensive region, thereby triggering an adaptation trap. We refer readers to \cref{subsec:utilization} for details. \looseness=-1

\subsection{Policy Design Implications}
The two regimes yield policy lessons that are difficult to see in analyses that hold either AI scaling or clean-capacity expansion fixed. In particular, ``net-zero AI'' is an equilibrium outcome of AI--energy interactions. The same intervention, expanding renewables, can reduce emissions in one regime and accelerate fossil-powered scaling in another. Which regime prevails depends on the economics of AI scaling and market demand; the central policy problem is to ensure that the composition effect dominates the scale effect at the margin.

Accordingly, ``more renewables'' is not a sufficient prescription when scaling is market-led. Partial clean investment may mainly relax a binding compute constraint, accelerating frontier scaling without ensuring that marginal electricity is carbon-free. Avoiding the adaptation trap therefore requires either (1) reducing the cost of the last-mile renewable expansion to $y=k$ (through permitting, interconnection, transmission, storage, and firming) so that condition (b) in \cref{prop:1}(i) is met, or (2) directly raising the private marginal cost of fossil-powered compute (for example, through carbon pricing or clean-energy matching requirements) so that condition (a) in \cref{prop:1}(i) is met. In contrast, when scaling is resource-led, clean capacity is more likely to steer AI toward decarbonization because it directly constrains the developer's capability choice. 
Intermittency, nevertheless, tilts the equilibrium toward carbon and makes the adaptation trap more prevalent, and it elevates operational instruments that firm renewable delivery, including workload flexibility, hourly matching, and storage, into climate-policy levers.

More generally, whether an emphasis on climate adaptation weakens or strengthens mitigation depends on the scaling regime. In broader climate-policy debates, the relationship between adaptation and mitigation is contested, with evidence and theory supporting both substitution and complementarity, depending on institutions and mechanisms \citep{remshard2022}. Our results provide a concrete mechanism that may explain this ambiguity in the AI--energy nexus. In a market-led regime, adaptation motives can rationally support an equilibrium with residual fossil marginal power, weakening mitigation; in a resource-led regime, adaptation motives can instead strengthen the case for clean capacity, making mitigation and adaptation complements. We also note that both mechanisms hinge on one fundamental property of adaptation: its benefits, and therefore the social value of capability, can increase as climate conditions worsen. Our analysis shows that taking into account such interactions can help policymakers understand the implications of adaptation delivered by technologies with different scaling properties.

\section{Calibration and Counterfactuals}
\label{sec:case-study}

We illustrate the model's quantitative implications using four calibrated numerical instances (Instances A--D). Each instance parameterizes the model using observed magnitudes for AI investment, compute intensity, and power-system costs. Instances A--C are informed by conditions in the United States, China, and the European Union over 2023--2025, respectively. Instance D is a stylized counterfactual designed to complement the first three instances and isolate the mechanisms underlying the results.

Instances A and B satisfy the market-led condition $\lambda>1+\alpha$ but differ in market potential (the level of $\theta$), capturing market-led growth with strong versus limited market potential. Instance C satisfies $\lambda<1+\alpha$ with market potential comparable to that of Instance B, capturing resource-led growth with limited market potential. Instance D mirrors Instance C except that $\theta$ takes its Instance A value, producing a resource-led scaling environment with strong market potential. \cref{tab:parameters} summarizes the parameter values for each instance. We briefly describe data sources and calibration below; readers are referred to \cref{ec:calibration} for details. \looseness=-1

\begin{table}[t]
\centering
\small
\renewcommand{\arraystretch}{1.2}
\begin{threeparttable}
\caption{Nominal Parameters Used in the Case Study}
\label{tab:parameters}
\begin{tabular}{@{}l*{4}{c}@{}}
\toprule
Parameter & Instance A & Instance B & Instance C & Instance D \\
\midrule
\\[-8pt]
\multicolumn{5}{@{}l}{\textit{Instance-Specific Parameters}} \\[4pt]
\quad $\theta$ (Billion USD) & 109.08 &   33.65 & 19.42 & 109.08 \\
\quad $\lambda$ & 3.46 & 3.19 & 1.98 & 1.98 \\
\quad $k$ (TWh) & 325.6 &	275.32 &	311.32
 & 311.32 \\
\quad $c_r$ (Billion USD/TWh) & 0.05 & 0.048 & 0.065 & 0.065 \\
\quad $c_f$ (Billion USD/TWh) & 0.088 & 0.151 & 0.146 & 0.146 \\
\quad $V(y)$ (Billion USD) & $0.049y^{1.34}$ & $0.029y^{1.34}$ & $0.036y^{1.34}$ & $0.036y^{1.34}$ \\
\quad $e_f$ (MMTCO$_2$e/TWh) & 0.367 & 0.614 & 0.187 & 0.187 \\
\quad $\xi$ (Billion USD/MMTCO$_2$e)  & 0.04 & 0.0135 & 0.095 & 0.095 \\[10pt]
\multicolumn{5}{@{}l}{\textit{Common Parameters}} \\[4pt]
\quad $\alpha$ & \multicolumn{4}{c}{1.467} \\
\quad $\eta$ & \multicolumn{4}{c}{1.88} \\
\quad $b$ & \multicolumn{4}{c}{0.15} \\
\bottomrule
\end{tabular}
\begin{tablenotes}
\small
\item \textit{Notes:} Instances A and B represent market-led scaling scenarios; Instances C and D represent resource-led scaling scenarios. Instances A and D feature strong market potential; Instances B and C feature limited market potential.
\end{tablenotes}
\end{threeparttable}
\end{table}

\subsection{Calibration}
\label{subsec:constr-case-study}

We start by calibrating the willingness-to-pay function $\theta x^\lambda$. We anchor the market-potential parameter $\theta$ to private investment magnitudes reported in  \citet{maslej2025aiindex}, adjusted to reflect other prominent sources of R\&D in AI. 
To estimate the willingness-to-pay elasticity $\lambda$, we relate frontier-model capability, measured by MMMU (Massive Multi-discipline Multimodal Understanding) scores \citep{yue2024mmmu}, to AI investment across regions over 2023--2025. This yields $\lambda$ values of $3.46$, $3.19$, and $1.98$ for the three regions considered, indicating meaningful heterogeneity in how markets reward marginal capability improvements. We then estimate the economic-spillover parameter $\eta$ by taking the ratio of AI's overall economic value potential \citep{hatzius2023potentially} 
to the profit potential of the AI sector. %

We next calibrate the scaling term $x^{1+\alpha}$ and the energy-demand scale $k$. As a simplification, we assume the major markets share a common scaling exponent $\alpha$ but allow for regional heterogeneity in the level of energy intensity through $k$. We estimate $\alpha$ by regressing regional data-center electricity usage \citep{iea2025} on frontier MMMU capability levels \citep{yue2024mmmu}, obtaining $\alpha=1.467$. We then calculate $k$ in each instance based on the corresponding regional energy-use magnitude. %

Finally, we calibrate the energy and climate primitives. We treat PPA-backed renewable access as the relevant clean procurement margin for AI developers and use data reported by  \citet{iea2025} to set $c_r$ to the cost of renewable PPAs with 80\% hourly matching; we also use the same source's reported grid-price ranges to construct a mean--variance cost for $c_f$, our measure of the (risky) outside option of relying on general grid electricity. Renewable-investment costs take the form $V(y)=g\,y^{1.34}$: the exponent is guided by construction-cost patterns across large Chinese solar projects (Midong, Mengxi Lanhai, Tianwan offshore, and Ordos), whereas $g$ is pinned down by regional average construction costs (including plant, transmission, and land) adjusted for project lifespans. We estimate the climate-damage weight $\xi$ based on the regional carbon prices and set the adaptation parameter $b=0.15$ based on evidence that AI-enabled hazard mitigation can reduce projected average losses by up to 15\% \citep{deloitte2025_ai_infrastructure_report}. Lastly, $e_f$ is set to the regional grid emissions intensity. \looseness=-1

The estimates above should be read with their epistemic status in view: the case study is a calibrated illustration rather than measurement. Parameters describing energy demand, energy costs, emissions intensity, and investment costs are comparatively well measured, but the willingness-to-pay elasticity $\lambda$ rests on three annual observations from a rapidly evolving market per region, and the associated confidence intervals are wide enough that no regional regime classification is statistically firm; alternative willingness-to-pay estimators can reverse the China and EU classifications, although the U.S. classification is robust across estimators (see \cref{subsec:calib-market}). Accordingly, the four instances are best read as scenarios that span the two scaling regimes while remaining anchored to each region's observable magnitudes. Because the design covers both regimes, reclassifying a region relocates it to an instance that is already analyzed; none of the qualitative conclusions rests on any single region's label. \Cref{subsec:calib-market} reports the uncertainty quantification, an estimator comparison, and a measurement agenda.

\subsection{Equilibrium Outcomes}

We solve for equilibrium renewable investment and AI capability choices for the four instances in \cref{subsec:constr-case-study}. \cref{fig:case study} summarizes the outcomes, plotting equilibrium AI energy demand and renewable PPA capacity. We also perform an extensive sensitivity analysis to examine the impact of parametric changes. The key observations are as follows.

\begin{figure}
    \centering
    \begin{subfigure}[b]{0.39\textwidth}
        \centering
        \includegraphics[width=\linewidth]{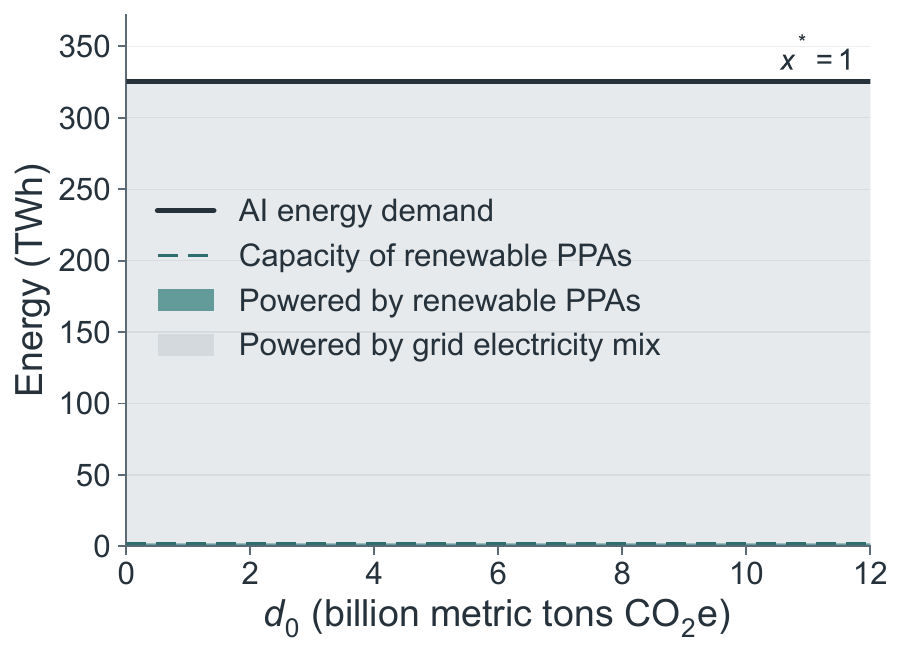}
        \caption{Market-Led Scaling with Strong Market Potential (Instance A)}
    \end{subfigure}
    \hfill
    \begin{subfigure}[b]{0.39\textwidth}
        \centering
        \includegraphics[width=\linewidth]{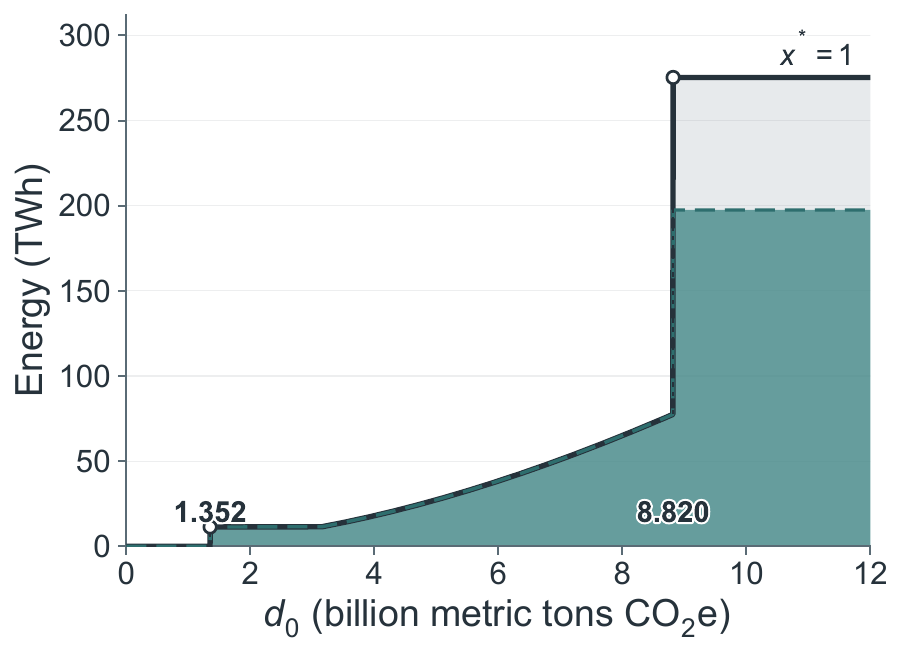}
        \caption{Market-Led Scaling with Limited Market Potential (Instance B)}
    \end{subfigure}

    \vspace{0.5cm}
    \begin{subfigure}[b]{0.39\textwidth}
        \centering
        \includegraphics[width=\linewidth]{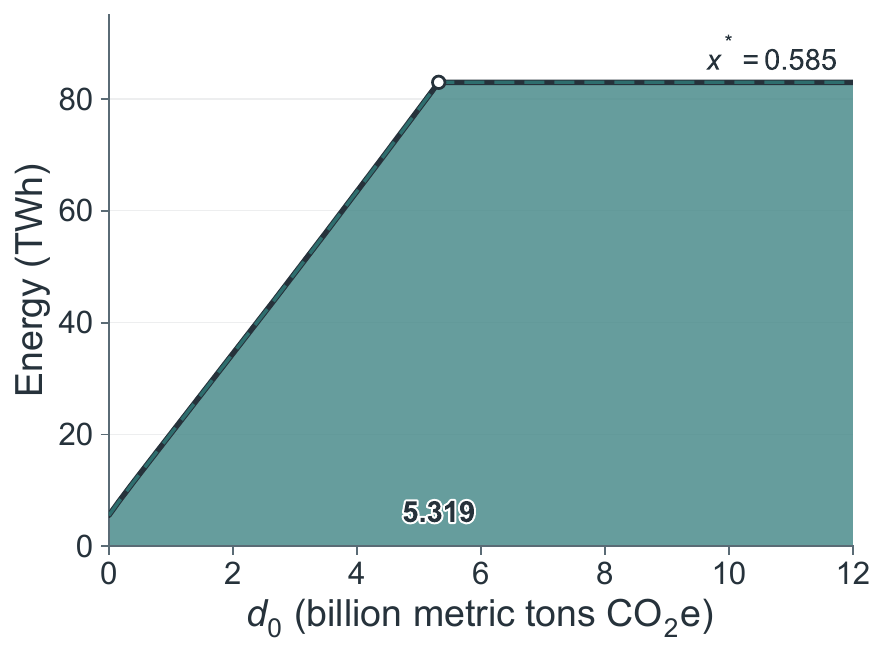}
        \caption{Resource-Led Scaling with Limited Market Potential (Instance C)}
    \end{subfigure}
    \hfill
    \begin{subfigure}[b]{0.39\textwidth}
        \centering
        \includegraphics[width=\linewidth]{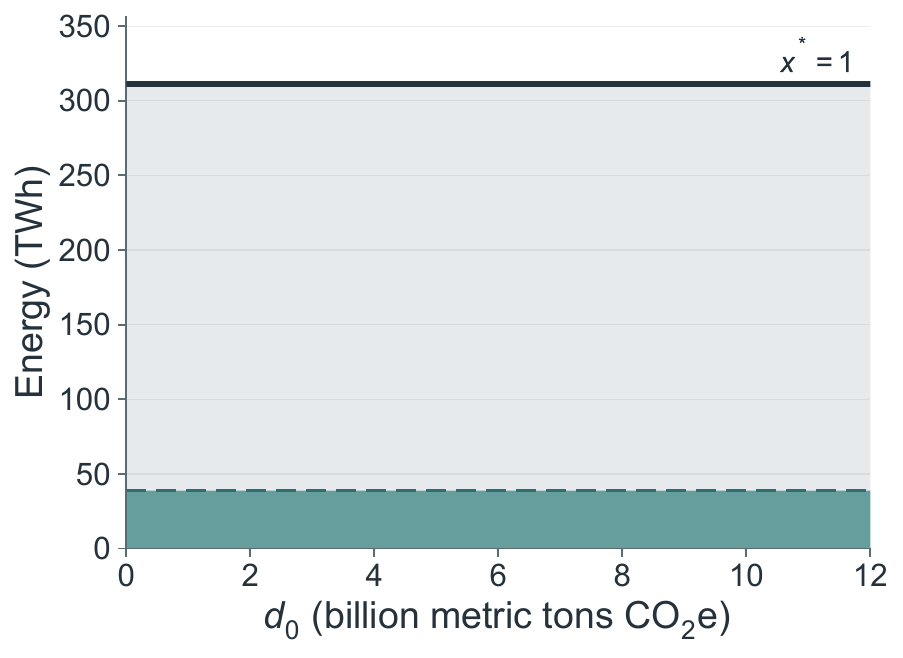}
        \caption{Resource-Led Scaling with Strong Market Potential (Instance D)}
    \end{subfigure}
    \caption{AI Energy Demand and Renewable Energy Investment Outcomes in the Case Study. Panels plot AI energy demand and renewable PPA capacity against the baseline emissions stock $d_0$ (billion metric tons CO$_2$e; vertical scales differ). $x^*$ values are annotated at the right of each panel and apply to the region beyond the last marked threshold; circled markers with adjacent labels give regime-transition thresholds. Renewable PPAs meet $0.6\%$ of demand in Instance A and $12.47\%$ in Instance D.}
    \label{fig:case study}
\end{figure}

\subsubsection*{Instance A}
First, market-led scaling with strong market potential places the equilibrium in a decoupling zone, shown in Panel (a) of \cref{fig:case study}, in which renewable investment remains far below energy demand. In this case, the adaptation trap becomes intrinsic to AI development: it arises even without external emissions (i.e., when $d_0=0$). This is because, with a high $\theta$, the developer is sufficiently motivated to adopt $x^*=1$ even when most of the energy demand is met by the grid. This leads to substantial emissions from AI, and the adaptation benefits of frontier capability in addressing those emissions sustain the carbon-intensive frontier equilibrium. 

Sensitivity analyses indicate that restoring coupling requires substantial changes in the operating environment so that renewable capacity materially affects the developer's choice of AI scale. One route is to increase the cost of fossil energy, $c_f$, by 283.6\% (under which the coupling zone in \cref{lem:1} exists). Alternatively, achieving a clean share of at least 50\% in the decoupling zone, where the clean share is the fraction of AI energy demand met by renewable PPAs, requires reducing the clean-capacity cost scale $g$, by at least 77.4\%. Improvements in energy efficiency, represented by a reduction in $k$, can lower the required reduction in $g$. %

\subsubsection*{Instance B}
Second, market-led scaling with limited market potential leads to coupling when $d_0$ is low. Renewable capacity can therefore expand before the equilibrium transitions to the carbon-intensive frontier, leaving a comparatively large renewable base even after decoupling occurs. This explains why frontier AI in Panel (b) of \cref{fig:case study} is supported by a clean share of approximately 71.7\%, close to the 80\% policy goal discussed for China \citep{tnw2026china}. Nevertheless, the transition into the adaptation trap occurs at a relatively low level of external climate pressure. The threshold $d_0=8.820$ billion metric tons of CO$_2$e corresponds to an increase of approximately 1.13 ppm in atmospheric carbon concentration above the climate benchmark \citep{noaa_gml_ppm}. At recent rates of atmospheric carbon accumulation, this increase is equivalent to roughly six months of additional concentration growth \citep{lindsey2025co2}. 
As another illustration, the intensity of extreme events has risen markedly since 2020 \citep{guardian2025weather}. Benchmarking the 2025 annual mean atmospheric CO$_2$ (425.6 ppm) against the 2020 annual mean (412.4 ppm) implies a $d_0$ of approximately 103 billion metric tons of CO$_2$e, far exceeding the threshold in Panel (b) \citep{noaa2026trends}.
(The panels' common horizontal range focuses on the transition region; at benchmark values of this magnitude, every instance sits far beyond its threshold, where outcomes no longer vary with $d_0$.) %

Sensitivity analysis shows that a further increase in market willingness-to-pay in Instance B likely accelerates the transition to frontier AI and the trap mechanism, and sharply reduces the clean share once in that region: increasing $\theta$ by only about 5\% reduces the clean share to 45.4\%. One effective countermeasure is increasing the economic-spillover parameter $\eta$ (e.g., general productivity improvements from AI deployment via better human-AI collaboration): a 30\% increase in the spillover component $\eta-1$ restores the clean share to 66\% even when $\theta$ is much larger, and a 70\% increase in $\eta-1$ helps achieve frontier AI powered by 100\% renewable energy. Increasing the fossil-energy price can produce similar effects. 

\subsubsection*{Instances C and D}
Third, under resource-led scaling, AI expansion remains closely tied to renewable availability because developers are more sensitive to energy costs. With limited market potential, the equilibrium remains coupled even at $d_0=0$, as shown in Panel (c) of \cref{fig:case study}, yielding an adaptation pathway that operates across all levels of climate stress. This coupling, however, limits AI development: under the baseline calibration, capability plateaus at $x^*=0.585$. %

A moderate increase in market willingness-to-pay can relax this constraint without immediately breaking the coupling between AI and renewable capacity. For example, when $\theta$ in Instance C is increased to the level used in Instance B, the equilibrium remains coupled and reaches $x^*=1$ once $d_0$ is sufficiently high (which can be interpreted as a sufficiently stringent climate policy that adopts a low carbon-concentration benchmark when evaluating $d_0$). Further increases in $\theta$, however, eventually create a decoupling zone. At intermediate levels of market potential, the pathway mechanism can restore coupling, but only when $d_0$ is sufficiently high. When $\theta$ reaches the level used in Instance A, the pathway mechanism disappears altogether, and the equilibrium remains in the decoupling zone with a low clean share, as shown in Panel (d). Thus, stimulating AI market demand can help overcome the capability constraint under resource-led scaling, but excessive expansion can break the link between AI growth and renewable development. Improving energy efficiency offers a less destabilizing alternative: in the baseline Instance C, reducing $k$ by approximately 23\% is sufficient to achieve $x^*=1$ while preserving coupling.

Finally, Instance C is a moderately resource-led scaling scenario in which $\frac{1+\alpha}{\lambda}<\frac{c_f}{c_r}$ holds and is therefore particularly vulnerable to the impact of renewable intermittency. That is, coupling could lead to higher emissions, and the adaptation pathway may reverse into a trap mechanism; in Instance C, this reversal arises when the availability probability falls below roughly $0.98$ (which is the $\bar{p}$ threshold in \cref{lem:intermittency}; see \cref{subsec:utilization}), so firming is essential, rather than merely helpful, for preserving the pathway.

\subsection{Discussion}
\label{subsec:discussion}

Taken together, the case study shows that the adaptation-trap and adaptation-pathway mechanisms are practically relevant but could manifest in different ways depending on the scaling environment. Notably, high market potential effectively entrenches the adaptation trap and suppresses the adaptation pathway, thereby leading, under both market-led and resource-led scaling, to decoupling in which the energy demand from AI development substantially outpaces renewable investment. A moderate market potential, on the other hand, can lead to different outcomes between the two scaling regimes: in the ranges represented by Instances B and C, market-led scaling tends to sustain frontier capability but fall
short of net zero under the adaptation trap, whereas resource-led scaling tends to deliver cleaner energy at the cost of
frontier scale under the adaptation pathway. This illustrates a practical tension between AI development and the
clean-energy transition.

Although the ``power-couple'' outcome, that is, frontier capability supplied fully by renewable electricity ($x^*(y^*)=1$ with $y^*=k$), does not arise in the baseline calibrations, the sensitivity analyses identify several mechanisms that can move the equilibrium in that direction. Importantly, the binding constraint differs across the two scaling regimes. Under market-led scaling, as in Instances A and B, renewable development is the primary bottleneck: AI's capability expands readily, but clean capacity lacks sufficient influence over scaling. Effective interventions must therefore strengthen the link between additional AI scale and additional clean power, for example, by lowering renewable-investment costs, increasing the price of fossil energy, strengthening climate incentives, improving energy efficiency, or increasing the general economic value generated by AI development. By contrast, under resource-led scaling, as in Instance C, AI capability is the primary bottleneck. The relevant interventions should therefore strengthen incentives for frontier capability without undermining coupling, such as moderate increases in market potential and AI energy-efficiency improvements. \Cref{ec:addresults} provides an analytical characterization of this contrast: in all scenarios, the clean-capacity cost scale $g$, must lie below a threshold for a power-couple outcome to materialize. That threshold, and with it the effectiveness of different interventions, depends on whether frontier capability is achieved in the decoupling zone, where renewable development may lag behind (Instances A and B), or in the coupling zone, where AI development may be constrained (Instance C).

Despite these differences, the results point to a common conclusion: net-zero frontier AI is unlikely to result from any single intervention. Achieving the AI--renewable power couple requires a coordinated portfolio: cheaper and easier-to-integrate clean capacity, policy that sufficiently internalizes climate damages and broader externalities of fossil energy, and management of AI market expansion and AI-driven electricity demand in a manner consistent with the prevailing scaling regime. Even then, feasibility is limited in strongly market-led environments. In particular, rapid growth in investment can weaken coupling, shifting the equilibrium away from net-zero frontier AI.

\section{Extensions}
\label{sec:extensions}
We relax the baseline assumptions along four dimensions. Two concern the AI side of the market and are developed here: capability may lower the cost of integrating renewables (\cref{subsec:reductions-renewable-costs}), and development may be competitive rather than monopolistic (\cref{subsec:competition}). Two concern how clean capacity is financed and physically delivered, and are developed in the electronic companion: \cref{subsec:decentralized} replaces the policymaker's direct capacity choice with a financial incentive to which a private utility responds, and \cref{subsec:utilization} allows renewable output to be intermittent rather than dependable and analyzes the firming that counters intermittency.\footnote{Fossil-side policy requires no separate extension because it operates through $c_f$: fossil subsidies lower $c_f$ and carbon pricing raises it (a full welfare accounting would also net out carbon-tax revenue as a transfer, which our welfare function does not model), shifting the regime thresholds (for example, the condition in \cref{prop:1}(i) and the capability target $f(c_f)$ in \cref{lem:2}) without altering the two-regime structure. The counterfactual discussion in \cref{subsec:discussion} considers this lever. }  All four clarify when the equilibrium tilts toward the adaptation trap versus the adaptation pathway. Hence, we focus on the market-led scaling scenario in this section.

\subsection{AI-Enabled Reductions in Clean-Capacity Costs}
\label{subsec:reductions-renewable-costs}

The main model assumes renewable-investment costs $V(y)$ are independent of AI capability. In practice, higher capability can lower the cost of deploying and integrating renewables, in particular by improving forecasting, grid control, and storage operations, thereby easing integration frictions \citep{Choi2024}. We capture this by allowing
\[
V=V(x,y), \qquad \frac{\partial V(x,y)}{\partial x}<0.
\]
One such specification is
\[
V(x,y)=\phi(x)\,V(y),
\]
where $\phi'(x)<0$ and $\phi(x)<1$ for all $x\in(0,1]$, and $V(y)$ is the baseline cost function used in the main model.

This extension introduces a feedback loop: higher $x$ lowers the marginal cost of $y$, strengthening incentives to invest in renewables, relaxing the energy constraint, and potentially supporting higher $x$. As a result, full decarbonization is easier to sustain; in particular, the renewable-cost threshold for net-zero frontier AI is relaxed by a factor of $1/\phi(1)$ (see \cref{subsec:td-costreduction}).

\begin{figure}
    \centering
\begin{subfigure}[b]{0.32\textwidth}
        \centering
        \includegraphics[width=\linewidth]{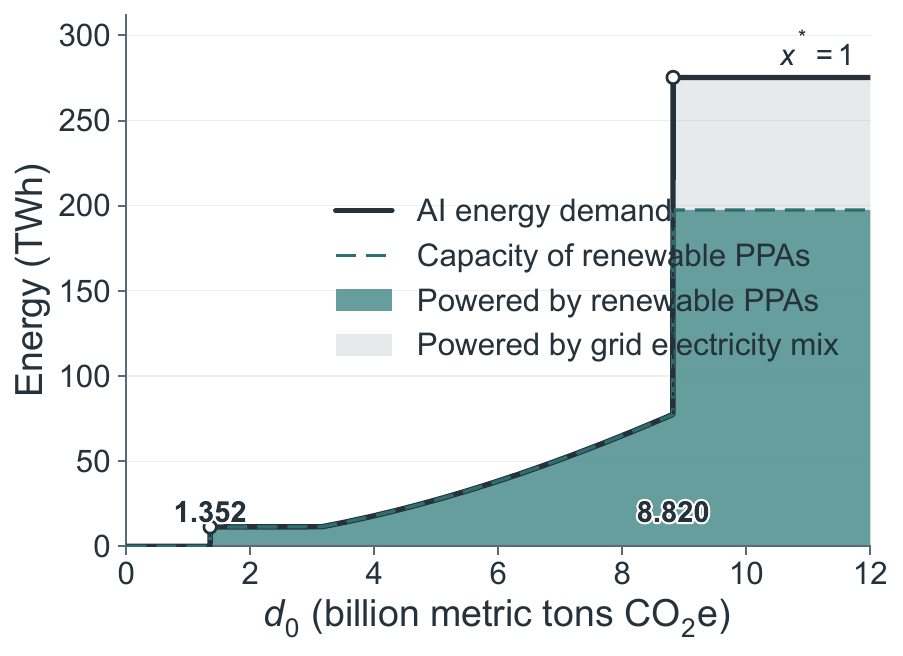}
        \caption{$\phi(x)=1$ (the main model)}
    \end{subfigure}
\begin{subfigure}[b]{0.32\textwidth}
        \centering
        \includegraphics[width=\linewidth]{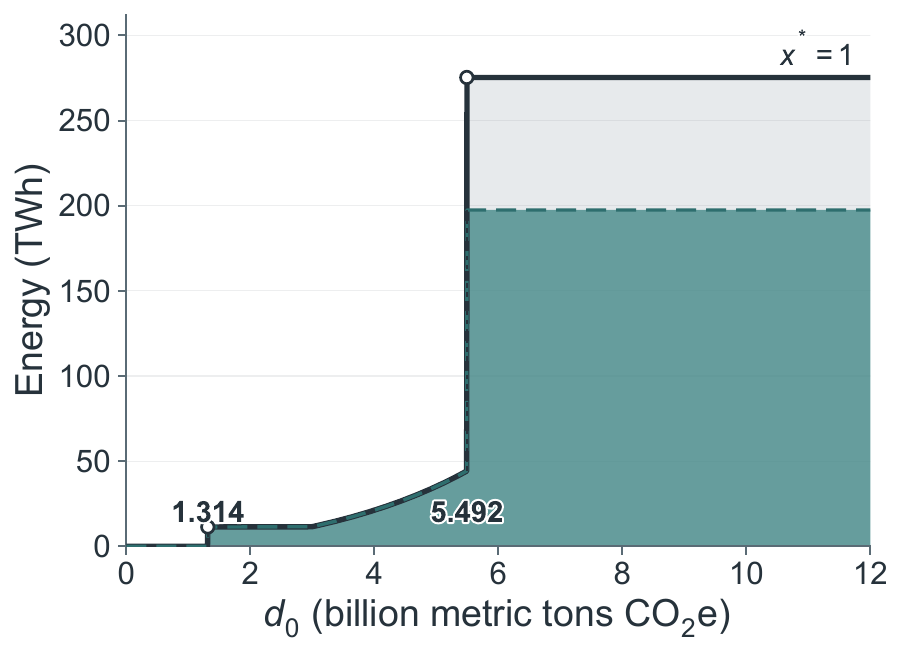}
        \caption{$\phi(x)=1-0.1x$ }
    \end{subfigure}
    \begin{subfigure}[b]{0.32\textwidth}
        \centering
        \includegraphics[width=\linewidth]{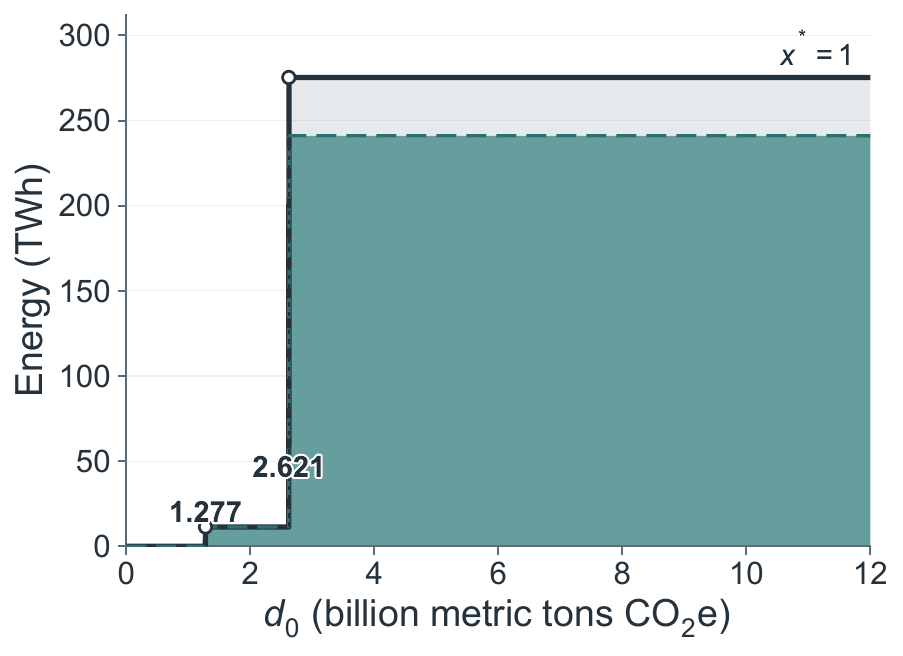}
        \caption{$\phi(x)=1-0.2x$ }
    \end{subfigure}
    \caption{Adaptation Trap under AI-Enabled Renewable Investment Cost Reduction. Illustration based on the parametric setting of Instance B in the case study.}
    \label{fig:extension1}
\end{figure}

However, when AI-enabled cost reductions are insufficient to deliver full decarbonization, they can instead intensify the adaptation trap by expanding the parameter region in which the mechanism operates. \Cref{fig:extension1} illustrates this pattern using Instance~B from the case study. With an AI-enabled cost-reduction potential of roughly \(10\%\) (Panel~(b)), an adaptation trap emerges once \(d_{0}\) reaches \(5.492\). This cutoff is approximately 38\% lower than the corresponding threshold in the absence of AI-enabled cost reduction, \(d_{0}=8.820\), shown in Panel~(a), which reproduces \Cref{fig:case study}(b) for comparison. Increasing the cost-reduction potential to \(20\%\) lowers the threshold further, as shown in Panel~(c). The intuition is straightforward: AI-enabled cost reductions make frontier scaling more attractive in equilibrium, thereby accelerating decoupling and, under market-led scaling, bringing forward the conditions under which the adaptation-trap mechanism arises.

Another key observation from comparing \cref{fig:extension1}(a) and (b) is that AI-enabled cost reductions may not translate into increased renewable-capacity investment once the equilibrium enters the adaptation trap. This outcome arises from the loss of the scale effect as developers' capability choices decouple from renewable availability. Consequently, only sufficiently large cost reductions can compensate for the loss of this lever and restore incentives for additional capacity investment. For Instance~B, illustrated in \cref{fig:extension1}, this cost-reduction requirement is at least 14.41\%. If this requirement is met, AI-enabled cost reductions entail a trade-off in which the adaptation trap is more likely to occur but less fossil-dependent, as \cref{fig:extension1}(c) shows. Otherwise, such cost reductions tend to entrench carbon reliance by accelerating the same carbon-intensive trap, as is the case in \cref{fig:extension1}(b).

In practice, AI-enabled improvements in renewable development and integration are often cast as both an accelerant of the energy transition and a way to shrink AI's own footprint. Our analysis shows that these gains depend on equilibrium responses: in some environments, efficiency improvements can deepen reliance on carbon-intensive supply rather than displace it. A decarbonizing AI--energy relationship requires more than technological complementarity: investment-side coupling must be sustained. In particular, policy may need to preserve the scale channel so that incentives for renewable-capacity expansion keep pace with AI-driven electricity demand.

These effects are one instance of a broader pattern that spans the renewable-promoting levers available to a policymaker. We examine two further levers in \cref{subsec:decentralized,subsec:utilization}: decentralized investment, in which the policymaker sets a financial incentive $s$ and a utility then chooses capacity, and a firmed utilization factor $\zeta$ governing how much installed capacity becomes dependable renewable supply. %
Under market-led scaling,  $s$  and $\zeta$ reduce the carbon intensity of frontier AI yet can hasten the adaptation trap, lowering $\bar{d}_m$ (for $\zeta$, only when it is not large enough to forestall the trap entirely). Under resource-led scaling, both uniformly reinforce the adaptation pathway, lowering $\bar{d}_c$. %
The broader lesson is that making renewables cheaper, more usable, or better subsidized sharpens the incentive to scale capability, so the emissions consequences turn on the scaling regime rather than on the lever itself.

\subsection{Competition Among AI Developers}
\label{subsec:competition}

The main model analyzes a single developer representing the industry. We extend the model to explore the implications of competition between developers. 
Specifically, we consider two AI developers that compete by choosing capabilities $(x_1,x_2)$ first and then prices $(p_1,p_2)$. We refer readers to \cref{subsec:td-competition} for model details. We note that incorporating the price decision does not affect the main-model results in the monopoly case, so the difference observed in this extension can be attributed to competition. 

Competition can reduce the private return to incremental capability through two channels. First, market sharing lowers the effective market potential, $\theta$, available to each developer. Second, competition can reduce the effective elasticity of equilibrium revenue with respect to capability, so that marginal revenue grows more slowly as both developers scale. Together, these effects lower the return to further capability expansion and increase developers' sensitivity to energy costs. Equilibrium capability therefore becomes more responsive to renewable capacity, increasing the likelihood that renewable investment moderates AI expansion and strengthens coupling. Competition can thus reduce the risk of the adaptation trap.%

A caveat, however, is that whether and to what extent this potential can be realized depends on market conditions. \Cref{fig:competition} illustrates a key finding based on extensive numerical experiments. When market potential ($\theta$) is high, the capability decision of either developer remains primarily driven by the market, and the coupling with renewable capacity that competition enables is not strong enough to structurally shift the equilibrium. In this case, the equilibrium exhibits a ``dual adaptation trap'' under which both developers are subject to this mechanism. The dark-blue region in \cref{fig:competition} illustrates when such an equilibrium emerges; Instance A in \cref{sec:case-study} becomes a case in point when the market potential $\theta$ rises above  \$170.5 billion (assuming $d_0=10$), a plausible scenario considering that AI investment in the U.S. reached \$285.88 billion in 2025 \citep{Sajadieh2026aiindex}. At the other extreme, reduced market investment $\theta$ may leave revenue potential insufficient to sustain competition, yielding a monopoly equilibrium (see the white region in \cref{fig:competition}). Instance B corresponds to this case. %

\begin{figure}
    \centering
    \includegraphics[width=0.62\linewidth]{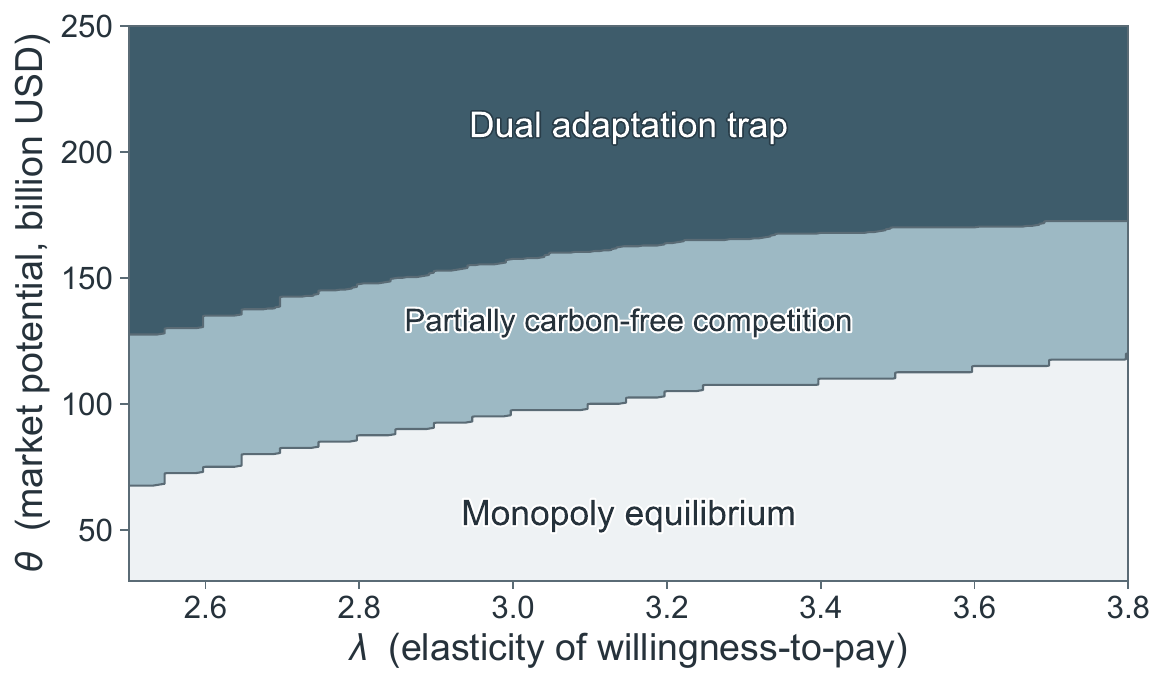}
    \caption{Equilibrium under the Competitive Setting. Illustration based on the parametric setting of Instance A in the case study (except $(\theta,\lambda)$), with $d_0=10$.}
    \label{fig:competition}
\end{figure}

Nevertheless, when $\theta$ is intermediate (see the light-blue region in \cref{fig:competition}), competition can induce partially carbon-free AI development: in equilibrium, at least one developer's energy demand is fully met by renewable resources despite market-led scaling. When $\lambda$ drops below $1+\alpha$, scaling becomes resource-led and the adaptation pathway follows. This extension highlights that an analogous mechanism may already be at work prior to this transition, operating through the competition lever. The effect also persists as $\lambda$ rises over a meaningful range, suggesting that competition, supported by a market with an intermediate yet substantial level of investment, can provide an effective channel for countering the adaptation trap even under strong market-led scaling.

A policy implication follows. Antitrust and market design in AI can shape the carbon intensity of the capability race by altering developers' incentives to scale. When rents are highly concentrated, a dominant developer has stronger incentives to expand compute despite reliance on carbon-intensive power. Greater competition weakens these returns to scale and increases sensitivity to energy costs, making renewable capacity a more effective instrument for shifting AI development toward a carbon-free pathway. \looseness=-1

\section{Conclusion}
\label{sec:conclusions}

AI and renewable energy are increasingly cast as a ``power couple,'' on the premise that surging AI demand will accelerate clean-energy investment and hasten decarbonization \citep{wef2025powercouple,rmi2025powercouple,iea2023powercouple}. This paper examines when that premise holds and when it fails. Rather than taking renewable supply or AI demand as given, we treat capability choices and renewable-capacity expansion as jointly determined and show that equilibrium outcomes depend on two primitives: how compute maps into capability and how capability maps into private value.

A key insight from our analysis is that expanding renewable capacity need not decarbonize AI growth. When the market value of additional capability rises at least as fast as its energy requirement, developers have strong incentives to push the frontier even when the marginal megawatt-hour is fossil-based. When frontier capability is relatively insensitive to energy costs, clean-capacity expansion can cease to affect AI scale before fossil electricity is fully displaced. Past that point, additional renewable investment mainly replaces fossil generation, and that gain may not cover the investment cost; the equilibrium can stop short of full decarbonization even as AI remains at the frontier. %
This regime supports a carbon-intensive equilibrium and an ``adaptation trap'': as climate damages rise, the value of AI-enabled adaptation increases, which strengthens incentives to enable frontier scaling on fossil energy, thereby raising emissions. By contrast, when the market value of additional capability rises more slowly than its energy requirement, energy costs become binding, and capability responds more directly to renewable capacity; clean investment then both enables capability and decarbonizes marginal compute, yielding an ``adaptation pathway'' in which adaptation advances through AI scaled on carbon-free power. \looseness=-1

These regimes also clarify when ``rebound'' is benign and when it is harmful. Expanding renewable supply and improving energy efficiency both lower the effective cost of compute and can accelerate scaling. Under market-led scaling, this acceleration loosens the link between compute and clean capacity, slowing decarbonization and increasing the risk of carbon lock-in. Under resource-led scaling, the same cost reductions tighten that link, making renewable investment and AI growth mutually reinforcing, provided market potential remains in the coupled range. A carbon-intensive equilibrium, therefore, can arise from incentives and scaling economics alone, not only from slow infrastructure or political inertia. A clean transition, on the other hand, is not guaranteed by ongoing renewable expansion, the cost advantage of clean power, or AI-driven efficiency improvements; it requires policy and market design that preserve the link between AI scaling and clean-capacity expansion.

Four policy lessons follow. First, policies that expand clean capacity without addressing marginal emissions need not lower AI's emissions intensity; policies that raise the private cost of fossil-powered compute can be instrumental, whether through carbon pricing, carbon-free procurement requirements, or constraints tied to local grid emissions. Second, to strengthen the coupling between AI growth and clean-energy deployment, policy must govern the conditions under which AI scales: the location and timing of large loads, renewable interconnection, and rules for carbon-free matching. Third, when market forces are exceptionally strong, energy policy may have limited leverage; pursuing additional emissions reductions in other sectors to offset AI-related emissions may be more effective, though evaluating cross-sector instruments lies beyond our single-sector model. Finally, the welfare value of AI's adaptation potential is state-dependent: as climate damages rise, the policymaker's optimal response can switch sharply between the trap and the pathway, implying that climate policy should treat adaptation and mitigation as a joint design problem.

Whether AI and renewable energy reinforce each other depends on the economics of AI scaling and on how compute growth connects to the energy system. The energy system's equilibrium response settles AI's environmental consequences, and institutions shape that response. Procurement, interconnection constraints, and the siting and timing of data-center expansion determine whether scaling is met by clean additions or by greater reliance on fossil generation. A clean scaling path requires policies and market design that sustain investment-side coupling so that renewable deployment keeps pace with compute growth. This, in turn, motivates a research agenda at the intersection of operations, markets, and political economy: measuring how workload flexibility affects marginal emissions; designing grid-aware siting and scheduling that internalize spatial and temporal system conditions; studying how institutions such as PPAs, interconnection rules, capacity markets, and hourly matching requirements shape clean investment and the incentives to scale frontier AI; and quantifying the substitution between compute flexibility and energy flexibility, including the latency tolerance that governs how far AI workloads can follow renewable availability. %

\ACKNOWLEDGMENT{The authors are grateful to Vishal Agrawal, Saurabh Amin, Atalay Atasu, Hamsa Bastani, John R.\ Birge, Charles J.\ Corbett, Jan C.\ Fransoo, Huseyin Gurkan, Serguei Netessine, Masha Shunko, Owen Q.\ Wu, and \c{S}afak Y\"ucel for their thoughtful comments on and reactions to a preliminary draft, and to the anonymous reviewers of the 2026 MSOM Sustainable Operations SIG for their constructive feedback. They are especially indebted to Serguei Netessine for his excellent discussion of the paper at the Sustainable Operations SIG of the 2026 MSOM Conference held at Harvard Business School, and they thank the session participants for their helpful questions. They also thank participants at the Supply Chain Management Conference on Sustainability and Social Responsibility in Supply Chains, held at the Gies College of Business, University of Illinois Urbana-Champaign, and at the Southern California OR/OM Day at USC Marshall, where earlier versions of this paper were presented.
}

\let\oldbibliography\thebibliography
 \renewcommand{\thebibliography}[1]{%
 	\oldbibliography{#1}%
 	\baselineskip16pt %
 	\setlength{\itemsep}{0pt}%
 }

\bibliographystyle{informs2014TD}

\bibliography{ai-climate}

\ECSwitch
\makeatletter
\def\theHsection{EC.\arabic{section}}
\def\theHsubsection{EC.\arabic{section}.\arabic{subsection}}
\def\theHsubsubsection{EC.\arabic{section}.\arabic{subsection}.\arabic{subsubsection}}
\def\theHproposition{EC.\arabic{proposition}}
\def\theHcorollary{EC.\arabic{corollary}}
\def\theHlemma{EC.\arabic{lemma}}
\def\theHtheorem{EC.\arabic{theorem}}
\def\theHdefinition{EC.\arabic{definition}}
\def\theHequation{EC.\arabic{equation}}
\def\theHtable{EC.\arabic{table}}
\def\theHfigure{EC.\arabic{figure}}
\makeatother

\begin{center}
{\Large \bfseries Electronic Companion to}\\[0.25em]
{\large \textit{``\PaperTitle''}}
\end{center}

\small
\setcounter{footnote}{0}

This electronic companion is organized in two parts. \Cref{ec:addresults,ec:addextensions,ec:techdetails}
develop supplementary economics: conditions under which frontier AI is powered entirely by renewables,
two extensions that relax the baseline assumptions on who invests and on what installed capacity delivers,
and the formal details behind the two extensions analyzed in the main text.
\Cref{ec:proofs,ec:calibration} provide technical verification: proofs of the four results stated in the main text,
followed by the calibration sources and computations behind \cref{tab:parameters} and the case study.
Proofs of the companion's own results accompany their statements.

\section{Additional Analytical Results: Conditions for Carbon-Free Frontier AI}
\label{ec:addresults}

In the counterfactual discussion in the case study (\cref{subsec:discussion}), we examine conditions under which an equilibrium arises in which the optimal AI capability choice is $x^*(y^*)=1$ and the optimal renewable capacity investment is $y^*=k$. Analytically, it can be shown that such conditions may involve (i) sufficiently low renewable investment cost, (ii) energy demand that is not excessively high, and (iii) sufficiently adverse climate conditions. For simplicity, we adopt $V(y)=g\cdot y^\mu$ where $\mu>1$ in this analysis.

\begin{proposition}
\label{prop:3}
    In the market-led scaling scenario, there exists an energy demand threshold $\bar{k}$, a climate condition threshold $\hat{d}$, and two marginal renewable investment cost thresholds $g_1,g_2$ such that an equilibrium in which $x^*(y^*)=1$ and $y^*=k$ is achieved if (i) $k\leq \bar{k}$, $d_0\geq \hat{d}$, and $g<g_1$ when $\frac{\theta}{k}\leq \max\{\frac{c_f(1+\alpha)}{\lambda},c_r\}$, or (ii) $g\leq g_2$ when $\frac{\theta}{k}> \max\{\frac{c_f(1+\alpha)}{\lambda},c_r\}$. %
    \label{prop:EC1}
\end{proposition}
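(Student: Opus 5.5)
The plan is to reduce the policymaker's problem to a comparison of finitely many candidate capacities, using the developer best response in \cref{lem:1}. Under market-led scaling, $x^*(y)$ is $0$ on $[0,y_1)$, equals $(y/k)^{1/(1+\alpha)}$ on $[y_1,y_2)$, and equals $1$ on $[y_2,k]$. On $[y_2,k]$ capability is pinned at $x=1$, and welfare reduces to
\[
W(y)=\eta\bigl(\theta-c_rk\,\beta-c_f k(1-\beta)\bigr)-\xi(1-b)\bigl(d_0+e_f(k-y)\bigr)-gy^\mu ,
\]
where $\beta=y/k$. This is concave in $y$ with derivative $\eta(c_f-c_r)+(1-b)e_f\xi-g\mu y^{\mu-1}$. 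Hence $y=k$ is optimal within $[y_2,k]$ if and only if
\[
g\,\mu k^{\mu-1}\le \eta(c_f-c_r)+(1-b)e_f\xi,
\]
which is the condition in \cref{prop:1}(i)(b) specialized to $V(y)=gy^\mu$. To obtain a global optimum I must also compare $W(k)$ against the best value on $[0,y_2)$. There $x^*(y)$ either vanishes or equals $(y/k)^{1/(1+\alpha)}$ with zero emissions, so in that range
\[
W(y)=\eta\bigl(\theta x^\lambda-c_r y\bigr)-\xi\bigl(1-bx\bigr)d_0-gy^\mu .
\]

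In case (ii), $\theta/k>\max\{c_f(1+\alpha)/\lambda,\,c_r\}$. Here I will first show that $y_1=y_2=0$, so that $x^*(y)=1$ for every $y$. The argument compares $\Pi(1,y)$ with the other branches. At $y=0$, the function $\theta x^\lambda-c_fkx^{1+\alpha}$ is maximized at $x=1$ because $\lambda\theta\ge(1+\alpha)c_fk$ and $\lambda\ge1+\alpha$; the derivative $x^{\lambda-1}\bigl(\lambda\theta-(1+\alpha)c_fkx^{1+\alpha-\lambda}\bigr)$ is nonnegative when $1+\alpha-\lambda\le0$. The same computation with blended cost handles $y>0$, since the effective cost only decreases. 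Then $W$ is concave on all of $[0,k]$, and I set $g_2=\bigl(\eta(c_f-c_r)+(1-b)e_f\xi\bigr)/(\mu k^{\mu-1})$. This gives $y^*=k$ for $g\le g_2$, independently of $d_0$ and $k$.

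Case (i) is $\theta/k\le\max\{\cdot\}$, so the coupling zone is nontrivial; this is the main obstacle. From the proof of \cref{lem:1}, I expect $y_2$ to be defined by indifference between the coupling point and the frontier, with $y_2\le k$ requiring $\theta\ge c_rk$. If $\theta<c_rk$, then $x=1$ is never chosen even at $y=k$. I would therefore define $\bar k$ as the largest $k$ for which $\Pi(1,k)=\theta-c_rk\ge0$ and $x^*(k)=1$, so that $y=k$ actually induces the frontier.

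Next I compare $W(k)$ with $\sup_{y<y_2}W(y)$. The adaptation term contributes $\xi b d_0 x$, and on $[0,y_2)$ we have $x<1$ strictly (at most $x_2=(y_2/k)^{1/(1+\alpha)}<1$ when $y_2<k$), while $x(k)=1$. The difference $W(k)-W(y)$ therefore contains $\xi b d_0(1-x)$, which is bounded below by $\xi b d_0(1-x_2)$. The remaining terms are bounded uniformly in $d_0$ because $y\in[0,k]$. Choosing $\hat d$ so that this gain dominates the bounded remainder makes $y=k$ beat every $y<y_2$.

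Finally, $g<g_1$ handles comparison within $[y_2,k]$ through the first-order condition above. It also bounds the investment-cost difference $g(k^\mu-y^\mu)$, so I take $g_1$ as the smaller of $g_2$ and any bound needed in the $\hat d$ step. The delicate point is the case $y_2=k$ exactly, where $x_2\to1$ and the adaptation margin vanishes. I would handle it by requiring the strict inequality $k<\bar k$, or by using tie-breaking toward larger $x$.
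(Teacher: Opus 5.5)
Your overall architecture, which compares $W(k)$ with the best welfare on each branch of \cref{lem:1}, matches the paper's. However, you have the branch structure inverted across the two cases, and each case breaks as a result.

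In case (ii), the claim $y_1=y_2=0$ is false. For $x\in(0,1]$ and $1+\alpha-\lambda<0$ we have $x^{1+\alpha-\lambda}\ge 1$. So $\lambda\theta\ge(1+\alpha)c_fk$ makes the bracket $\lambda\theta-(1+\alpha)c_fkx^{1+\alpha-\lambda}$ nonnegative only at $x=1$, while it tends to $-\infty$ as $x\downarrow 0$. The fossil-priced profit is therefore U-shaped, and $x^*(0)=1$ if and only if $\theta\ge c_fk$. Case (ii) contains $\theta/k\in\bigl(\max\{c_f(1+\alpha)/\lambda,c_r\},\,c_f\bigr)$. On that range the developer is inactive for small $y$, and a coupling zone with $y_2<k$ can exist (the first sub-case in the proof of \cref{lem:1}). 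Then $W$ is not concave on $[0,k]$, and $g\le g_2^0$ does not deliver $y^*=k$. Take $\alpha=0$, $\lambda=2$, $k=1$, $c_f=1$, $c_r=0.1$, $\theta=0.6$, $d_0=0$, $\eta=1.5$, $b=0.5$, $e_f\xi=10$, $\mu=2$. Then $g_2^0=3.175$, yet at $g=3$ you get $W(k)=\eta(\theta-c_rk)-gk^\mu=-2.25<0=W(0)$. This is why the paper sets $g_2=\min\{g_2^0,g_3^0,g_4^0\}$. Here $g_3^0=\bigl(\eta(\theta-c_rk)+\xi bd_0\bigr)/k^\mu$ secures $W(k)\ge W(0)=-\xi d_0$, and $g_4^0$ secures $W(k)\ge W(y)$ at every coupling point.

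In case (i), once $\theta>c_rk$, the only configuration in which the frontier is reachable is $c_r/c_f<(1+\alpha)/\lambda$ with $\theta\le(1+\alpha)c_fk/\lambda$. There the coupling branch $x^*(y)=(y/k)^{1/(1+\alpha)}$ runs continuously up to $x=1$ at $y=k$ (sub-case (A) in the proof of \cref{lem:1}, where $y_2=k$). So $y_2=k$ is not an edge case to be handled by a strict $k<\bar k$ or by tie-breaking; neither changes it, and it is all of case (i). Your margin $\xi bd_0(1-x_2)$ is then zero. Both the adaptation gain $\xi bd_0\bigl(1-(y/k)^{1/(1+\alpha)}\bigr)$ and the remainder vanish at rate $k-y$, so taking $d_0$ large does not by itself make $W(k)$ beat nearby $y$. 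The decoupling first-order condition you import is irrelevant here, since $[y_2,k]$ is the single point $k$.

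Let $W_3(y)$ denote welfare along the coupling branch. What is needed is $W_3'(k)\ge 0$, i.e.\ $g\mu k^{\mu-1}\le\bigl[\eta\bigl(\theta\lambda-(1+\alpha)c_rk\bigr)+\xi bd_0\bigr]/\bigl((1+\alpha)k\bigr)$, plus a global argument. The paper obtains the global argument from two facts. First, $W_3$ is concave once $d_0\ge\eta\theta\lambda(\lambda-1-\alpha)/(\alpha b\xi)$, which is the source of $\hat d$. Second, it imposes $W_3(k)\ge W(0)$.

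Alternatively, you can repair your own route by dividing $W(k)-W_3(y)$ by $k-y$. With $t=y/k$, use $1-t^{\lambda/(1+\alpha)}\ge 1-t$, $1-t^{1/(1+\alpha)}\ge(1-t)/(1+\alpha)$, and $1-t^\mu\le\mu(1-t)$. These give the uniform sufficient condition $g\mu k^\mu\le\eta(\theta-c_rk)+\xi bd_0/(1+\alpha)$, which also covers the inactive branch.
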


\noindent 
\emph{Proof of \cref{prop:EC1}.} First, when $\frac{\theta}{k}> \max\{\frac{c_f(1+\alpha)}{\lambda},c_r\}$, according to the fourth paragraph in the proof of \cref{prop:1}, $x^*=1$ materializes only in the ``decoupling zone'' characterized by the third scenario in \cref{lem:1}. Hence, achieving $y^*=k$ first requires that $y=k$ maximizes welfare along the $x^*(y)=1$ branch. This happens when  $V'(k)\leq \eta(c_f-c_r)+(1-b)e_f\xi$, which is equivalent to $g\leq g^0_2\doteq \frac{\eta(c_f-c_r)+(1-b)e_f\xi}{\mu k^{\mu-1}}$. %
Second, we need to ensure that the welfare under $y=k$, $W(k)=\eta(\theta-c_rk)-\xi(1-b)d_0-gk^{\mu}$, is no smaller than that in the other two scenarios in \cref{lem:1}: on the branch where $x^*(y)=0$, welfare equals $-\xi d_0-gy^{\mu}$ and is maximized at $y=0$, so $W(0)=-\xi d_0$; $W(k)\geq W(0)$ if $g\leq g_3^0\doteq \frac{\eta(\theta-c_rk)+\xi b d_0}{k^{\mu}}$. On the branch where $x^*(y)=(y/k)^{1/(1+\alpha)}$,  %
the requirement $W(k)\geq W(y)$ is equivalent to
\begin{equation*}
g\left(k^{\mu}-y^{\mu}\right)\leq \eta\left[\theta\left(1-\left(\tfrac{y}{k}\right)^{\frac{\lambda}{1+\alpha}}\right)-c_r(k-y)\right]+\xi b d_0\left[1-\left(\tfrac{y}{k}\right)^{\frac{1}{1+\alpha}}\right].
\end{equation*}
This inequality holds at every such $y$ on this branch if $g\leq g^0_4$, where 
\begin{equation*}
g_4^0\doteq \inf_{y_1\leq y<y_2}\frac{\eta\left[\theta\left(1-\left(\tfrac{y}{k}\right)^{\frac{\lambda}{1+\alpha}}\right)-c_r(k-y)\right]+\xi b d_0\left[1-\left(\tfrac{y}{k}\right)^{\frac{1}{1+\alpha}}\right]}{k^{\mu}-y^{\mu}}
\end{equation*}
Imposing $g\leq g_2\doteq \min\{g^0_2,g^0_3,g^0_4\}$ therefore proves \cref{prop:3}(ii). %

Second, when $\frac{\theta}{k}\leq \max\{\frac{c_f(1+\alpha)}{\lambda},c_r\}$, according to the second paragraph in the proof of \cref{prop:1}, two possibilities arise. 
\begin{itemize}
    \item $\frac{c_r}{c_f}<\frac{1+\alpha}{\lambda}$ and $\frac{\theta}{k}\leq \frac{1+\alpha}{\lambda }c_f$: in this case, $y_2=k$. Hence, an equilibrium in which $x^*(y^*)=1$ and $y^*=k$ can be achieved only in the ``coupling zone'' in which $x^*(y)=(\frac{y}{k})^{1/(1+\alpha)}$. Plugging this solution into the welfare function leads to a function that we denote by $W_3(y)$. It can be shown that this equilibrium is indeed achieved if (a) $d_0\geq \frac{\eta  \theta  \lambda  (\lambda- (1+\alpha))}{\alpha  b \xi }$, and (b) $g\leq \min\{\frac{b d_0 \xi -(1+\alpha) c_r \eta  k+\eta  \theta  \lambda }{(\alpha +1) k\cdot \mu k^{\mu-1}},\frac{b d_0 \xi -c_r \eta  k+\eta  \theta}{k^{\mu}}\}$. Condition (a) ensures that $W_3(y)$ is concave, so the optimal solution is achieved either at a stationary point or at the endpoint. Thus $y^*=k$ when $W_3'(k)\geq 0$ and $W_3(k)\geq W(0)$ (i.e., renewable investment yields higher welfare than not investing). The two terms that define the $g$ threshold in condition (b) ensure that these two inequalities hold.

    \item $\frac{c_r}{c_f}\geq \frac{1+\alpha}{\lambda}$ and $\frac{\theta}{k}\leq c_r$: in this case, $x^*(y)=0$ for any $y\in[0,k]$. Hence, an equilibrium in which $x^*(y^*)=1$ and $y^*=k$ cannot be achieved. We impose $k<\frac{\theta}{c_r}$ to eliminate this case. 
\end{itemize}

Combining the conditions on $(k,d_0,g)$ proves \cref{prop:3}(i). \hfill \emph{Q.E.D.}

\smallskip

Although it is intuitive that ensuring net-zero frontier AI requires a sufficiently low marginal cost of investment in renewables, Proposition \ref{prop:3} indicates that the cost threshold depends on the ratio $\frac{\theta}{k}$, which measures the average market investment per unit of energy demand. If this ratio exceeds $\frac{c_f(1+\alpha)}{\lambda}$ (which is equivalent to $\theta\lambda\geq c_fk(1+\alpha)$, i.e., the marginal market value exceeds the marginal cost at $x=1$), it can be shown that maximum AI capability is achieved in the decoupling zone, independent of whether $y=k$. In this case, the reduction of the marginal cost of renewable investment is driven only by the composition effect, that is, it should be lower than the marginal benefit from displacing fossil fuel. Indeed, the proof shows that  $g$ should be no higher than the cost threshold identified in condition (b) in \cref{prop:1}(i). 

Otherwise, maximum AI capability may be achieved in the coupling zone, which requires full decarbonization to justify its cost. In this case, the marginal benefit of renewable investment is determined by the scale effect, which includes the developer's scale margin, the associated economic spillover, and the enhanced adaptation benefits, minus the renewable energy generation cost. These components determine the $(d_0,g_1)$ thresholds. A caveat is that in this case, the marginal benefit of renewable investment could be negative when frontier AI entails too high an energy cost even when powered by the cheaper renewable sources and therefore cannot be realized. This explains the upper bound on the energy-demand scale $k$ for this case. Another useful observation is that because the scale effect draws on AI's adaptation benefits, the $g_1$ bound can be much more relaxed than $g_2$ when $d_0$ is high, which suggests an avenue to circumvent the limited renewable investment cost reduction potential, which is constrained by technology.

We show that \cref{prop:3} applies structurally to  resource-led scaling scenarios. This
is because the impact of the comparison between the elasticity parameters $\lambda$ and $1+\alpha$ is attenuated at frontier
capability $x=1$. The next proposition formalizes the finding. 

\begin{proposition}
    In the resource-led scaling scenario, \cref{prop:3} continues to hold if (a) the threshold on $\frac{\theta}{k}$ that separates cases (i) and (ii) is changed to $\frac{c_f(1+\alpha)}{\lambda}$, (b) the thresholds on $(k,d_0)$ are removed,  and (c) the $g_1$ threshold is replaced by a different one, $g_3$, and the $g_2$ threshold is replaced by $g_2^0$ defined in the proof of \cref{prop:EC1}. %
    \label{pro:EC.2}
\end{proposition}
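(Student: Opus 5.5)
We mirror the proof of \cref{prop:EC1}, replacing the three-branch best response of \cref{lem:1} with the best response of \cref{lem:2}, and again take $V(y)=g\,y^{\mu}$. The first task is to identify the branches on which $x^*(y)=1$ can occur.

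By \cref{lem:2}, there are two possibilities.
\begin{itemize}
\item If $f(c_f)\ge 1$, which is equivalent to $\theta\lambda\ge(1+\alpha)c_fk$, i.e., $\frac{\theta}{k}\ge\frac{c_f(1+\alpha)}{\lambda}$, then $x^*(y)=1$ for every $y\in[0,k]$. This is the decoupling case that replaces case (ii) of \cref{prop:3}.
\item If $f(c_f)<1$, then $x^*(y)=1$ can arise only from the coupling branch $x^*(y)=(y/k)^{1/(1+\alpha)}$ or from the $f(c_r)$ branch. Both require $k\,f(c_r)^{1+\alpha}\ge k$, i.e., $f(c_r)\ge 1$, with $x^*(k)=1$ reached at $y=k$.
\end{itemize}
This explains modification (a): the relevant cutoff is $f(c_f)=1$. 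The $c_r$ term in the maximum disappears because inactivity ($x^*=0$) never occurs when $\lambda<1+\alpha$, since $f(c)>0$.

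\emph{Decoupling case.} Here $x^*\equiv 1$, so
\[
W(y)=\eta\bigl(\theta-c_rk\,\beta-c_fk(1-\beta)\bigr)-\xi(1-b)\bigl(d_0+e_f(k-y)\bigr)-g\,y^{\mu},
\qquad \beta=y/k .
\]
This function is concave in $y$, with derivative $\eta(c_f-c_r)+(1-b)e_f\xi-g\mu y^{\mu-1}$. Hence $y^*=k$ if and only if $g\le g_2^0$. No comparison with other branches is needed, because only one branch exists. This is why $g_2$ reduces to $g_2^0$, and why no $(k,d_0)$ restriction appears: the terms involving $d_0$ are affine and do not depend on $y$ except through $E$.

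\emph{Coupling case.} Assume $f(c_f)<1\le f(c_r)$. On $y<k f(c_f)^{1+\alpha}$ welfare equals a constant minus $g y^{\mu}$, so it is maximized at $y=0$, where $x=f(c_f)$. On the coupling branch, $W_3(y)=\eta\bigl(\theta (y/k)^{\lambda/(1+\alpha)}-c_r y\bigr)-\xi\bigl(1-b(y/k)^{1/(1+\alpha)}\bigr)d_0-g y^{\mu}$. When $\lambda<1+\alpha$, both capability terms are concave in $y$, since their exponents $\lambda/(1+\alpha)$ and $1/(1+\alpha)$ are less than $1$. So $W_3$ is concave without the condition on $d_0$ that \cref{prop:EC1} required, which accounts for removing the $d_0$ threshold in (b). The only role of the $k$ bound in \cref{prop:EC1} was to exclude the inactive case, which cannot arise here, so that bound is dropped as well.

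Consequently, $y^*=k$ holds whenever two conditions are met:
\begin{itemize}
\item $W_3'(k)\ge 0$, which reads $g\le\frac{\eta(\theta\lambda-(1+\alpha)c_rk)+b\xi d_0}{(1+\alpha)\mu k^{\mu}}$;
\item $W_3(k)\ge W(0)$, where $W(0)=\eta\Pi(f(c_f),0)-\xi(1-bf(c_f))(d_0+e_fkf(c_f)^{1+\alpha})$.
\end{itemize}
We define $g_3$ as the minimum of the two resulting bounds on $g$. The first is nonnegative exactly when $f(c_r)\ge1$, up to the adaptation term.

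We expect the main obstacle to be the bookkeeping in this coupling case. Specifically, one must verify that $W(0)$ is the correct outside option, which requires dealing with the emissions at $x=f(c_f)$ that make $W(0)$ depend on $d_0$. One must also confirm that, when $f(c_f)<1\le f(c_r)$, $g_3$ is well defined and positive. If positivity fails for some parameters, we would state $g_3$ as the minimum of these bounds and note that the condition $g<g_3$ can then be vacuous.
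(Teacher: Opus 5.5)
Your decoupling case (yielding $g_2^0$), your description of the branches when $f(c_f)<1$, and the concavity of $W_3$ under $\lambda<1+\alpha$ all agree with the paper. The gap is the claim that on $y<kf(c_f)^{1+\alpha}$ welfare is ``a constant minus $gy^{\mu}$,'' which is what makes $W(0)$ the relevant outside option. On that branch capability is fixed at $f\doteq f(c_f)$. Each unit of $y$ still displaces a unit of fossil power, however, so it lowers both the developer's bill and emissions:
\[
W_4(y)=\eta\bigl(\theta f^{\lambda}-c_fkf^{1+\alpha}+(c_f-c_r)y\bigr)-\xi(1-bf)\bigl(d_0+e_f(kf^{1+\alpha}-y)\bigr)-gy^{\mu}.
\]
Since $V'(0)=0$ for $\mu>1$, this gives $W_4'(0)=\eta(c_f-c_r)+\xi(1-bf)e_f>0$, so $y=0$ never maximizes this branch. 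Your description fits the market-led $x^*=0$ branch, where welfare really is $-\xi d_0-gy^{\mu}$, but not this one.

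Your worry about the outside option was therefore well placed, and it breaks the sufficiency claim. Take $k=1$, $\mu=2$, $a\doteq\eta(c_f-c_r)$, $g\in(a/2,a)$, and $d_0$ large enough that $W_3'(1)>0$. Then let $e_f\to0$ and $f(c_f)\uparrow1$, keeping $f(c_r)>1$. In this limit $W_3(1)-W(0)\to a-g>0$, whereas $W_4\bigl(a/(2g)\bigr)-W(0)\to a^2/(4g)>a-g$, with $a/(2g)<1$. So $g$ lies strictly below both of your bounds, yet the optimum is interior on the first branch: $y^*<k$ and $x^*(y^*)=f(c_f)<1$.

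The paper's proof avoids this by requiring $W_3(k)\ge\max_{y\in[0,kf(c_f)^{1+\alpha}]}W_4(y)$. The difference $W_3(k)-W_4(y)$ equals a $g$-free term minus $g(k^{\mu}-y^{\mu})$. On that branch $k^{\mu}-y^{\mu}\ge k^{\mu}\bigl(1-f(c_f)^{\mu(1+\alpha)}\bigr)>0$, so the requirement becomes a well-defined upper bound on $g$, for example an infimum of ratios like $g_4^0$ in the proof of \cref{prop:EC1}. Then $g_3$ is the minimum of this bound and your $W_3'(k)\ge0$ bound.

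A smaller point: you observe that $x^*=1$ needs $f(c_r)\ge1$, but that condition is itself a bound on $k$. So the $k$ threshold does not drop out for the reason you give. When $f(c_r)<1$, we have $x^*(y)\le f(c_r)<1$ for every $y$. Yet the $b\xi d_0$ terms can make both of your bounds positive there. To state the result without a $k$ threshold, you must set $g_3\doteq0$ in that region. The paper's proof is silent on this case as well.
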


\noindent \emph{Proof of \cref{pro:EC.2}.} Based on \cref{lem:2}, it can be shown that when $\frac{\theta}{k}>\frac{c_f(1+\alpha)}{\lambda}$, $f(c_f)>1$, thus only the first case (the ``decoupling zone'') in \cref{lem:2} materializes, in which $x^*(y)=1$ for any $y\in[0,k]$. Applying the same proof for \cref{prop:3}(ii) proves the result for the resource-led scaling scenario. 

When $\frac{\theta}{k}\leq \frac{c_f(1+\alpha)}{\lambda}$, $f(c_f)\leq 1$, thus the welfare function has at least two pieces. Nevertheless, it can be shown that the equilibrium $x^*=1$, $y^*=k$ can only be achieved in the second case, that is, the ``coupling zone'' of \cref{lem:2}. Plugging the $x^*(y)=(\frac{y}{k})^{1/(1+\alpha)}$ solution into the welfare function leads to the same function $W_3(y)$ as defined in the proof of \cref{prop:3}. It can be proven that $W_3(y)$ is guaranteed to be concave in the resource-led scaling scenario (thus the $d_0$ threshold in \cref{prop:3} is no longer needed). Meanwhile, we also plug the $x^*=f(c_f)$ solution into the welfare function and denote the resulting function by $W_4(y)$. Hence, $y^*=k$ is achieved when (a) $W_3'(k)\geq 0$, and (b) $W_3(k)\geq \max_{y\in [0,kf(c_f)^{1+\alpha}]}W_4(y)$. Condition (a) is equivalent to an upper bound on $g$ as shown in the proof of \cref{prop:3}. The same can be shown for condition (b): let $y_4'\doteq \arg\max_{y\in [0,kf(c_f)^{1+\alpha}]}W_4(y)$. Then condition (b) is equivalent to $g(k^\mu-(y_4')^\mu)$ being no greater than a certain term that involves no $g$. Since $k^\mu-(y_4')^\mu$ is lower bounded by $k^\mu-(kf(c_f))^\mu$, this condition can be translated into a well-defined upper bound on $g$. Setting $g_3$ as the minimum between this bound and the one that guarantees condition (a) proves the result for the case where $\frac{\theta}{k}\leq \frac{c_f(1+\alpha)}{\lambda}$. \hfill \emph{Q.E.D.}

\section{Additional Extensions and Robustness}
\label{ec:addextensions}

This section presents two further extensions that test the robustness of our main results to richer descriptions of how renewable capacity is financed, used, and delivered. We first consider decentralized renewable investment, replacing the policymaker's direct choice of capacity with an arrangement in which the policymaker sets a financial incentive and a utility makes the investment decision. We then model intermittent renewable availability and its firming explicitly: intermittency places a strictly positive floor under emissions whenever the developer is active, and firming restores the carbon-free outcomes at a higher marginal investment cost. Under decentralized investment, the adaptation-trap and adaptation-pathway characterizations of \cref{prop:1,prop:2} continue to hold, with the subsidy level determining how much clean capacity the private market delivers. Under intermittency, the trap mechanism is reinforced, whereas the pathway mechanism weakens, and firming restores the structure of the main model at a strictly higher marginal cost.

\subsection{Decentralized Renewable Investment}
\label{subsec:decentralized}

In this extension, we explicitly model the interaction between policy choices and renewable capacity investment. We focus on a setting in which the policymaker provides financial incentives (e.g., renewable subsidies and tax credits), a major policy lever for stimulating renewable investment in practice. We also consider a utility undertaking renewable capacity development, reflecting one of the primary channels through which renewable energy investment occurs in practice.

The utility determines the renewable capacity level $y$ to maximize the return from renewable investment. A stylized formulation of its objective is maximizing $(p_r-c_r+s)\cdot \beta(x^*(y),y) \cdot k(x^*(y))^{1+\alpha} - V(y)$, where $p_r$ is the unit rate (e.g., a large-load tariff) charged for the amount of renewable energy used by the developer under its optimal capability choices induced by $y$, $c_r$ is the unit renewable generation cost, and $s$ is the unit financial incentive that the policymaker provides for renewable generation.\footnote{Some renewable subsidies and tax benefits are tied to installed capacity rather than generation. This distinction does not affect the model outcomes because, in equilibrium, the developer either utilizes the full amount of available renewable generation or none of it.} Although utilities may also recover part of their renewable investment costs from other ratepayers in practice, we abstract from these channels and focus on the direct interaction between renewable investment and AI development.
In practice, electricity tariffs are typically subject to regulatory oversight and are therefore not readily adjusted, especially compared to direct financial incentives such as subsidies or tax credits. Hence, we treat $p_r$ as exogenous.%

We consider a three-stage game. In Stage 1, the policymaker chooses the financial incentive $s\in [0,\bar{s}]$, where $\bar{s}$ represents fiscal constraints that limit the amount of such incentive. The social welfare function that the policymaker maximizes equals the sum of the developer's profit (plus associated economic spillovers) and the utility's return from renewable development, minus the climate damage and the fiscal spending on incentive provision. In Stage 2, the utility determines the renewable capacity level $y$. Finally, in Stage 3, the developer chooses capability $x$ given the available renewable capacity. Relative to the main model, the developer pays the renewable tariff $p_r$ rather than $c_r$, although we continue to assume $p_r<c_f$, reflecting settings in which large-load renewable procurement provides electricity at a lower price than standard retail grid service.\footnote{A more rigorous formulation would introduce a separate fossil energy price $p_f>c_f$ representing the grid electricity tariff that the developer pays. This refinement only shifts the thresholds that define the equilibrium regimes and does not affect the insights qualitatively.  }

We show that our main results are robust to the policy-investment interactions, and the likelihood and the magnitude of the adaptation/pathway effects can be moderated by the intensity of the policy support. We start by presenting this result for the market-led scaling scenario. Recall the $y_1,y_2$ thresholds that define the developer's optimal capability choice in \cref{lem:1}. In the extension, we modify these thresholds by replacing $c_r$ with $p_r$ in their characterization (see the proof of \cref{lem:1}) and denote the new thresholds by $\hat{y}_1$ and $\hat{y}_2$.

\begin{proposition}
\label{prop:ext2}
\begin{enumerate}
        \item[(i)] Statement (i) in \cref{prop:1} continues to hold if any of the following conditions is satisfied: (a) $\theta\leq \max\{\frac{1+\alpha}{\lambda }\cdot c_fk, p_rk\}$, (b) $V'(k)\leq \min\{\eta (c_f-p_r)+(p_r-c_r)+(1-b) e_f \xi,p_r-c_r+\bar{s}\}$, or (c) $V'(\hat{y}_2)>p_r-c_r+\bar{s}$. 
        
        \item[(ii)] Otherwise, statement (ii) in \cref{prop:1} holds. Furthermore, the threshold $\bar{d}_m$ decreases in $\bar{s}$, whereas the equilibrium renewable energy investment $y^*$ when $d_0\geq \bar{d}_m$ increases in $\bar{s}$. \looseness=-1
        \end{enumerate}
\end{proposition}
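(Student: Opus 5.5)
We solve the three-stage game by backward induction, reusing the structure behind \cref{lem:1} and \cref{prop:1}.

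\textbf{Stage 3.} The developer's problem is the baseline problem with $c_r$ replaced by $p_r$. Since $p_r<c_f$, \cref{lem:1} applies verbatim with thresholds $\hat y_1\le\hat y_2$. The developer is inactive below $\hat y_1$, exhausts renewables with $x=(y/k)^{1/(1+\alpha)}$ on $[\hat y_1,\hat y_2)$, and jumps to $x=1$ for $y\ge\hat y_2$. The coupling zone is nonempty and the jump occurs only when $\theta>\max\{\frac{1+\alpha}{\lambda}c_fk,p_rk\}$, exactly as in the second and fourth paragraphs of the proof of \cref{prop:1}. This gives condition (a): without a frontier jump, every attainable outcome is carbon-free for all $d_0$.

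\textbf{Stage 2.} Renewable usage is $\min\{y,k(x^*(y))^{1+\alpha}\}$, which equals $y$ whenever the developer is active, and the developer is active whenever $y\ge\hat y_1$. The utility therefore maximizes $(p_r-c_r+s)y-V(y)$ over $\{0\}\cup[\hat y_1,k]$. By convexity of $V$, the interior optimum solves $V'(y)=p_r-c_r+s$, truncated at $k$; we then compare it with $y=0$. The resulting supply map $y^u(s)$ is nondecreasing in $s$ and bounded above by $y^u(\bar s)$.

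Two consequences give the remaining conditions of part (i). If $V'(\hat y_2)>p_r-c_r+\bar s$, the utility never builds up to $\hat y_2$, so the frontier branch is unreachable and emissions are zero; this is condition (c). If $V'(k)\le p_r-c_r+\bar s$, the policymaker can induce $y=k$.

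\textbf{Stage 1.} The policymaker chooses $s$ and hence effectively chooses $y$ from the set $\{y^u(s):s\in[0,\bar s]\}=[0,y^u(\bar s)]$ (up to the discrete activation threshold). Subsidy payments are a transfer to the utility, so they cancel in welfare. The welfare on the frontier branch with $x=1$ is then
\[
\eta\bigl(\theta-p_ry-c_f(k-y)\bigr)+(p_r-c_r)y-V(y)-\xi(1-b)\bigl(d_0+e_f(k-y)\bigr).
\]
Its derivative is $\eta(c_f-p_r)+(p_r-c_r)+(1-b)e_f\xi-V'(y)$. The first term of condition (b) makes this derivative nonnegative up to $k$, so $y=k$ is socially optimal along the branch. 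The second term of (b) makes $y=k$ implementable. Together they give a carbon-free frontier for all $d_0$.

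Part (ii) repeats the argument of \cref{prop:1}(ii) on the restricted feasible interval $[0,y^u(\bar s)]$. In the coupling branch, $d_0$ raises the welfare through the adaptation term $\xi b d_0 x$. Comparing the coupling-branch maximum with the frontier-branch maximum therefore yields a single-crossing threshold $\bar d_m$. The frontier-branch optimum $\min\{y^f,y^u(\bar s)\}$, where $y^f$ solves the first-order condition above, does not depend on $d_0$, so $y^*$ is constant beyond $\bar d_m$.

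\textbf{Comparative statics in $\bar s$.} Enlarging $\bar s$ weakly enlarges the feasible set. We would show that it strictly raises the frontier-branch value, because the truncation at $y^u(\bar s)$ binds in case (ii). It affects the coupling branch less, because the coupling-branch optimum lies below $\hat y_2$ and is typically unconstrained. Hence the frontier branch wins at a lower $d_0$, so $\bar d_m$ decreases in $\bar s$, and $y^*=\min\{y^f,y^u(\bar s)\}$ increases in $\bar s$.

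The main obstacle is this last step. We must establish that the truncation binds on the frontier branch but not on the coupling branch in case (ii), or at least that the frontier-branch value rises faster than the coupling-branch value as $\bar s$ grows. We would prove this by an envelope argument using the negation of condition (b), which yields $y^u(\bar s)<k$ with $V'(y^u(\bar s))=p_r-c_r+\bar s$. We would also need to handle the discrete activation threshold $\hat y_1$ carefully.
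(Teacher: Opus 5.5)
Your backward induction is the paper's. Stage~3 is \cref{lem:1} with $c_r$ replaced by $p_r$. The utility's condition $V'(y)=p_r-c_r+s$ caps inducible capacity at $(V')^{-1}(p_r-c_r+\bar s)$. The subsidy cancels as a transfer, so the frontier-branch slope is $\eta(c_f-p_r)+(p_r-c_r)+(1-b)e_f\xi-V'(y)$. Part (ii) then reruns the proof of \cref{prop:1}(ii) with the frontier stationary point truncated at that cap. One correction: the frontier-branch optimizer is $\max\{\hat{y}_2,\min\{y^f,y^u(\bar s)\}\}$. The paper's ``either $y_2$ or $\hat{y}_2'$'' keeps the left endpoint, which your $\min\{y^f,y^u(\bar s)\}$ drops.

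The gap is the comparative-statics step you leave open, and the route you sketch for it rests on a false premise. The negation of (b) reads $V'(k)>\min\{\eta(c_f-p_r)+(p_r-c_r)+(1-b)e_f\xi,\;p_r-c_r+\bar s\}$. This is compatible with $V'(k)\le p_r-c_r+\bar s$. In that case $y^u(\bar s)=k>y^f$, the cap does not bind, and the frontier value, $\bar d_m$ and $y^*$ are all locally flat in $\bar s$. So the negation of (b) does not give $V'(y^u(\bar s))=p_r-c_r+\bar s$ with $y^u(\bar s)<k$. There is no strict effect for an envelope argument to deliver; the monotonicity is only weak.

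What you are missing is that your worry about the coupling branch is settled outright by being in case (ii). Failure of (c) means $V'(\hat{y}_2)\le p_r-c_r+\bar s$, so the cap already lies at or above $\hat{y}_2$. Hence none of the following depends on $\bar s$: the set of inducible capacities below $\hat{y}_2$, their welfare values $W(y)$, and their $d_0$-slopes relative to the frontier, which are $-b\xi$ or $-b\xi\bigl(1-(y/k)^{1/(1+\alpha)}\bigr)$. On the frontier branch, raising $\bar s$ only enlarges the inducible interval. The concave frontier welfare at its constrained maximizer therefore weakly rises, and the maximizer weakly moves up toward $y^f$. Feed this into $\bar d_m=\sup_{y<\hat{y}_2}\bar d''(y)$, where each $\bar d''(y)$ is the $d_0$ at which $W(y)$ crosses the frontier maximum. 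This gives that $\bar d_m$ weakly decreases and that $y^*$ beyond it weakly increases, which is the paper's argument.

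Your concern about the activation threshold is legitimate. When $\hat{y}_1=\hat{y}_2$, the utility compares average rather than marginal cost at $\hat{y}_2$, so (c) alone does not exclude $y=\hat{y}_2$. The paper's proof passes over this as well.
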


\noindent \emph{Proof of \cref{prop:ext2}.} We can prove that condition (a) leads to statement (i) in \cref{prop:1} based on the same arguments in the proof of \cref{prop:1} (the second paragraph), because the developer's capability choice follows \cref{lem:1} with $c_r$ replaced by $p_r$. To prove condition (b), we complement the proof in \cref{prop:1} (the first paragraph) by the fact that $y^*=k$ can be achieved if and only if $V'(k)\leq p_r-c_r+\bar{s}$, that is, there exists an $s$ level to induce renewable investment to fully cover AI energy demand. Lastly, condition (c) indicates that under all feasible $s$ levels, the induced renewable investment does not exceed the $\hat{y}_2$ threshold, and thus the equilibrium remains in the carbon-free zone. 

To prove that statement (ii) in \cref{prop:1} continues to hold, we need to show that when the three conditions discussed above are violated, there are feasible $s$ values under which  $x^*=1$ can materialize. This is guaranteed by $V'(\hat{y}_2)\leq p_r-c_r+\bar{s}$. Furthermore, because $V'(k)> \min\{\eta (c_f-p_r)+(p_r-c_r)+(1-b) e_f \xi,p_r-c_r+\bar{s}\}$, either (i) $V'(k)> p_r-c_r+\bar{s}$, in which case no incentive level is sufficient to induce the utility to invest in $k$ amount of renewable capacity,  or (ii) $V'(k)> \eta (c_f-p_r)+(p_r-c_r)+(1-b) e_f \xi$, in which case even when $y=k$ is feasible, it is suboptimal (based on similar arguments in the fourth paragraph in the proof of \cref{prop:1}). Hence, we know that when $x^*=1$ is achieved, $y<k$ is guaranteed under the optimal $s$. Accordingly, let $s^*$ be the optimal incentive level, and $y^*(s^*)$ be the renewable investment induced. It can be shown that when $y^*(s^*)$ occurs on $W_2(y)$, which is defined in the proof of \cref{prop:1}, it equals either $y_2$ or $\hat{y}_2'\doteq \min\{y_2',(V')^{-1}(p_r-c_r+\bar{s})\}$. Replacing $y'_2$ with $\hat{y}_2'$ in the proof of \cref{prop:1} proves that statement (ii) continues to hold in this extension. Furthermore, we know that $\hat{y}_2'$ increases in $\bar{s}$, and so does  $W_2(\hat{y}_2')$. Therefore, the threshold $\bar{d}_m$ decreases in $\bar{s}$ according to its definition (with $y_2'$ replaced by $\hat{y}_2'$) in the proof of \cref{prop:1}. This completes the proof of the sensitivity results in \cref{prop:ext2}(ii).  \hfill \emph{Q.E.D.}

\cref{prop:ext2} indicates that policy support, when operating through the private investment incentives of renewable investors, may create a trade-off between the timing and carbon intensity of an adaptation trap. Although stronger policy support promotes renewable capacity investment and reduces the carbon intensity of AI development, it can simultaneously accelerate the attainment of frontier capability, thereby bringing forward the onset of the adaptation trap under the market-led scaling scenario. 

Nevertheless, we show that stronger policy support can uniformly reinforce the adaptation pathway %
in the resource-led scaling scenario.

\begin{proposition}
    \label{prop:ext3}
\begin{enumerate}
        \item[(i)] Statement (i) in \cref{prop:2} continues to hold if either (a) $\theta\geq \frac{1+\alpha}{\lambda}\cdot c_fk$ and $V'(k)> \min\{\eta(c_f-p_r)+(p_r-c_r)+(1-b) e_f \xi, p_r-c_r+\bar{s}\}$ or (b) $V'\left(kf(c_f)^{1+\alpha}\right)>p_r-c_r+\bar{s}$. 
        
        \item[(ii)] Otherwise, statement (ii) in \cref{prop:2} holds. The threshold $\bar{d}_c$ decreases in $\bar{s}$, whereas the equilibrium renewable energy investment $y^*$ when $d_0< \bar{d}_c$ is independent of $\bar{s}$. \looseness=-1
        \end{enumerate}
\end{proposition}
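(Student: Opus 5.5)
The plan mirrors the proof of \cref{prop:ext2}, using the resource-led structure of \cref{lem:2} and \cref{prop:2}. First, I would record that the developer's best response in Stage 3 is given by \cref{lem:2} with $c_r$ replaced by $p_r$ wherever renewable energy is priced. The thresholds are therefore $kf(c_f)^{1+\alpha}$ (unchanged, since it depends only on $c_f$) and $kf(p_r)^{1+\alpha}$, and the developer uses all dedicated renewables because $p_r<c_f$. Next I would characterize Stage 2. The utility's revenue is linear in the renewable energy actually used, which equals $\min\{y,kx^*(y)^{1+\alpha}\}$. In the region $y\le kf(c_f)^{1+\alpha}$ of \cref{lem:2}, all installed capacity is used, so the utility's first-order condition is $V'(y)=p_r-c_r+s$. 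Hence the induced capacity is $y(s)=(V')^{-1}(p_r-c_r+s)$, truncated at the point where usage stops tracking $y$. Because $V$ is convex, $y(s)$ is nondecreasing in $s$. The set of implementable capacities is therefore $[y(0),y(\bar s)]$.

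Part (i). Under (b), $V'(kf(c_f)^{1+\alpha})>p_r-c_r+\bar s$, so every feasible $s$ induces $y<kf(c_f)^{1+\alpha}$. \cref{lem:2} then gives $x^*=\min\{f(c_f),1\}$ with strictly positive fossil use (unless $y=k$, which is excluded), so $E>0$ for all $d_0$. Under (a), $\theta\ge\frac{1+\alpha}{\lambda}c_fk$ means $f(c_f)\ge1$. The developer then sits at $x^*=1$ for every $y$, so emissions vanish only at $y=k$. Reaching $y=k$ is either infeasible, when $V'(k)>p_r-c_r+\bar s$, or welfare-dominated by a smaller $y$, when $V'(k)$ exceeds the decentralized marginal social benefit $\eta(c_f-p_r)+(p_r-c_r)+(1-b)e_f\xi$. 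I would establish the latter by the same marginal-welfare comparison as in the proof of \cref{prop:2}(i), now adding the utility's margin $p_r-c_r$ and netting the subsidy out as a transfer.

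Part (ii). When neither condition holds, the carbon-free threshold $kf(c_f)^{1+\alpha}$ is implementable with some $s\le\bar s$. The policymaker's problem is then the main-model welfare comparison restricted to $y\in[y(0),y(\bar s)]$. I would split welfare into two branches: the fossil branch $W_4$ on $y<kf(c_f)^{1+\alpha}$, where $x^*$ is fixed, and the coupling/carbon-free branch. On the carbon-free branch, the adaptation term $\xi b d_0 x^*(y)$ grows linearly in $d_0$ with a larger coefficient than on the fossil branch, because there $x^*(y)\ge f(c_f)$. The difference between the carbon-free maximum and the fossil maximum is therefore increasing in $d_0$, which yields the single-crossing threshold $\bar d_c$ exactly as in \cref{prop:2}(ii).

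Two comparative statics remain. For $d_0<\bar d_c$, the optimum lies on the fossil branch, where $x^*$ is constant in $y$. There, raising $y$ only trades the composition gain against the subsidy and investment cost, and I expect the optimum to be the unsubsidized level $y(0)$ (or an interior point of $W_4$ that is interior to the feasible set), which is independent of $\bar s$. For the threshold itself, increasing $\bar s$ enlarges the feasible set only at its upper end, which lies on the carbon-free branch. This weakly raises the carbon-free maximum while leaving the fossil-branch maximum unchanged, so $\bar d_c$ decreases in $\bar s$.

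The main obstacle is making the independence claim rigorous. It requires showing that the unconstrained fossil-branch optimum is always reachable with $s\le\bar s$ whenever the carbon-free threshold is, which should follow from $y(s)$ being monotone and from the fossil-branch optimum lying below $kf(c_f)^{1+\alpha}$. A second difficulty is correctly netting the subsidy transfers in welfare.
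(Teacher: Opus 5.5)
Your proposal follows essentially the same route as the paper. In Part (i), (b) makes the threshold $kf(c_f)^{1+\alpha}$ unreachable for every $s\le\bar s$, and (a) forces $x^*\equiv 1$, so that $y=k$ is either infeasible or welfare-dominated once the subsidy is netted out as a transfer. In Part (ii), you rerun the single-crossing argument of \cref{prop:2} on the implementable set $[y(0),y(\bar s)]$: $\bar d_c$ falls in $\bar s$ because raising $\bar s$ only enlarges the carbon-free end of that set, and $y^*$ below $\bar d_c$ is unaffected because, once (b) fails, the implementable part of the fossil branch no longer depends on $\bar s$.

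One small correction, which does not affect the independence claim: the fossil-branch optimum is generally not the unsubsidized level $y(0)$. When $y(0)$ is interior, the welfare slope there is $\eta(c_f-p_r)+(1-bf(c_f))e_f\xi>0$, since the subsidy nets out of welfare, so the optimum is the stationary point of that branch or its right endpoint.
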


\noindent \emph{Proof of \cref{prop:ext3}.} To prove that statement (i) in \cref{prop:2} holds, we show that under either of the two conditions identified, only the first case of the developer's optimal capability choice $x^*$ identified in \cref{lem:2} can occur given a feasible incentive level; furthermore, the equilibrium renewable capacity investment $y^*<k$. This holds under condition (b) because it implies that the investment threshold $kf(c_f)^{1+\alpha}$ cannot be achieved by a feasible incentive level. To show the same under condition (a), we follow similar arguments in the first paragraph of the proof of \cref{prop:2}  if $V'(k)> \eta(c_f-p_r)+(p_r-c_r)+(1-b) e_f \xi$. Otherwise, we know that $V'(k)>p_r-c_r+\bar{s}$ must hold, which indicates that there does not exist a feasible incentive $s$ that can induce renewable capacity investment at level $k$, and thus $y^*<k$ must hold as well.  

To prove that statement (ii) in \cref{prop:2} holds when either condition is violated, we follow the proof of  \cref{prop:2} (from the second paragraph onward) by replacing $c_r$ with $p_r$. In particular, notice that because of the $\bar{s}$ upper bound, the definition of the $\bar{d}_c$ threshold should be changed to $\bar{d}_c=\min\{\bar{d}'(y), \forall y\in [kf(c_f)^{1+\alpha} ,(V')^{-1}(p_r-c_r+\bar{s})]\}$, which decreases in $\bar{s}$. However, the equilibrium renewable capacity investment before the adaptation pathway occurs, that is, the optimal $y^*$ that occurs on $W_1(y)$ defined in the proof of \cref{prop:2}, does not change as long as $V'\left(kf(c_f)^{1+\alpha}\right)\leq p_r-c_r+\bar{s}$, so that the entire  $W_1(y)$ piece is preserved in the extended model. \hfill \emph{Q.E.D.}

\subsection{Intermittency and Renewable Energy Utilization}
\label{subsec:utilization}

The main model assumes that the renewable capacity investment is the same as the usable renewable energy level. In practice, renewable
output is stochastic: solar produces nothing at night, and wind may exhibit multi-day lulls. This extension captures 
availability explicitly by modeling renewable output per unit of installed capacity as a random variable $A\in[0,1]$. The renewable energy used by the AI developer is then $\min\{Ay,\, kx^{1+\alpha}\}$, so the expected clean share of the developer's energy consumption equals 
\[
\tilde{\beta}(x,y)\doteq \mathbb{E}_A\left[\min\left\{\frac{Ay}{kx^{1+\alpha}},1\right\}\right]
\]
Replacing $\beta(x,y)$ in the main model by $\tilde{\beta}(x,y)$ leads to the extended model.

We focus on a simple two-point distribution in which $A=1$ with probability $p\in (0,1)$ and $A=0$ with probability $1-p$. This stylized distribution captures the key availability contrast between full and zero generation and offers tractability. We first characterize the developer's optimal capability choice and emission implications under intermittency. 

\begin{proposition}
\label{lem:intermittency}
    \begin{enumerate}
       \item[(i)] The $x^*(y)$ characterizations in \cref{lem:1,lem:2} continue to hold, with $c_r$ replaced by the blended cost $\tilde{c}_r\doteq pc_r+(1-p)c_f$ in their thresholds.
       \item[(ii)] In the market-led scaling scenario, $E(x^*(y),y)=0$ only if $y<y_1$. $E(x^*(y),y)$ increases in $y$ on $[y_1,y_2)$ and decreases in $y$ on $[y_2,k]$. However, for any $y'\in [y_1,y_2)$ and $y''\in [y_2,k]$, $E(x^*(y'),y')<E(x^*(y''),y'')$. 
        \item[(iii)] In the resource-led scaling scenario, $E(x^*(y),y)$ decreases in $y$ if $y<kf(c_f)^{1+\alpha}$, increases in $y$ on $[kf(c_f)^{1+\alpha}, kf(\tilde{c}_r)^{1+\alpha}]$, and is constant thereafter. Furthermore, when $\frac{1+\alpha}{\lambda}<\frac{c_f}{c_r}$ and $f(c_r)<1$, there exists a threshold $\bar{p}$ such that $E(x^*(k),k)>E(x^*(0),0)$ if $p\in (0,\bar{p})$. 
    \end{enumerate} 
\end{proposition}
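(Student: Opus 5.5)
The plan is to show that, under the two-point availability distribution, the developer's problem is the main-model problem with $c_r$ replaced by $\tilde c_r$. Once that holds, (i) follows from \cref{lem:1,lem:2}, and (ii)–(iii) reduce to evaluating an explicit emissions formula along each branch of $x^*(y)$.

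\emph{Step 1 (part (i)).} With $A\in\{0,1\}$ we have $\tilde\beta(x,y)=p\,\beta(x,y)$, where $\beta$ is as in \cref{eq:beta_x_y}. The expected energy cost $\big(\tilde\beta c_r+(1-\tilde\beta)c_f\big)kx^{1+\alpha}$ then takes two forms.
\begin{itemize}
\item If $kx^{1+\alpha}\le y$, it equals $\tilde c_r\,kx^{1+\alpha}$.
\item If $kx^{1+\alpha}>y$, it equals $c_f kx^{1+\alpha}-p(c_f-c_r)y=c_fkx^{1+\alpha}-(c_f-\tilde c_r)y$.
\end{itemize}
These are exactly the two pieces of the main-model cost with $c_r$ replaced by $\tilde c_r$, and $\tilde c_r<c_f$ still holds. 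Hence $\Pi$ has the same functional form, and the proofs of \cref{lem:1,lem:2} apply verbatim with thresholds built from $\tilde c_r$.

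\emph{Step 2 (emissions formula).} Expected emissions are
\[
E(x,y)=e_f\big(kx^{1+\alpha}-p\min\{y,kx^{1+\alpha}\}\big).
\]
For the market-led case I evaluate this on the three branches of \cref{lem:1}.
\begin{itemize}
\item For $y<y_1$, $x^*=0$, so $E=0$.
\item On the coupling branch $[y_1,y_2)$, $kx^{*1+\alpha}=y$, so $E=e_f(1-p)y$. This is strictly increasing and positive, provided $y_1>0$, which I check from the threshold definition: activity requires $\theta>\tilde c_r k\cdot(\cdot)$, and $y=0$ gives $x^*=0$.
\item On the frontier branch $[y_2,k]$, $E=e_f(k-py)$, which is decreasing and at least $e_f(1-p)k>0$.
\end{itemize}
Together these give ``$E=0$ only if $y<y_1$''. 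The cross-branch comparison follows from $e_f(1-p)y'<e_f(1-p)k\le e_f(k-py'')$.

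For the resource-led case I use the three branches of \cref{lem:2}, with the upper threshold $kf(\tilde c_r)^{1+\alpha}$.
\begin{itemize}
\item For $y<kf(c_f)^{1+\alpha}$, $x^*$ is fixed and $E=e_f(kx^{*1+\alpha}-py)$ is decreasing.
\item On the middle branch, $E=e_f(1-p)y$ is increasing.
\item Beyond $kf(\tilde c_r)^{1+\alpha}$, $x^*$ is fixed with renewables not binding, so $E=e_f(1-p)k\min\{f(\tilde c_r),1\}^{1+\alpha}$ is constant.
\end{itemize}

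\emph{Step 3 (the threshold $\bar p$; main obstacle).} Since $\tilde c_r\ge c_r$ and $f$ is decreasing, $f(\tilde c_r)\le f(c_r)<1$, and likewise $f(c_f)<1$. So both endpoints are interior. Set $\gamma=(1+\alpha)/(1+\alpha-\lambda)$. The ratio $E(x^*(k),k)/E(x^*(0),0)$ equals $(1-p)(c_f/\tilde c_r)^{\gamma}$. Define
\[
h(p)=\ln(1-p)-\gamma\ln\!\big(1-p(1-c_r/c_f)\big).
\]
Then $h(0)=0$ and $h'(0)=-1+\gamma(1-c_r/c_f)$. This derivative is positive if and only if $\lambda c_f>(1+\alpha)c_r$, i.e., $\frac{1+\alpha}{\lambda}<\frac{c_f}{c_r}$, which is the stated condition. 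So $h>0$ on a right neighborhood of $0$.

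I would define $\bar p$ as the first zero of $h$ in $(0,1)$. Such a zero exists because $h\to-\infty$ as $p\to1$, since $1-p(1-c_r/c_f)\to c_r/c_f>0$. The delicate part is checking that $h$ has no interior sign changes before $\bar p$. Defining $\bar p$ as the first root sidesteps this, because only existence of the interval $(0,\bar p)$ is claimed. It also remains to confirm that the case split in \cref{lem:2} (using $f(c_f)$ at $y=0$ and $f(\tilde c_r)$ at $y=k$) is correct in the limit $p\to0$.
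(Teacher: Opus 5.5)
Your proposal is correct and follows the paper's proof essentially step by step. Both use the identity $\tilde\beta=p\beta$ to turn the expected cost into the main-model cost with $\tilde c_r$. Both then write the emissions on each branch as $e_f(1-p)y$, $e_f(k-py)$ or $e_f\bigl(kf(c_f)^{1+\alpha}-py\bigr)$. Finally, both compare $E(x^*(k),k)=e_f(1-p)kf(\tilde c_r)^{1+\alpha}$ with $E(x^*(0),0)=e_fkf(c_f)^{1+\alpha}$ as functions of $p$.

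The only difference is in that last step. The paper shows $E_4'(p)>0$ if and only if $p<p_m=\frac{c_f\lambda-c_r(1+\alpha)}{(c_f-c_r)\lambda}$, so the crossing $\bar p\in(p_m,1)$ is unique. You instead define $\bar p$ as the first root of $h$, which already suffices for the one-directional claim. Consequently, neither the ``delicate'' sign-change issue nor the $p\to0$ concern you flag is an actual gap: for every fixed $p\in(0,1)$, $f(\tilde c_r)\le f(c_r)<1$ places $y=k$ in the third branch.
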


\noindent \emph{Proof of \cref{lem:intermittency}.} To prove (i), note that $\tilde{\beta}(x,y)=p\cdot \min\left\{\frac{y}{kx^{1+\alpha}},1\right\} = p\cdot \beta(x,y)$, so the developer's expected energy cost equals $kx^{1+\alpha}\left[c_f-(c_f-c_r)\,p\cdot \beta(x,y)\right]=kx^{1+\alpha}\left[c_f-(c_f-(pc_r+(1-p)c_f))\beta(x,y)\right]$. This is the main model's energy cost with  $c_r$ replaced by $\tilde{c}_r\doteq pc_r+(1-p)c_f$. Because $\tilde{c}_r<c_f$ holds for $p\in(0,1)$, \cref{lem:1,lem:2} apply under this substitution.

To prove (ii), it can be calculated that under market-led scaling, $E(x^*(y),y)=E_1(y)\doteq e_f (1 - p) y$, which increases in $y$ when $y\in [y_1,y_2)$, and $E(x^*(y),y)=E_2(y)\doteq e_f k \left(1-\frac{p y}{k}\right)$, which decreases in $y$ when $y\in [y_2,k]$. Furthermore, it can be shown that for any $y'\in [y_1,y_2)$ and $y''\in [y_2,k]$, $E(x^*(y'),y')=E_1(y')<E_1(k)=E_2(k)\leq E_2(y'')= E(x^*(y''),y'')$. 
Similarly, to prove (iii), it can be calculated that under resource-led scaling, $E(x^*(y),y)=E_3(y)\doteq e_f \left(k \left(\min\{f(c_f),1\}\right)^{\alpha +1}-p y\right)$, which decreases in $y$ when $y<kf(c_f)^{1+\alpha}$. When $y\geq kf(c_f)^{1+\alpha}$, $E(x^*(y),y)=E_1(y)$ defined above if $y\leq  kf(\tilde{c}_r)^{1+\alpha}$ and stays constant at $E(x^*(y),y)=E_1(kf(\tilde{c}_r)^{1+\alpha})$. Finally, when $f(c_r)<1$, $f(c_f)<1$ must hold, and thus we have $E(x^*(0),0)=E_3(0)\doteq e_f\cdot k f(c_f)^{\alpha +1}$. Moreover, $f(\tilde{c}_r)<1$ also holds, so $E(x^*(k),k)=E_1(kf(\tilde{c}_r)^{1+\alpha})$. Denote $E_1(kf(\tilde{c}_r)^{1+\alpha})$ as a function $E_4(p)$. It can be calculated that $E_4(p)=E_3(0)$ when $p=0$. Additionally, %
when $c_r(1+\alpha)<c_f\lambda$,  $E_4'(p)>0$ if and only if $p<p_m\doteq \frac{c_f\lambda-c_r(1+\alpha)}{(c_f-c_r)\lambda}<1$, indicating that $E_4(p)$ increases on $(0,p_m)$ and decreases on $(p_m,1)$. Furthermore, $\lim_{p\uparrow 1}E_4(p)=0<E_4(0)$.
Hence, there must exist a threshold $\bar{p}\in (p_m,1)$ such that $E_4(p)>E_4(0)=E_3(0)$ if $p\in (0,\bar{p})$. This completes the proof of \cref{lem:intermittency}. \hfill \emph{Q.E.D.}

\cref{lem:intermittency} shows that intermittency tilts both scaling regimes toward carbon-intensive outcomes. Under market-led scaling, full decarbonization is unattainable whenever the developer is active. Even in the coupling zone in which renewable availability disciplines the developer's capability choice, renewable capacity fully covers AI's energy demand only when the resource is available; fossil energy is required otherwise. The emission-intensifying potential of renewable investment nevertheless remains visible. Once renewable capacity exceeds the threshold that induces frontier capability, fossil energy is used regardless of realized renewable output. Although additional renewable capacity reduces emissions within this frontier zone, the lowest emissions attainable there remain above the highest emissions generated in the coupling zone.

Under resource-led scaling, intermittency weakens, and in some cases reverses, the emission-mitigating role of renewable capacity. When compute requirements do not scale sufficiently faster than willingness to pay and intermittency is severe, greater renewable capacity can deepen fossil reliance and produce higher emissions than in the absence of renewables. The mechanism is that renewable capacity expands capability and can raise fossil use enough to offset, or even dominate, the clean-energy contribution. In that case, renewable expansion replicates its role under market-led scaling, functioning primarily as a capability-enabling channel rather than an emissions-reducing one. The following corollary characterizes when this resemblance arises for a given degree of intermittency.

\begin{corollary}
\label{cor:intermittency}
    In the resource-led scaling scenario, suppose $\frac{1+\alpha}{\lambda}<\frac{c_f}{c_r}$ and $f(c_r)<1$, as in \cref{lem:intermittency}(iii). Then for any availability probability $p\in (0,1)$, $E(x^*(k),k)>E(x^*(0),0)$ if $\frac{1+\alpha}{\lambda}<\frac{c_f(1-p)+ c_r p}{c_r}$.
\end{corollary}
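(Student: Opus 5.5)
The plan is to reduce the corollary to the unimodality argument already used in the proof of \cref{lem:intermittency}(iii). The key observation is that the hypothesis $\frac{1+\alpha}{\lambda}<\frac{c_f(1-p)+c_rp}{c_r}$ is a rewriting of $p<p_m$, where $p_m=\frac{c_f\lambda-c_r(1+\alpha)}{(c_f-c_r)\lambda}$. Once that is checked, strict monotonicity of $E_4$ on $(0,p_m)$ gives the comparison directly. Unlike the existence statement for $\bar p$, this requires no limiting argument.

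First I will fix the two emission levels, as in the proof of \cref{lem:intermittency}(iii). Since $f$ is decreasing and $c_r<\tilde c_r<c_f$, the assumption $f(c_r)<1$ gives $f(\tilde c_r)<1$ and $f(c_f)<1$. Hence, by \cref{lem:intermittency}(i) together with \cref{lem:2},
\[
E(x^*(0),0)=e_fkf(c_f)^{1+\alpha},\qquad E(x^*(k),k)=e_f(1-p)kf(\tilde c_r)^{1+\alpha}\doteq E_4(p).
\]
Because $\tilde c_r=c_f$ at $p=0$, we have $E_4(0)=E(x^*(0),0)$. Note that $E(x^*(0),0)$ does not depend on $p$, since at $y=0$ renewables play no role.

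Next I will verify the sign of $E_4'$ explicitly rather than only cite it. Write $\gamma\doteq\frac{1+\alpha}{1+\alpha-\lambda}>1$, so that $f(c)^{1+\alpha}\propto c^{-\gamma}$. Using $d\tilde c_r/dp=-(c_f-c_r)$,
\[
\frac{d\ln E_4}{dp}=-\frac{1}{1-p}+\gamma\,\frac{c_f-c_r}{\tilde c_r}.
\]
This is positive if and only if $\gamma(c_f-c_r)(1-p)>c_f-p(c_f-c_r)$. Rearranging with $\gamma-1=\frac{\lambda}{1+\alpha-\lambda}$, this condition is equivalent to $p<p_m$. The assumption $\frac{1+\alpha}{\lambda}<\frac{c_f}{c_r}$ guarantees $p_m>0$, so $E_4$ is strictly increasing on $[0,p_m)$.

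Finally, I will show that the corollary's hypothesis is exactly $p<p_m$. The inequality $\frac{1+\alpha}{\lambda}<\frac{\tilde c_r}{c_r}$ is $c_r(1+\alpha)<\lambda\bigl(c_f-p(c_f-c_r)\bigr)$, which is $p(c_f-c_r)\lambda<c_f\lambda-c_r(1+\alpha)$, that is, $p<p_m$. Then $E_4(p)>E_4(0)$, which is $E(x^*(k),k)>E(x^*(0),0)$.

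The only delicate point is the algebra in the derivative step: sign conventions for $\tilde c_r$ and the exponent $\gamma$ must be tracked correctly. The rest is a direct substitution into \cref{lem:intermittency}.
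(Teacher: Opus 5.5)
Your proposal is correct and follows the paper's own proof. Both rewrite the hypothesis as $p<p_m=\frac{c_f\lambda-c_r(1+\alpha)}{(c_f-c_r)\lambda}$, then use the strict increase of $E_4$ on $(0,p_m)$ from the proof of \cref{lem:intermittency}(iii) to conclude $E_4(p)>E_4(0)=E(x^*(0),0)$; your explicit log-derivative check of the sign of $E_4'$, which the paper only asserts, is also correct.
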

\noindent \emph{Proof of \cref{cor:intermittency}.} This proof builds upon the proof of \cref{lem:intermittency}(iii). Specifically, we can show that the condition $p<\frac{c_f\lambda-c_r(1+\alpha)}{(c_f-c_r)\lambda}$ is equivalent to $\frac{1+\alpha}{\lambda}<\frac{c_f(1-p)+ c_r p}{c_r}$. Thus, $E_4$ is an increasing function from 0 to $p$, and $E_4(p)>E_3(0)$ is guaranteed.  \hfill \emph{Q.E.D.}

\cref{cor:intermittency} suggests that intermittency expands the regime in which renewable expansion may intensify emissions beyond the original market-led scaling scenario in the main analysis. Specifically, without intermittency, market-led and resource-led scaling are defined based on the comparison between $\lambda$ and $1+\alpha$ in the main analysis. The presence of intermittency attenuates the disciplining power of energy cost and shifts that comparison to $\lambda$ and $(1+\alpha)\cdot \frac{c_r}{c_rp+c_f(1-p)}$, which is smaller than $1+\alpha$ for any $p<1$. 

A direct implication of the above results is that intermittency strengthens the adaptation-trap mechanism but weakens the adaptation-pathway mechanism. We discuss this insight in detail next. First, we characterize the policymaker's renewable capacity decision. %

\begin{proposition}
\label{prop:intermittencycapacity1}
    In the market-led scaling scenario, the optimal renewable energy investment decision $y^*$ structurally follows \cref{prop:1}, with $c_r$ replaced by $c_rp+c_f(1-p)$ and $e_f$ replaced by $e_fp$ in its conditions. Specifically, $y^*<y_2$ if condition (a) in \cref{prop:1}(i) holds under this substitution; $y^*=k$ if condition (b) holds under this substitution; and otherwise there exists a threshold in $d_0$ beyond which $y^*\geq y_2$, paralleling \cref{prop:1}(ii).  
\end{proposition}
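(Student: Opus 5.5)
The plan is to reduce the intermittent model to the main model by a change of parameters, and then rerun the argument behind \cref{prop:1} with the substituted primitives.

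\textbf{Step 1: rewrite cost and emissions.} Under the two-point distribution, $\tilde\beta(x,y)=p\,\beta(x,y)$. By \cref{lem:intermittency}(i), the developer's energy cost is then the main-model cost with $c_r$ replaced by $\tilde c_r=pc_r+(1-p)c_f$. Hence $x^*(y)$ is given by \cref{lem:1} with thresholds $y_1,y_2$ computed under $\tilde c_r$. For emissions,
\[
E(x,y)=e_f\bigl(1-p\beta(x,y)\bigr)kx^{1+\alpha}=e_fkx^{1+\alpha}-e_fp\,\beta(x,y)kx^{1+\alpha}.
\]
Here $\beta(x,y)kx^{1+\alpha}=\min\{y,kx^{1+\alpha}\}$. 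So each unit of dependable renewable capacity actually displaces $e_fp$ units of emissions.

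\textbf{Step 2: write welfare on each branch.} I would write $W(y)$ on the three branches of \cref{lem:1}.
\begin{itemize}
\item On $[0,y_1)$: $W=-\xi d_0-V(y)$, which is maximized at $y=0$.
\item On the coupling branch $[y_1,y_2)$: with $x=(y/k)^{1/(1+\alpha)}$, welfare is
\[
\eta\bigl[\theta x^\lambda-\tilde c_r y\bigr]-\xi(1-bx)\bigl(d_0+e_f(1-p)y\bigr)-V(y).
\]
\item On the frontier branch $[y_2,k]$: with $x=1$, welfare is
\[
W_2(y)=\eta\bigl[\theta-c_fk+(c_f-c_r)p\,y\bigr]-\xi(1-b)\bigl(d_0+e_fk-e_fp\,y\bigr)-V(y).
\]
\end{itemize}
The frontier branch is concave. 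Its marginal value of $y$ is $\eta(c_f-\tilde c_r)+(1-b)e_fp\,\xi-V'(y)$, where I use $c_f-\tilde c_r=p(c_f-c_r)$. This has the same form as in the main model, with $(c_r,e_f)$ replaced by $(\tilde c_r,e_fp)$.

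\textbf{Step 3: the three cases.}
\begin{itemize}
\item \emph{Condition (b) under the substitution.} If $V'(k)\le\eta(c_f-\tilde c_r)+(1-b)e_fp\xi$, then $W_2$ is nondecreasing on $[y_2,k]$ and the frontier candidate is $y=k$. Following the first paragraph of the proof of \cref{prop:1}, I would compare $W_2(k)$ with the other branches to conclude $y^*=k$.
\item \emph{Condition (a) under the substitution.} If $\theta\le\max\{\frac{1+\alpha}{\lambda}c_fk,\tilde c_rk\}$, then, as in the second paragraph of that proof, the jump to $x=1$ is unattractive. Either $y_2=k$ (the coupling branch extends to $k$) or the developer is inactive. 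Either way $y^*<y_2$, or the frontier branch is degenerate.
\item \emph{Otherwise.} $W_2$ has an interior maximizer $y_2'=\min\{\max\{y_2,(V')^{-1}(\cdot)\},k\}<k$ that does not depend on $d_0$. The coupling branch and the zero branch do depend on $d_0$.
\end{itemize}

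\textbf{Step 4 (main obstacle): the threshold in $d_0$.} I need to show that the frontier branch eventually dominates as $d_0$ grows. The $d_0$-dependence of each branch is $-\xi(1-bx)d_0$. On the frontier branch $x=1$, so the slope in $d_0$ is $-\xi(1-b)$. On the other branches $x<1$, so the slope is strictly more negative. Consequently $\max W_2-\max(\text{other branches})$ is increasing in $d_0$.

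To get a single crossing, I would use an envelope argument: the maximum over the coupling branch is convex in $d_0$, with derivative $-\xi(1-bx^*)$. So the difference is increasing and unbounded, which gives a threshold beyond which $y^*\ge y_2$.

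The delicate point is the endpoint $y_2$ itself, where $x^*$ jumps to $1$ and the branches touch. I would handle it exactly as in the main proof, replacing $y_2'$ by its substituted version.
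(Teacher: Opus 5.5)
Your reduction is the paper's: substitute $\tilde c_r=pc_r+(1-p)c_f$ into the developer's problem via \cref{lem:intermittency}(i), note that on the frontier branch each unit of capacity displaces only $e_fp$ of emissions, and rerun the proof of \cref{prop:1}.

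Your branch-by-branch welfare is correct. It is also more careful than the paper's one-line proof. You record that the coupling branch now carries damages $\xi(1-bx)e_f(1-p)y$, which a pure parameter substitution would set to zero. You also correctly note that this term does not involve $d_0$, so the slope comparison in Step 4 survives.

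In Step 4, state explicitly that failure of (a) gives $y_2<k$. Hence $x^*(y)\le (y_2/k)^{1/(1+\alpha)}<1$ off the frontier branch. This uniform bound keeps the slope gap $\xi b(1-x^*)$ away from zero, and that is what delivers unboundedness and a finite threshold.

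The step that fails is your treatment of condition (b). The first paragraph of the proof of \cref{prop:1} does not compare $W_2(k)$ with the other branches. It only shows $W_2'>0$ on $[y_2,k)$, so $y^*\notin[y_2,k)$, i.e., $y^*\in[0,y_2)\cup\{k\}$. In the main model this suffices for $E=0$, because both alternatives are carbon-free.

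The comparison you propose cannot succeed in general. Condition (b) restricts only the marginal cost $V'(k)$, not the level $V(k)$ or the value of the inactive and coupling branches. Two examples:
\begin{itemize}
\item If $\tilde c_r/c_f\ge(1+\alpha)/\lambda$ and $\theta<\tilde c_rk$, then $x^*\equiv0$ and $W(y)=-\xi d_0-V(y)$. So $y^*=0$ however cheap the last unit of capacity is.
\item Even when (a) fails and $x^*(0)=0$, a large $e_f\xi$ lets (b) hold while $V(k)>\eta(\theta-\tilde c_rk)$. This gives $W(0)=0>W(k)$ at $d_0=0$.
\end{itemize}

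What you can prove under (b) is the dichotomy above. That is also all the paper's one-line reduction inherits from \cref{prop:1}, so the clause ``$y^*=k$'' has to be read that way. If (a) also fails, your Step 4 argument applies verbatim with frontier maximizer $k$, which is still independent of $d_0$. It then yields $y^*=k$ for all $d_0$ beyond a threshold. An unconditional $y^*=k$ needs level conditions of the type $g\le\min\{g_3^0,g_4^0\}$ in the proof of \cref{prop:EC1}.
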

\noindent \emph{Proof of \cref{prop:intermittencycapacity1}.} The characterization follows directly from the substitution $c_r\mapsto\tilde{c}_r$ established in the proof of \cref{lem:intermittency}(i), together with the calculation of expected AI emissions in the frontier zone: under intermittency, each unit of renewable capacity reduces emissions by $e_fp$ instead of $e_f$. \hfill \emph{Q.E.D.}

\cref{prop:intermittencycapacity1} shows that intermittency reinforces the adaptation trap:  condition (b) in \cref{prop:1}(i) becomes more stringent when $c_r$ is replaced by the higher blended cost $\tilde{c}_r$ and $e_f$ is reduced to $e_fp$. Consequently, a sufficiently high baseline emissions stock $d_0$ (the condition in \cref{prop:1}) is more likely to drive the equilibrium into the frontier zone ($y^*\geq y_2$), in which emissions are higher than in the coupling zone according to \cref{lem:intermittency}(ii). This transition causes $d_0$ to deteriorate more rapidly and makes the trap increasingly difficult to escape. Intermittency further extends the risk of an adaptation trap to the coupling zone: even an equilibrium in this zone can worsen $d_0$ and eventually move the equilibrium into the higher-emission frontier zone.

Whether greater intermittency, represented by a lower $p$, accelerates this transition is generally ambiguous because two opposing forces are at work. On the one hand, intermittency erodes the cost advantage of renewables and weakens their ability to relax scaling constraints, thereby delaying entry into the frontier zone. On the other hand, the same weakening may induce the policymaker to invest more aggressively in renewable capacity when the incentive to support frontier AI development is sufficiently strong. Numerical studies suggest that the former/latter force is likely to dominate when $p$ is large/small. 

We derive a similar result for the resource-led scaling scenario to show that the optimal capacity investment decision structurally follows \cref{prop:2}. Namely, the equilibrium enters the coupling zone in which $y^*\geq kf(c_f)^{1+\alpha}$. However, because the coupling zone may lead to higher emissions due to intermittency according to
\cref{lem:intermittency}(iii) and \cref{cor:intermittency}, this transition may amount to an adaptation trap rather than a pathway to alleviate carbon dependence.

\textbf{Renewable Energy Firming:} In practice, renewable intermittency can be mitigated through firming measures such as energy storage and demand response. We capture these measures through a stylized firming technology that converts the two-point renewable output into a constant, dependable supply of $\zeta\leq p$ units of energy per unit of installed capacity. (We assume $\zeta\leq p$ because firming cannot deliver more energy on average than the resource generates.)
We refer to $\zeta$ as the firmed utilization rate. It summarizes the effectiveness of the firming system, including storage capacity, round-trip efficiency, and the extent to which demand can be shifted to coincide with renewable availability.

Given $\zeta$, the share of the developer's energy demand met by renewables equals $\hat{\beta}(x,y)=\min\left\{\frac{\zeta y}{kx^{1+\alpha}},1\right\}$. Replacing $\beta(x,y)$ by $\hat{\beta}(x,y)$ leads to the extended model with firming. 
We solve this model and show that our main results continue to hold. Proposition \ref{prop:ext1} formalizes the findings.  

\begin{proposition}
\label{prop:ext1}
Assume $p=1$. 
 \begin{enumerate}
        \item[(i)]Statement (i) of \cref{prop:1} and \cref{prop:2} continues to hold once $V'(k)$ is replaced by $\frac{1}{\zeta}\cdot V'(\frac{k}{\zeta})$. 

  \item[(ii)]  Both the adaptation trap threshold $\bar{d}_m$ in \cref{prop:1} and the adaptation pathway threshold $\bar{d}_c$ in \cref{prop:2} decrease in $\zeta$. 
   \end{enumerate}
\end{proposition}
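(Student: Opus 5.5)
The approach is a change of variables. Define the effective (firmed) renewable supply $\tilde y\doteq\zeta y$. Then $\hat\beta(x,y)=\min\{\tilde y/(kx^{1+\alpha}),1\}=\beta(x,\tilde y)$. As a result, the developer's profit $\Pi$, the emissions $E$, and the damages $D$ under firming coincide with their main-model counterparts evaluated at $\tilde y$.

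Restricting attention to $\tilde y\in[0,k]$, which means $y\in[0,k/\zeta]$, the policymaker's problem becomes
\[
\max_{\tilde y\in[0,k]}\;\eta\Pi\bigl(x^*(\tilde y),\tilde y\bigr)-\xi D\bigl(x^*(\tilde y),\tilde y\bigr)-\tilde V(\tilde y),\qquad \tilde V(\tilde y)\doteq V(\tilde y/\zeta).
\]
Here \cref{lem:1,lem:2} apply verbatim in $\tilde y$, with the same thresholds $y_1,y_2$ and $kf(c)^{1+\alpha}$. The function $\tilde V$ is increasing and convex with $\tilde V(0)=0$, so it satisfies every maintained assumption on $V$. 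The proofs of \cref{prop:1,prop:2} therefore go through unchanged with $V$ replaced by $\tilde V$. Reading off condition (b) of \cref{prop:1}(i), and the corresponding condition in \cref{prop:2}(i), the quantity $V'(k)$ becomes $\tilde V'(k)=\frac{1}{\zeta}V'(k/\zeta)$. This proves part (i). Condition (a) is unaffected, since it concerns only the developer.

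For part (ii), I would use the characterizations of the thresholds from the proofs of \cref{prop:1,prop:2}. There $\bar d_m$ is defined by comparing welfare on the frontier branch (the function $W_2$ evaluated at $y_2$ or at the interior stationary point $y_2'$) with welfare on the carbon-free branches. The threshold $\bar d_c$ is the infimum over $y\ge kf(c_f)^{1+\alpha}$ of the $d_0$ at which the coupling branch $W$ dominates the fossil branch $W_1$.

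The key monotonicity is pointwise. For fixed $\tilde y$, $\tilde V(\tilde y)=V(\tilde y/\zeta)$ is decreasing in $\zeta$, and so is $\tilde V'(\tilde y)$ by convexity. Hence, at any fixed effective capacity, welfare rises by the investment-cost saving $V(\tilde y/\zeta)$, and this saving is larger at higher $\tilde y$. Since $V$ is increasing, $V(\tilde y/\zeta)-V(\tilde y'/\zeta)$ for $\tilde y>\tilde y'$ is decreasing in $\zeta$.

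For $\bar d_m$, the branch that crosses $d_0$ is the frontier branch, and it uses weakly more effective capacity ($\tilde y\ge y_2$) than the coupling or inactive branches ($\tilde y<y_2$). It therefore gains relatively more as $\zeta$ rises, so the indifference point in $d_0$ falls. For $\bar d_c$, the pathway branch likewise requires more capacity ($\tilde y\ge kf(c_f)^{1+\alpha}$) than the fossil branch $W_1$, whose optimum lies below that level. So the same argument shows that $\bar d_c$ decreases in $\zeta$.

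The main obstacle is making the threshold comparison rigorous. Each branch's optimizer moves with $\zeta$, and for $\bar d_m$ the relevant frontier point $y_2'$ itself shifts. I would handle this with an envelope-type argument. Writing each branch value as $\max_{\tilde y\in \text{branch}}[G_b(\tilde y,d_0)-V(\tilde y/\zeta)]$, its derivative in $\zeta$ is $V'(\tilde y_b^*/\zeta)\tilde y_b^*/\zeta^2$. This expression is increasing in $\tilde y_b^*$, and the frontier or pathway optimizer exceeds the competing branch's optimizer, so the gap between branches increases in $\zeta$. The welfare gap is decreasing in $d_0$ for the carbon-free branch relative to the high-capability branch, which is the ordering already established in \cref{prop:1,prop:2}. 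Combining these two facts then yields monotonicity of the crossing threshold. Corner optima such as $\tilde y=y_2$ need a one-sided version of the envelope argument, and the condition $p=1$ ensures there are no intermittency terms.
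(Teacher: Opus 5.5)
Your proposal is correct and follows essentially the same route as the paper. The substitution $z=\zeta y$ with $\hat V(z)=V(z/\zeta)$ gives (i). Part (ii) rests on $\partial_\zeta\bigl[V(z/\zeta)-V(z_2'/\zeta)\bigr]=\zeta^{-2}\bigl(z_2'V'(z_2'/\zeta)-zV'(z/\zeta)\bigr)>0$ for $z<z_2$, and on its analogue against $W_1$ for $\bar d_c$, fed through the threshold definitions that are linear in $d_0$. Your envelope remark makes explicit what the paper does implicitly when it holds $z_2'$ fixed. One correction: the key monotonicity needs $u\mapsto uV'(u)$ to be increasing, which requires convexity of $V$, not merely that $V$ is increasing.
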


\noindent \emph{Proof of \cref{prop:ext1}.}
The renewables share function with firming $\hat{\beta}(x,y)$ is equivalent to $\beta(x,\zeta y)$. Hence, by treating $\zeta y$ as one variable $z$, the modified model is equivalent to the base model (with decision variables $\{x,z\}$) except that the renewable investment cost function $V(y)$ should be replaced by $\hat{V}(z)\doteq V(\frac{z}{\zeta})$. Hence, the proofs of \cref{prop:1} and \cref{prop:2} continue to apply to the extended model with the decision variable $y$ replaced by $z$, the corresponding threshold notation and expressions adjusted accordingly, and $V(y)$ changed to $\hat{V}(z)$. In particular, we change $V'(k)$ to $\hat{V}'(k)=\partial V(\frac{z}{\zeta})/\partial z|_{z=k}=\frac{1}{\zeta} \cdot V'(\frac{k}{\zeta})$. This completes the proof of \cref{prop:ext1}(i).

To prove \cref{prop:ext1}(ii), we first analyze the sensitivity of the adaptation trap threshold $\bar{d}_m$ in \cref{prop:1}. A key finding is that $\hat{V}(z)-\hat{V}(z'_2)=V(\frac{z}{\zeta})-V(\frac{z'_2}{\zeta})$ increases in $\zeta$ for all $z<z_2$ ($z'_2$ and $z_2$ defined in the same way as $y'_2$ and $y_2$ in the proof of \cref{prop:1}). Specifically, given the convexity of the $V$ function, it can be calculated that $\partial \left(V(\frac{z}{\zeta})-V(\frac{z'_2}{\zeta})\right)/\partial \zeta= \frac{1}{\zeta^2}\left(z'_2V'(\frac{z'_2}{\zeta})-zV'(\frac{z}{\zeta})\right)>0$. Therefore, $\Delta W(z)=W(z)-W_2(z_2')$, which is the counterpart of the $\Delta W(y)$ function defined in the proof of \cref{prop:1} and equals a term independent of $\zeta$ minus $\hat{V}(z)-\hat{V}(z'_2)$, decreases in $\zeta$. Consequently, the $\bar{d}''(z)$ threshold, which is the counterpart of $\bar{d}''(y)$ in the proof of \cref{prop:1},  decreases in $\zeta$ as well, and so does $\bar{d}_m$, which is the supremum of $\bar{d}''(z)$. Based on similar arguments, we show that $\Delta W(z)$, as the counterpart of the $\Delta W(y)$ function defined in the proof of \cref{prop:2}, increases in $\zeta$. Hence, the adaptation pathway threshold $\bar{d}_c$ in \cref{prop:2} decreases in $\zeta$ according to how it is defined in the proof of \cref{prop:2}.  \hfill \emph{Q.E.D.}

\cref{prop:ext1}(i) suggests that although firming can restore the carbon-free outcome in the coupling zone, it is less likely to happen than in the main model (which is a special case with $\zeta=1$), because imperfect energy utilization increases the marginal investment cost from $V'(k)$ to $\hat{V}'(k)$. In fact, comparing the coupling condition with and without firming (presented in \cref{prop:ext1}(i) and \cref{prop:intermittencycapacity1}, respectively) shows that coupling tends to be more challenging to achieve with firming due to the convexity of the investment cost. The trade-off is that firming mitigates emissions once in the coupling zone.   Taken together, \cref{lem:intermittency}(ii) and \cref{prop:ext1} point to a common conclusion: without firming, full decarbonization is infeasible; with firming (compared to a dependable utilization rate, $p=1$, as a deterministic benchmark), it must clear a strictly higher marginal-cost hurdle.

Nevertheless, the moderating effect of the utilization rate $\zeta$ could differ across the market-led and resource-led scaling regimes. Proposition \ref{prop:ext1} indicates that a higher $\zeta$ uniformly facilitates the adaptation pathway in the resource-led scaling regime: it not only enlarges the parametric region in which an adaptation pathway exists (by decreasing $\frac{1}{\zeta}V'(\frac{k}{\zeta})$ so the condition in \cref{prop:2}(i) is less likely to hold), but also accelerates the realization of such a pathway (by reducing the threshold $\bar{d}_c$). However, a higher $\zeta$ may not be universally beneficial in the market-led scaling regime. Notably, although a sufficiently large $\zeta$ reduces the likelihood of an adaptation trap (by making the condition in \cref{prop:1}(i) more likely to hold), when $\zeta$ is not large enough to eliminate the possibility of a trap altogether, an increase in $\zeta$ could conversely speed up the emergence of the trap (by reducing the threshold $\bar{d}_m$). Although directionally distinct, the effects in the two scaling regimes stem from the same mechanism: more effective renewable utilization incentivizes more investment, thereby driving the equilibrium into the carbon-intensive zone in the market-led scaling regime or the carbon-free zone in the resource-led scaling regime. %

\section{Technical Details for the Extensions in the Main Text}
\label{ec:techdetails}

This section provides formal statements and supporting derivations for the extensions discussed in the main text. We first characterize how AI-enabled reductions in renewable investment costs affect the conditions for full decarbonization. We then present the model formulation for the competition extension and outline the solution approach. \looseness=-1

\subsection{AI-Enabled Reductions in Clean-Capacity Costs}
\label{subsec:td-costreduction}

In this subsection, we formalize the finding that an AI-enabled reduction in renewable investment costs relaxes the cost requirement for achieving full decarbonization. 

\begin{proposition}
    Given the renewable investment function $\phi(x)V(y)$ where $V(y)=g\cdot y^\mu$, \cref{prop:3} continues to hold if the thresholds $g_1$ and $g_2$ are replaced by $\frac{g_1}{\phi(1)}$ and $\frac{g_2}{\phi(1)}$.
\end{proposition}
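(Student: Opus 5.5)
The plan is to retrace the proof of \cref{prop:EC1} with the investment cost $\phi(x)V(y)$ in place of $V(y)$. The key observation is that every $g$-threshold there comes from a comparison \emph{against} the candidate equilibrium $(x,y)=(1,k)$. The developer's best response $x^*(y)$ does not involve $V$, so \cref{lem:1} (and \cref{lem:2} for the resource-led analogue) still gives the same three branches and the same thresholds $y_1,y_2$. Only the welfare function changes, to
\[
W(y)=\eta\Pi(x^*(y),y)-\xi D(x^*(y),y)-\phi(x^*(y))\,g\,y^{\mu}.
\]
At the target point $x^*=1$, so the cost there is $\phi(1)gk^{\mu}$ and the marginal cost there is $\phi(1)g\mu k^{\mu-1}$. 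I will show that each condition used in the proof of \cref{prop:EC1} remains sufficient once $g$ is replaced by the effective scale $\phi(1)g$. This gives the thresholds $g_i/\phi(1)$.

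First I handle case (ii), $\frac{\theta}{k}>\max\{\frac{c_f(1+\alpha)}{\lambda},c_r\}$. On the decoupling branch $y\ge y_2$ we have $x^*=1$, so the cost is exactly $\phi(1)gy^\mu$. The first-order condition $\phi(1)V'(k)\le\eta(c_f-c_r)+(1-b)e_f\xi$ therefore becomes $g\le g_2^0/\phi(1)$.

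Next come the comparisons with the other branches. On the branch $x^*=0$ the welfare is $-\xi d_0-\phi(0)gy^\mu$. With the natural normalization $\phi(0)=1$ (or in any case $\phi(0)\ge 0$), this is maximized at $y=0$, so $W(0)=-\xi d_0$ is unchanged. The inequality $W(k)\ge W(0)$ then becomes $\phi(1)gk^\mu\le\eta(\theta-c_rk)+\xi bd_0$, that is, $g\le g_3^0/\phi(1)$.

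On the coupling branch $x^*(y)=(y/k)^{1/(1+\alpha)}<1$ we have $\phi(x^*(y))\ge\phi(1)$ because $\phi$ is decreasing. The competing welfare is therefore at most the value obtained with cost $\phi(1)gy^\mu$. Hence $W(k)\ge W(y)$ is implied by the original inequality with $g$ replaced by $\phi(1)g$, which holds whenever $g\le g_4^0/\phi(1)$. Taking the minimum gives $g\le g_2/\phi(1)$.

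Case (i) has $x^*(y^*)=1$ attained in the coupling zone. Here I use the same domination trick. Define $\tilde W_3(y)$ as the function $W_3$ from the proof of \cref{prop:EC1} with $g$ replaced by $\phi(1)g$. The true coupling-branch welfare satisfies $W(y)\le\tilde W_3(y)$, with equality at $y=k$.

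Under $d_0\ge\hat d$ and $k\le\bar k$, the conditions $\tilde W_3'(k)\ge0$ and concavity make $k$ the maximizer of $\tilde W_3$. Then $W(k)=\tilde W_3(k)\ge\tilde W_3(y)\ge W(y)$. The comparison with $W(0)$ is handled exactly as in case (ii). Both bounds in condition (b) of that proof scale by $1/\phi(1)$, which gives $g<g_1/\phi(1)$.

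The main obstacle is the concavity step in case (i). Concavity of $W_3$ was what allowed the first-order condition at $k$ to suffice, and the true welfare with $\phi(x^*(y))$ need not be concave. I avoid needing its concavity by comparing against the concave majorant $\tilde W_3$, which touches $W$ at $k$. I will check carefully that the $d_0$ threshold guaranteeing concavity does not involve $g$, so that $\hat d$ and $\bar k$ are indeed unchanged. I also need the sign convention $\phi(0)\le$ whatever makes $W(0)$ maximal on the inactive branch; I will assume $\phi\ge0$, so that branch is still optimized at $y=0$.
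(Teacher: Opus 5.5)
Your proposal is correct and follows the paper's proof, which reruns the proof of \cref{prop:EC1} with $g$ replaced by $\phi(1)g$. Your domination step supplies the justification that the paper's one-line substitution leaves implicit. Because $\phi'<0$, we have $\phi(x^*(y))\ge\phi(1)$ on the inactive and coupling branches, which only lowers the competing welfare, while the frontier branch and the point $y=k$ carry exactly the multiplier $\phi(1)$. Your check that the concavity threshold on $d_0$ is free of $g$ is also correct.
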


The proof follows that of \cref{prop:3} except for replacing $g$ with $\phi(1)\cdot g$. 

\subsection{Competition Among AI Developers}
\label{subsec:td-competition}
We provide a detailed model formulation and technical details of the competition extension. We consider two symmetric developers that engage in capability and price competition by choosing $(x_1,x_2)$ first and then prices $(p_1,p_2)$. 

Note that this setting extends the main model in two ways: first, the number of developers is increased; second, an additional pricing decision is introduced. To isolate the effects, we first show that extending the main model into a price-setting monopoly does not change the results. To meaningfully model the price decision, we introduce market heterogeneity by assuming the  market willingness-to-pay parameter $\theta'$ is uniformly distributed on $[0,\tilde{\theta}]$. The total demand is normalized to 1. Let $p$ denote the monopolistic developer's price decision. Accordingly, customers invest if $\theta' x^\lambda\geq p$, indicating that the market demand equals $\frac{\tilde{\theta}-px^{-\lambda}}{\tilde{\theta}}$, and the developer's profit function becomes $\tilde{\Pi}(x,y,p)=p\frac{\tilde{\theta}-px^{-\lambda}}{\tilde{\theta}}-\Big(\beta(x,y)c_r+\big(1-\beta(x,y)\big)c_f\Big)\,k\,x^{1+\alpha}.$ 
Optimizing this function over the price decision leads to $\tilde{\Pi}^*(x,y)=\max_{p\geq 0}\tilde{\Pi}(x,y,p)=\frac{\tilde{\theta}}{4}x^\lambda-\Big(\beta(x,y)c_r+\big(1-\beta(x,y)\big)c_f\Big)\,k\,x^{1+\alpha}$. Observe that $\tilde{\Pi}^*(x,y)$ is the same as the developer's profit function $\Pi(x,y)$ formulated in the main model, except that $\theta$ is replaced by $\tilde{\theta}/4$. Therefore, we shall set $\tilde{\theta}=4\theta$ in this extension to equate the equilibrium outcome from the price-setting monopoly and the main model so as to isolate the impact of competition. 

We are now ready to formulate the duopoly extension. Let $y_i$ be the amount of renewable investment capacity that developer $i=1,2$ has access to. First consider the pricing stage: given $(x_1,x_2)$ (where $x_1>x_2$ without loss of generality) and $(p_1,p_2)$, customers invest in $x_1$ if their willingness to pay satisfies $\theta' x_1^{\lambda }-p_1\geq \theta'  x_2^{\lambda }-p_2$, which leads to $\theta'\geq \frac{p_1-p_2}{x_1^{\lambda }-x_2^{\lambda }}$. Furthermore, for customers to make an investment, $\theta' x_2^{\lambda}-p_2\geq 0$ should be satisfied, which gives $\theta'\geq p_2 x_2^{-\lambda }$. Hence, let $C(x,y)\doteq \Big(\beta(x,y)c_r+\big(1-\beta(x,y)\big)c_f\Big)\,k\,x^{1+\alpha}$ denote the energy cost under a given $(x,y)$; then developer 1's profit function equals $\Pi_1(x_1,y_1,p_1,p_2) = p_1 \left(\tilde{\theta}-\frac{p_1-p_2}{x_1^{\lambda }-x_2^{\lambda }}\right)/{\tilde{\theta}}-C(x_1,y_1)$. Developer 2's profit function equals $\Pi_2(x_2,y_2,p_1,p_2)=p_2 \left(\frac{p_1-p_2}{x_1^{\lambda }-x_2^{\lambda }}-p_2 x_2^{-\lambda }\right)/{\tilde{\theta}}-C(x_2,y_2)$. Maximizing these two profit functions together leads to a price equilibrium: $p_1^*=\frac{2  x_1^{\lambda } \left(x_1^{\lambda }-x_2^{\lambda }\right)\tilde{\theta}}{4 x_1^{\lambda }-x_2^{\lambda }}$, and $p_2^*= \frac{x_2^{\lambda } \left(x_1^{\lambda }-x_2^{\lambda }\right)\tilde{\theta}}{4 x_1^{\lambda }-x_2^{\lambda }}$. We plug the price equilibrium into the developers' profit functions and analyze the capability-choice equilibrium. This part of the analysis becomes analytically intractable. We resort to an extensive numerical study, which leads to the results shown in \cref{subsec:competition}.

\section{Proofs of Analytical Results}
\label{ec:proofs}

For convenience, we rewrite the developer's profit in equation (2) as follows: Let $\pi(x)\doteq \theta x^\lambda - c\cdot (kx^{1+\alpha})$. The developer's profit equals $\pi(x)$ where $c=c_r$ if $kx^{1+\alpha}\leq y$ (i.e., AI energy demand can be fully covered by renewable sources). We denote this piece by $\pi_{c_r}(x)$. Otherwise, it equals $\pi(x)$ where $c=\beta c_r+(1-\beta)c_f$, where $\beta=\frac{y}{kx^{1+\alpha}}$; this function can be rewritten as $\theta x^\lambda - c_f\cdot (kx^{1+\alpha})+(c_f-c_r)y$, which we denote by $\pi_{c_f}(x)$. \looseness=-1

\vspace{0.3cm}

\noindent \emph{Proof of \cref{lem:1}.} 
The special case of this proposition in which $\lambda=1+\alpha$ can be proven by setting $y_1=y_2=0$ if $\theta\geq c_fk$, setting $y_1=y_2=k+1$ if $\theta\leq c_rk$, and setting $y_1=0, y_2=k+1$ if $\theta\in (c_rk,c_fk)$.

To prove this proposition for the case in which $\lambda>1+\alpha$, we first introduce two thresholds on $\theta$: The first threshold is defined as $\bar{\theta}\doteq \frac{(\alpha +1) c_f k}{\lambda }$. The second threshold, denoted by $\bar{\bar{\theta}}$, is defined as follows: Consider the function $F(\theta)=-\left((c_f-c_r) k^{\frac{\lambda }{-\alpha +\lambda -1}} \left(\frac{c_r}{\theta }\right)^{\frac{\alpha +1}{-\alpha +\lambda -1}}\right)+c_f k-\theta$. It can be shown that $F''(\theta)<0$, indicating that $F$ is concave and thus can have at most two zeros. It can be verified that one of them is $c_rk$ because $F(c_rk)=0$. Denote the other zero, if it exists, by $\bar{\bar{\theta}}$.

We start by proving \cref{lem:1} for the case in which $\alpha=0$. In this case, $\pi''(x)\geq 0$ under the assumption that $\lambda>1+\alpha=1$, indicating that $\pi(x)$ is convex and achieves its maximum at the boundary points. As such, we analyze two sub-cases. 
\begin{itemize}
\item The first sub-case is one in which $\frac{c_r}{c_f}<\frac{1+\alpha}{\lambda}$ and $\theta\in \left(\bar{\theta},\bar{\bar{\theta}}\right)$. Define $y^0_1\doteq k^{\frac{\lambda }{-\alpha +\lambda -1}} \left(\frac{c_r}{\theta }\right)^{\frac{\alpha +1}{-\alpha +\lambda -1}}$. We also introduce another threshold $y^0_2$: Consider the function $G(y)=\theta  \left(\left(\frac{y}{k}\right)^{\frac{\lambda }{\alpha +1}}-1\right)-c_f k \left(\frac{y}{k}-1\right)$. It can be shown that $G''(y)>0$, indicating that the function is convex. Furthermore, it can be verified that $G(k)=0$ and $G'(k)>0$ when $\theta>\bar{\theta}$. This implies that in the first sub-case, $G(y)$ has another zero that is strictly smaller than $k$; denote that zero by $y^0_2$. Hence, $G(y)<0$ if and only if $y\in (y^0_2,k)$. 

We further prove that $y^0_1<y^0_2$. We prove this by showing that $G(y^0_1)>0$. It can be shown that $G(y^0_1)=F(\theta)$. We show that $\forall \theta\in (\bar{\theta},\bar{\bar{\theta}})$, $F(\theta)>0$ is satisfied. We show this in two steps. First, we show that $F(\theta)>0$ $\forall \theta\in(c_rk,\bar{\bar{\theta}})$. This is because when $\frac{c_r}{c_f}<\frac{1+\alpha}{\lambda}$, it can be calculated that $F'(c_rk)>0$ and $\lim_{\theta\rightarrow\infty}F(\theta)<0$, indicating that $\bar{\bar{\theta}}$ exists and is greater than $c_rk$, and $F(\theta)>0$ if and only if $\theta\in(c_rk,\bar{\bar{\theta}})$. Second, we show that $c_rk<\bar{\theta}<\bar{\bar{\theta}}$ and thus $(\bar{\theta},\bar{\bar{\theta}})\subset (c_rk,\bar{\bar{\theta}})$. The inequality $c_rk<\bar{\theta}$ directly follows from the condition that $\frac{c_r}{c_f}<\frac{1+\alpha}{\lambda}$.  To show $\bar{\theta}<\bar{\bar{\theta}}$, we analyze $F(\bar{\theta})$. It can be shown that $F(\bar{\theta})=0$ when $c_r=\frac{c_f(1+\alpha)}{\lambda}$; furthermore, it can be calculated that $F(\bar{\theta})$ is strictly decreasing in $c_r$ given $\frac{c_r}{c_f}<\frac{1+\alpha}{\lambda}$ and $\lambda>1+\alpha$. This indicates that $F(\bar{\theta})>0$ and, in turn, $\bar{\theta}<\bar{\bar{\theta}}$ holds, $\forall c_r<\frac{c_f(1+\alpha)}{\lambda}$. This completes the proof that $y^0_1<y^0_2$ in this sub-case. Accordingly, we can show the following.
\begin{itemize}
    \item When $y<y^0_1$, it can be shown that (i) $\pi_{c_r}(0)>\pi_{c_r}\left((y/k)^{1/(1+\alpha)}\right)$, and (ii) $\pi_{c_f}(1)<\pi_{c_f}\left((y/k)^{1/(1+\alpha)}\right)$, indicating that the optimal technology choice is $x=0$ due to the convexity of the functions $\pi_{c_r}$ and $\pi_{c_f}$. To prove (i), consider the function $J(y)=\pi_{c_r}\left((y/k)^{1/(1+\alpha)}\right)-\pi_{c_r}(0)$. It can be shown that $J''(y)>0$ and that $J(y)$ has two zeros, $y=0$ and $y=y^0_1$. Hence, if and only if $y\in (0,y^0_1)$, $J(y)<0$, and thus (i) holds. To prove (ii), note that $\pi_{c_f}\left((y/k)^{1/(1+\alpha)}\right)-\pi_{c_f}(1)$ is the $G(y)$ function introduced above. Because $G(y)>0$ $\forall y<y^0_2$ and $y^0_1<y^0_2$, (ii) is proven. 
    \item Based on the above observations, we know that when $y\in [y^0_1,y^0_2)$, (i) $\pi_{c_r}(0)\leq \pi_{c_r}\left((y/k)^{1/(1+\alpha)}\right)$, and (ii) $\pi_{c_f}(1)<\pi_{c_f}\left((y/k)^{1/(1+\alpha)}\right)$, indicating that the optimal technology choice is $x=(y/k)^{1/(1+\alpha)}$.  
    \item Similarly, when $y\geq y^0_2$, we have (i) $\pi_{c_r}(0)< \pi_{c_r}\left((y/k)^{1/(1+\alpha)}\right)$, and (ii) $\pi_{c_f}(1)\geq \pi_{c_f}\left((y/k)^{1/(1+\alpha)}\right)$, indicating that the optimal technology choice is $x=1$.  
\end{itemize}
Hence, setting $y_1=y_1^0$ and $y_2=y_2^0$ proves \cref{lem:1} in the first sub-case.

\item We then consider the second sub-case in which the condition in the first sub-case is not met. 
\begin{itemize}
\item (A) If $\frac{c_r}{c_f}<\frac{1+\alpha}{\lambda}$ and $\theta\leq \bar{\theta}$, then $G'(k)\leq 0$, indicating that $G(y)\geq 0$ and, accordingly, $\pi_{c_f}\left((y/k)^{1/(1+\alpha)}\right)\geq \pi_{c_f}(1)$ holds $\forall y\in [0,k]$. Hence, setting $y_1=y_1^0$ and $y_2=k$ proves the proposition. \looseness=-1

\item (B) If $\frac{c_r}{c_f}<\frac{1+\alpha}{\lambda}$  and $\theta\geq \bar{\bar{\theta}}$, then $G(y^0_1)=F(\theta)\leq 0$, which indicates that 
 $y^0_1\in (y^0_2,k)$. 
In that case, the solution $x=(y/k)^{1/(1+\alpha)}$ is suboptimal. Hence, setting $y_1=y_2=\frac{c_f k-\theta }{c_f-c_r}$ proves the proposition, because it can be shown that $\pi_{c_r}(0)>\pi_{c_f}(1)$ if and only if $y<\frac{c_f k-\theta }{c_f-c_r}$.

\item (C) If $\frac{c_r}{c_f}\geq \frac{1+\alpha}{\lambda}$, it can be shown that $\bar{\bar{\theta}}\leq c_rk$, and thus $F(\theta)>0$ if and only if $\theta\in(\bar{\bar{\theta}},c_rk)$.  
\begin{itemize}
\item (C-i) If $\theta\geq c_rk$, then $F(\theta)\leq 0$, that is, $G(y^0_1)\leq 0$, which is the situation discussed in (B).
\item (C-ii) If $\theta<c_rk$, it can be shown that $y_1^0>k$, indicating that $J(y)\leq 0$ and, accordingly, $\pi_{c_r}(0)\geq \pi_{c_r}\left((y/k)^{1/(1+\alpha)}\right)$ $\forall y\in [0,k]$. Hence, the solution $x=(y/k)^{1/(1+\alpha)}$ is also suboptimal. Hence, applying the approach in (B) and setting $y_1=y_2=\frac{c_f k-\theta }{c_f-c_r}$ proves the proposition.
\end{itemize}
\end{itemize}
\end{itemize}

We then analyze the case in which $\alpha>0$. It can be shown that $\pi''(x)<0$ if  $x\leq \left(\frac{\theta  (\lambda -1) \lambda }{\alpha  (\alpha +1) ck}\right)^{\frac{1}{\alpha -\lambda +1}}$, and $\pi''(x)\geq 0$ otherwise, indicating that $\pi(x)$ is concave first and then becomes convex as $x$ increases. %
Furthermore, it can be calculated that $\pi'(x)<0$ for $x$ sufficiently small (because $\lambda-1>\alpha$ in this case). These observations indicate that $\pi(x)$ continues to achieve its maximum at the boundary points, and the arguments used to analyze the case in which $\alpha=0$ continue to apply.  \hfill \emph{Q.E.D.} 

\vspace{0.3cm}

\noindent \emph{Proof of \cref{lem:2}.} 
First, consider the case in which $\lambda\leq 1$. In this case, $\pi''(x)=\theta  (\lambda -1) \lambda  x^{\lambda -2}-\alpha  (\alpha +1) ck x^{\alpha -1}\leq 0$, indicating that the function $\pi(x)$ is concave. Hence, the global maximum of $\pi(x)$ is achieved at its stationary point. %
It can be shown that when  $y> k \left(\frac{\theta  \lambda }{(1+\alpha)  c_rk}\right)^{\frac{1+\alpha}{1+\alpha -\lambda}}$, the first piece $\pi_{c_r}(x)$ achieves its global maximum (if it is less than 1), and the second piece $\pi_{c_f}(x)$ decreases. For any $y\in \left[k \left(\frac{\theta  \lambda }{(1+\alpha)  c_fk}\right)^{\frac{1+\alpha}{1+\alpha -\lambda}},k \left(\frac{\theta  \lambda }{(1+\alpha)  c_rk}\right)^{\frac{1+\alpha}{1+\alpha -\lambda}}  \right]$, the first/second piece increases/decreases, and thus the maximum is achieved at the break point at which $x=\left(\frac{y}{k}\right)^{\frac{1}{1+\alpha}}$. For any $y<k \left(\frac{\theta  \lambda }{(1+\alpha)  c_fk}\right)^{\frac{1+\alpha}{1+\alpha -\lambda}}$, the first piece increases, and the second piece achieves its global maximum (if it is less than 1). %
Incorporating the boundary constraint that $x\leq 1$ leads to \cref{lem:2} for this case. 

We then consider the case in which $\lambda>1$. In this case, it can be shown that $\pi''(x)\geq 0$ if $x<\left(\frac{\lambda(\lambda-1)\theta}{\alpha(1+\alpha)ck}\right)^{\frac{1}{1+\alpha-\lambda}}$ and 
$\pi''(x)< 0$  otherwise, indicating that $\pi$ is convex first and then becomes concave as $x$ increases.  
Furthermore, $\pi'(x)=\theta  \lambda  x^{\lambda -1}-(\alpha +1) ck x^{\alpha }>0$ for $x$ sufficiently small (because $\lambda-1<\alpha$ in this case), indicating that $x=0$ cannot be the maximum. Therefore, $\pi(x)$ continues to achieve its global maximum at a stationary point, and thus the arguments used in the previous case continue to apply. \hfill \emph{Q.E.D.}

\vspace{0.3cm}

\noindent \emph{Proof of \cref{prop:1}.} Recall the $y_1$ and $y_2$ thresholds identified in \cref{lem:1}. When $V'(k)\leq \eta (c_f-c_r)+(1-b) e_f \xi$, it can be shown that either $y^*<y_2$ or $y^*=k$, in which case $k(x^*(y^*))^{1+\alpha}\leq y^*$ is satisfied, indicating that $E(x^*(y^*),y^*)=0$ always holds. To show this, assume that $y^*\in (y_2,k)$. Then, according to \cref{lem:1}, $x^*(y^*)=1$. Plugging this solution into the policymaker's welfare function, we obtain $W(y^*)=\eta\left(-c_f(k-y^*)-c_ry^*+\theta\right)-\xi(1-b)(d_0+e_f(k-y^*))-V(y^*)$. It can be calculated that $W'(y^*)=\eta(c_f-c_r)+\xi(1-b)e_f-V'(y^*)$. Because $V(y)$ is convex and $y^*<k$, we have $V'(y^*)<V'(k)\leq \eta (c_f-c_r)+(1-b) e_f \xi$, indicating that $W'(y^*)>0$, which contradicts the optimality of $y^*$ as an interior solution.

When $\theta\leq \max\{\frac{1+\alpha}{\lambda }\cdot c_fk, c_rk\}$, two sub-cases arise: (i) $\frac{c_r}{c_f}<\frac{1+\alpha}{\lambda}$ and $\theta\leq \frac{1+\alpha}{\lambda }c_fk$, which is the $\bar{\theta}$ threshold defined in the proof of \cref{lem:1}, and (ii) $\frac{c_r}{c_f}\geq \frac{1+\alpha}{\lambda}$ and $\theta\leq c_rk$.
Case (i) corresponds to situation (A) discussed in the second sub-case in the proof of \cref{lem:1}, in which $y_2=k$ and thus the third scenario of $x^*$ in \cref{lem:1} does not materialize. Hence, $k(x^*(y^*))^{1+\alpha}\leq y^*$ always holds. 
Case (ii) corresponds to situation (C-ii) discussed in the proof of \cref{lem:1}. In that case, it can be shown that because $\theta<c_rk$, $y_1=y_2=\frac{c_fk-\theta}{c_f-c_r}>k$, indicating that $x^*=0$ $\forall y\in [0,k]$. Hence, $k(x^*(y^*))^{1+\alpha}\leq y^*$ always holds as well. \looseness=-1

Combining the above two paragraphs proves \cref{prop:1}(i). 

Next, consider the case in which neither condition in \cref{prop:1}(i) holds. This corresponds to either the first sub-case, or situations (B) and (C-i) in the second sub-case discussed in the proof of \cref{lem:1}. In all of these cases, the third scenario of $x^*$ in \cref{lem:1} can materialize when $y$ is sufficiently large, in particular, when $y=k$. Hence, $W(k)$, that is, the welfare function when $y=k$, should be calculated by plugging in $x^*(k)=1$. Based on the calculation in the first paragraph of this proof, we know that when $V'(k)> \eta (c_f-c_r)+(1-b) e_f \xi$, $W'(k)<0$, implying that $y^*<k$. If $y_2\leq 0$, then according to \cref{lem:1}, $k(x^*(y))^{1+\alpha}>y$ holds for all $y\in[0,k)$.  Setting $\bar{d}_m$ to be any negative number proves \cref{prop:1}(ii). 

If $y_2>0$, then $W(y)$ can be formulated as a piecewise function containing at least two pieces: 
The last piece corresponds to the $y$ range of $[y_2,k]$, in which $x^*(y)=1$; we denote this piece by $W_2$. At least one other piece corresponds to the first or second case of the  $x^*$ solution identified in \cref{lem:1}. %
We show that $y^*$ occurs on $W_2(y)$ if and only if $d_0$ is sufficiently large. It can be calculated that  $W_2''(y)<0$, indicating that $W_2(y)$ is a concave function. Hence, if $y^*$ occurs on $W_2(y)$, then $y^*$ equals either the stationary point at which $W_2'(y)=0$ (denoted by $y'_2$) or the endpoint $y_2$ (because $y^*<k$, as shown in the previous paragraph). 
\begin{itemize}
    \item If $\eta(c_f-c_r)+(1-b)e_f \xi-V'(y_2)>0$ (a condition independent of $d_0$ because $y_2$ is the zero of a function that is independent of $d_0$; see the proof of \cref{lem:1}), then $W_2(y'_2)>W_2(y_2)$. Hence, $y^*$ occurs on $W_2(y)$ if and only if $W(y)\leq W_2(y_2')$ holds for any $y< y_2$. It can be calculated that for any $y
    \leq y_2$,  $\frac{\partial(W(y)- W_2(y_2'))}{\partial d_0}$ equals either $-b \xi$ (if the first case of $x^*$ in \cref{lem:1} materializes under $y$, i.e., $x^*(y)=0$) or $-b \xi  \left(1-\left(\frac{y}{k}\right)^{\frac{1}{\alpha +1}}\right)$ (if the second case of $x^*$ in \cref{lem:1} materializes, i.e., $x^*(y)=(\frac{y}{k})^{1/(1+\alpha)}$). Because condition (a) in \cref{prop:1}(i) does not hold, we know $y_2<k$ according to the proof of \cref{lem:1} (see the first sub-case, and situations (B) and (C-i) in the second sub-case). %
    Hence, $y<y_2$ implies that $y<k$. Consequently, both derivatives calculated above are negative for any $y<y_2$ given $b>0$ (and are independent of $d_0$), indicating that  $W(y)- W_2(y_2')$ is a strictly decreasing linear function in $d_0$. Hence, if $y^*$ occurs on $W_2(y)$ under some $d'$, then the same happens under any $d''>d'$, indicating that either \cref{prop:1}(i) or (ii) holds. To show that \cref{prop:1}(ii) holds in this case, define $\Delta W(y)=W(y)- W_2(y_2')$ when $d_0=0$ for any $y<y_2$. %
Define $\bar{d}''(y)=\max\{0,\Delta W(y)/|\frac{\partial(W(y)- W_2(y_2'))}{\partial d_0}|\}$, which is finite for any $y<y_2<k$. By definition, $W(y)<W_2(y_2')$ when $d_0>\bar{d}''(y)$. Let $\bar{\bar{d}}''=\sup_{y<y_2}\bar{d}''(y)$. Then, when $d_0>\bar{\bar{d}}''$, $W(y)<W_2(y_2')$ $\forall y<y_2$, and thus $y^*=y_2'$, indicating that \cref{prop:1}(ii) holds in this sub-case if we set $\bar{d}_m=\bar{\bar{d}}''$.

\item If $\eta(c_f-c_r)+(1-b)e_f \xi-V'(y_2)\leq 0$, then $W_2(y'_2)\leq W_2(y_2)$. Replacing $y_2'$ by $y_2$ in the arguments used in the previous bullet point proves \cref{prop:1}(ii) in this sub-case. \hfill \emph{Q.E.D.}

\end{itemize}

\vspace{0.3cm}

\noindent \emph{Proof of \cref{prop:2}.}
First, consider the conditions in \cref{prop:2}(i): $\theta\geq \frac{1+\alpha}{\lambda}c_fk$ indicates that $k \left(\frac{\theta  \lambda }{(1+\alpha)  c_fk}\right)^{\frac{1+\alpha}{1+\alpha -\lambda}}\geq k$ (i.e., $f(c_f)\geq 1$) given $\lambda<1+\alpha$. In that case, $x^*(y)=1$ $\forall y\in [0,k]$ according to \cref{lem:2}. Plugging $x^*(y)=1$ into the welfare function leads to a function that we denote by $\bar{W}_1(y)$. It can be calculated that  $\bar{W}_1''(y)<0$, indicating that $\bar{W}_1(y)$ is a concave function. It can be calculated that when $V'(k)>\eta(c_f-c_r)+(1-b) e_f \xi$, $\bar{W}_1'(k)<0$ and thus $y^*<k$. Hence, $k(x^*(y))^{1+\alpha}=k> y^*$ for any $d_0$. This proves \cref{prop:2}(i).

Next, consider the case in which the condition of \cref{prop:2}(i) is violated. Assume the first condition holds, yet the second condition does not, that is, $\theta\geq \frac{1+\alpha}{\lambda}c_fk$ and $V'(k)\leq \eta(c_f-c_r)+(1-b) e_f \xi$. In this case, $y^*=k$ based on the analysis in the previous paragraph. Hence, setting $\bar{d}_c=0$ proves \cref{prop:2}(ii).

Assume the first condition is violated, that is, $\theta<\frac{1+\alpha}{\lambda}c_fk$, indicating that $k \left(\frac{\theta  \lambda }{(1+\alpha)  c_fk}\right)^{\frac{1+\alpha}{1+\alpha -\lambda}}< k$, i.e., $f(c_f)<1$. In this case, the welfare function $W(y)$ can be formulated as a piecewise function containing at least two pieces: %
The first piece corresponds to the $y$ range of $[0,k \left(\frac{\theta  \lambda }{(1+\alpha)  c_fk}\right)^{\frac{1+\alpha}{1+\alpha -\lambda}})$, in which $x^*=\left(\frac{\theta  \lambda }{(\alpha+1)c_f k }\right)^{\frac{1}{1+\alpha -\lambda}}$; we denote this piece by $W_1$. At least one other piece corresponds to the second or the third case of the  $x^*$ solution identified in \cref{lem:2}.
We show that $y^*$ occurs on $W_1(y)$ if and only if $d_0$ is sufficiently small. It can be calculated that  $W_1''(y)<0$, indicating that $W_1(y)$ is a concave function. Hence, if $y^*$ occurs on $W_1(y)$, then $y^*$ equals either the stationary point at which $W_1'(y)=0$ (denoted by $y_1'$) or the endpoint $y=0$. 
\begin{itemize}
\item If $V'(0)< \eta(c_f-c_r)+(1-b\cdot f(c_f)) e_f \xi$ (a condition independent of $d_0$), then $W_1(y_1')>W_1(0)$. Hence, $y^*$ occurs on $W_1(y)$ if and only if  $W(y)\leq W_1(y_1')$ holds for any $y\geq k \left(\frac{\theta  \lambda }{(1+\alpha)  c_fk}\right)^{\frac{1+\alpha}{1+\alpha -\lambda}}$. It can be calculated that for any $y\geq k \left(\frac{\theta  \lambda }{(1+\alpha)  c_fk}\right)^{\frac{1+\alpha}{1+\alpha -\lambda}}$,  $\frac{\partial(W(y)- W_1(y_1'))}{\partial d_0}$ equals either $b \xi  \left(\left(\frac{y}{k}\right)^{\frac{1}{\alpha +1}}-\left(\frac{\theta  \lambda }{(1+\alpha)c_f k}\right)^{\frac{1}{\alpha -\lambda +1}}\right)$ (if the second case of $x^*$ in \cref{lem:2} materializes, i.e., $x^*(y)=(\frac{y}{k})^{1/(1+\alpha)}$) or $b \xi  \left(\left(
\min\{\frac{\theta  \lambda }{(1+\alpha)c_rk},1\}\right)^{\frac{1}{\alpha -\lambda +1}}-\left(\frac{\theta  \lambda }{(1+\alpha)c_fk}\right)^{\frac{1}{\alpha -\lambda +1}}\right)$ (if the third case of $x^*$ in \cref{lem:2} materializes, i.e., $x^*(y)=\min\{f(c_r),1\}$). %
Both derivatives are positive for any $y\geq k \left(\frac{\theta  \lambda }{(1+\alpha)  c_fk}\right)^{\frac{1+\alpha}{1+\alpha -\lambda}}$ given $k \left(\frac{\theta  \lambda }{(1+\alpha)  c_fk}\right)^{\frac{1+\alpha}{1+\alpha -\lambda}}< k$ and  $b>0$ (and are independent of $d_0$), indicating that  $W(y)- W_1(y_1')$ is a strictly increasing linear function in $d_0$. Hence, if $y^*$ occurs on $W_1(y)$ under some $d'$, then the same happens under any $d''<d'$, indicating that either \cref{prop:2}(i) or (ii) holds. To show that \cref{prop:2}(ii) holds in this case, define $\Delta W(y)=W(y)- W_1(y_1')$ when $d_0=0$ for any $y\geq k \left(\frac{\theta  \lambda }{(1+\alpha)  c_fk}\right)^{\frac{1+\alpha}{1+\alpha -\lambda}}$. %
Define $\bar{d}'(y)=\max\{0,-\Delta W(y)/\frac{\partial(W(y)- W_1(y_1'))}{\partial d_0}\}$. By definition,  when $d_0>\bar{d}'(y)$, $W(y)>W_1(y_1')$, which implies that $y^*$ no longer occurs on $W_1(y)$. Setting $\bar{d}_c=\min\{\bar{d}'(y), \forall y\geq k \left(\frac{\theta  \lambda }{(1+\alpha)  c_fk}\right)^{\frac{1+\alpha}{1+\alpha -\lambda}}\}$  proves \cref{prop:2}(ii) in this sub-case. %

\item  If $V'(0)\geq \eta(c_f-c_r)+(1-b\cdot f(c_f)) e_f \xi$, then $W_1(y_1')\leq W_1(0)$. Replacing $y'_1$ by $y=0$ in the arguments used in the previous bullet point proves \cref{prop:2}(ii) in this sub-case. \hfill \emph{Q.E.D.}
\end{itemize}

\section{Calibration and Computational Details}
\label{ec:calibration}

This section maps observable inputs into the parameter values used in Instances A--C (Instance D combines the Instance A market size with the Instance C technology and cost parameters). We proceed in three steps. First, we estimate the market-side relationship between investment and model capability to discipline the demand parameters. Second, we calibrate the technology scaling relationship linking capability to electricity use and anchor instance-specific energy-demand levels. Third, we set energy costs, emissions intensity, and the renewable investment cost function.

\subsection{Market-Side Calibration}
\label{subsec:calib-market}

We estimate the willingness-to-pay elasticity $\lambda$ by regression. Let $w$ denote market willingness-to-pay. The power law $w=\theta x^\lambda$ is equivalent to $\log w = \lambda \log x+\log \theta$. For each instance, we assemble three annual observations of $(w,x)$ from AI investment and model capability measures for the U.S., China, and EU over 2023--2025, summarized in \cref{tab:lambda_estimate}.

\begin{table}[htbp]
\small
\centering
\caption{Investment and Model Capability by Region, 2023--2025}
\label{tab:lambda_estimate}
\begin{threeparttable}
\begin{tabular}{@{}lcccccc@{}}
\toprule
 & \multicolumn{2}{c}{United States} & \multicolumn{2}{c}{China} & \multicolumn{2}{c}{European Union} \\
\cmidrule(lr){2-3} \cmidrule(lr){4-5} \cmidrule(lr){6-7}
Year & Invest. & MMMU & Invest. & MMMU & Invest. & MMMU \\
\midrule
2025 & 285.88 & 85.4 & 76.25 & 80.1 & 20.92 & 62.5 \\
2024 & 109.08 & 78.2 &  9.29 & 70.3 & 19.42 & 52.5 \\
2023 &  67.22 & 59.4 &  7.76 & 45.9 & 11.00 & 45.6 \\
\bottomrule
\end{tabular}
\begin{tablenotes}[flushleft]
\footnotesize
\item \textit{Notes:} Investment is in billions of USD. MMMU scores reflect the top-performing model in each region-year. See text for model names and data sources.
\end{tablenotes}
\end{threeparttable}
\end{table}

We construct the investment inputs in \cref{tab:lambda_estimate} as follows. The 2023, 2024, and 2025 figures follow \citet{maslej2024aiindex}, \citet{maslej2025aiindex}, and \citet{Sajadieh2026aiindex}, respectively.\footnote{\url{https://hai.stanford.edu/ai-index/2024-ai-index-report/economy}. Figure 4.3.10.}\textsuperscript{,}\footnote{\url{https://hai.stanford.edu/ai-index/2025-ai-index-report/economy}. Figure 4.3.10. \label{fn:hai2025}}\textsuperscript{,}\footnote{\url{https://hai.stanford.edu/ai-index/2026-ai-index-report/economy}. Figure 4.2.11.\label{fn:hai2026}}
The only exception is China's investment figure: \citet{Sajadieh2026aiindex} acknowledge that their reported number may understate the investment in China. For example, the Quid database, on which the estimate in \citet{Sajadieh2026aiindex} is based, does not include internal R\&D spending of domestic tech giants. Hence, we adopt another source and estimate the investment value to be $125 \times 0.61 \approx 76.25$, where 125 billion~USD is the total AI investment in China in 2025 and 61\% is attributed to private investment.\footnote{\url{https://www.secondtalent.com/resources/chinese-ai-investment-statistics/} \label{fn:chinainvestment}} We use private investment to maintain comparability with the other investment measures, which are based on private-sector activity. We validate this estimate by collecting major Chinese companies' AI investment in 2025, including Alibaba (RMB120 billion)\footnote{\url{https://www.alibabagroup.com/en-US/document-1930116148192346112}}, ByteDance (RMB150 billion)\footnote{\url{https://www.reuters.com/technology/bytedance-plans-20-billion-capex-2025-mostly-ai-sources-say-2025-01-23}}, Tencent (RMB18 billion)\footnote{\url{https://static.www.tencent.com/uploads/2026/03/18/2804dbdae364ca25b82d21bc8304f1d3.pdf}}. Summing up these numbers and also including (partially) the R\&D expenditure of other companies that have heavily invested in AI (such as Huawei, Xiaomi, JD.com, and Baidu)\footnote{The total R\&D expenditures of these companies are Huawei (\href{https://www.huawei.com/en/annual-report}{RMB192.3 billion}), Xiaomi (\href{https://ir.mi.com/system/files-encrypted/nasdaq_kms/assets/2026/04/28/5-29-08/Xiaomi\%202025\%20AR_EN.pdf}{RMB33.1 billion}), JD.com (\href{https://ir.jd.com/news-releases/news-release-details/jdcom-announces-fourth-quarter-and-full-year-2025-results-and}{RMB22.2 billion}), and Baidu (\href{https://ir.baidu.com/news-releases/news-release-details/baidu-announces-fourth-quarter-and-fiscal-year-2025-results}{RMB20.4 billion}). Assuming 35\%--65\% of these R\&D investments go to AI leads to the estimated range reported above.}, we estimate the corporate investment in AI in China in 2025 to be approximately RMB380--460 billion (\$56-\$68 billion). Combining this number with the Quid-based figure reported by \citet{Sajadieh2026aiindex} gives \$68.41--80.41 billion, which is consistent with the \$76.25 billion figure estimated from the other source. \looseness=-1

Model capability is measured by MMMU. The MMMU scores adopted for Instances A and B in \cref{tab:lambda_estimate} are based on the highest MMMU (Val) evaluation reported for U.S.- and China-developed models in the corresponding year.\footnote{\url{https://mmmu-benchmark.github.io/}}
The MMMU scores in Instance C are based on the performance of Mistral AI models reported in 2024\footnote{\url{https://mistral.ai/news/pixtral-12b}} and 2025.\footnote{\url{https://llm-stats.com/models/mistral-small-3.2-24b-instruct-2506}} 
No MMMU scores have been reported for Mistral's 2023 models such as Mistral 7B Instruct. Mistral 7B Instruct achieved a score of 60.1 in the MMLU benchmark,\footnote{\url{https://docsbot.ai/models/compare/mistral-large/mistral-7b-instruct}} while Pixtral 12B achieved 69.2.\footnote{\url{https://llm-stats.com/models/pixtral-12b-2409}} Assuming the same percentage increase in MMLU and MMMU scores between these two models, we estimate the MMMU score of Mistral 7B Instruct to be 45.6. 

We set $w$ to be the investment amounts and $x$ to be the MMMU scores (normalized by 100 to lie in its range $[0,1]$), and perform a linear regression for each instance. This yields a $\lambda$ value of 3.46, 3.19, and 1.98. %

Three observations per region leave one residual degree of freedom, so these point estimates carry substantial statistical uncertainty. \cref{tab:lambda_ci} reports the OLS standard errors and 95\% confidence intervals: no regional estimate rejects either scaling regime at conventional levels (each $|t|<0.7$ against the estimated threshold $1+\alpha\approx 2.47$), and the threshold estimate itself is imprecise. Two measurement caveats compound the statistical one: the 2025 China figure is taken from an alternative total-investment source and validated against an AI-Index-based component combined with a corporate-investment estimate constructed from company disclosures (see above), and compression of MMMU scores near the benchmark's ceiling mechanically understates late capability growth, inflating both estimated exponents. We therefore treat the estimates as anchors for scenario construction rather than as measured classifications.

\begin{table}[htbp]
\small
\centering
\caption{Uncertainty and Alternative Estimators for the Market-Scaling Exponent}
\label{tab:lambda_ci}
\begin{threeparttable}
\begin{tabular}{@{}lccccc@{}}
\toprule
 & $\hat\lambda$ & (s.e.) & 95\% CI & Investment-based & Price-based\tnote{a} \\
\midrule
United States  & 3.46 & (1.79) & $(-19.3,\ 26.2)$ & market-led   & market-led \\
China          & 3.19 & (2.98) & $(-34.6,\ 41.0)$ & market-led   & resource-led \\
European Union & 1.98 & (1.02) & $(-11.1,\ 15.0)$  & resource-led & market-led \\
\midrule
Threshold $1+\alpha$ & 2.47 & (0.42) & $(-2.9,\ 7.8)$ & \multicolumn{2}{c}{--} \\
\bottomrule
\end{tabular}
\begin{tablenotes}[flushleft]
\footnotesize
\item \textit{Notes:} OLS on three annual observations per regression (one residual degree of freedom); confidence intervals use $t_{0.975,1}=12.71$.
\item[a] Classifications implied by fitting model prices over time, as examined in a conference discussion of this paper and reported here as indicative pending a formal estimation; within-year price-on-capability regressions are sample-dependent (below the threshold for the most-used models, near it for frontier models).
\end{tablenotes}
\end{threeparttable}
\end{table}

A further identification concern is that investment proxies the level of anticipated willingness to pay rather than the willingness-to-pay function itself. Writing the proxy as $w_t=\theta_t x_t^{\lambda}$, the time-series slope estimates $\lambda+\Delta\log\theta_t/\Delta\log x_t$, and the 2023--2025 expansion of market breadth ($\theta_t$) biases investment-based estimates upward. Price-based estimators drift in the opposite direction under intensifying competition and penetration pricing, and within-year price-on-capability regressions are unstable because current prices are strategic objects: frontier models are often free to consumers, and less expensive models sometimes outperform more expensive ones. As \cref{tab:lambda_ci} reports, willingness-to-pay estimators built from model prices reverse the China and EU classifications while agreeing with the U.S.\ classification. We conclude that the U.S.\ market-led classification is robust across estimators, while the China and EU classifications are not determined by presently available data. This does not disturb the analysis: the four case-study instances span both scaling regimes by construction, so reclassifying a region relocates it among instances whose equilibria are already characterized. 

Looking ahead, $\lambda$ becomes identifiable as the market matures: price-capability gradients along iso-share curves (hedonic regressions that control for market position), or discrete-choice estimation on model usage shares, prices, and capability indices, recover willingness to pay for capability from substitution patterns rather than from spending. We view this as a measurement agenda that the model motivates once prices decouple from investor subsidies.

The regression analysis also provides an estimate of $\theta$ (\$371.24, \$74.44, \$57.57 billion for the three instances, respectively), but it aggregates market conditions across 2023--2025 and does not target the 2024 conditions that underlie the remaining parameter choices. We therefore anchor $\theta$ to the 2024 investment levels. A literal application of $w=\theta x^\lambda$ would recover $\theta$ as $\frac{w}{x^\lambda}$. We do not take this route because aggregate AI investment is best viewed as a proxy for the scale of anticipated market opportunity rather than as a direct observation of $\theta x^\lambda$. This difference affects the slope ($\lambda$) and level ($\theta$) parameters differently. For example, multiplying all investment observations by the same factor does not change the $\lambda$ estimate, whereas it can scale the $\theta$ estimate proportionally. Hence, directly dividing investment figures by $x^\lambda$ would impose more structure than the data can support. Instead, anchoring $\theta$, a market-potential parameter,  directly to observed 2024 investment yields a transparent and economically interpretable measure of market scale. Accordingly, we set the $\theta$ values in Instances A and C to be the amounts of private AI investment in the U.S.\ and the EU in 2024 reported by \citet{maslej2025aiindex}.\footref{fn:hai2025} 
For Instance B, we augment the estimate by \citet{maslej2025aiindex} with a potential corporate-investment component. Specifically, we adopt the lower-end estimate of the 2025 corporate investment (\$56 billion, as estimated above) and scale it down using the 129.9\% growth in total AI investment between 2024 and 2025 reported by \citet{Sajadieh2026aiindex},\footref{fn:hai2026} which implies a corporate AI investment potential of approximately \$24.36 billion in 2024. Adding this amount to the Quid-based estimate of \$9.29 billion yields a total investment potential estimate of approximately \$33.65 billion.%

There are multiple estimates of the total economic value of Generative AI. %
It has been widely agreed that AI can increase productivity and contribute to economic growth in other industries. Goldman Sachs estimated that AI could lead to \$7 trillion in global GDP growth over 10 years.\footnote{\url{https://www.goldmansachs.com/insights/articles/generative-ai-could-raise-global-gdp-by-7-percent}} 
\citet{acemoglu2025simple} provided a lower estimate of a 1.6\% increase in global GDP for the same 10-year horizon, or about \$183.78 billion per year. We take the average of these two estimates and calculate that the annual economic spillover of AI is approximately \$441.89 billion. We divide that number by the revenue potential of the AI industry. To match the economic value potential estimate that aggregates multi-year conditions, we adopt the regression-based $\theta$ estimate reported in the previous paragraph as a reasonable proxy of the annual revenue potential of the AI industry within the comparable time frame. This yields $\eta = 441.89/(371.24+74.44+57.57)+1\approx 1.88$.

The next subsection calibrates the technology scaling relationship between capability and energy use.

\subsection{Scaling and Energy Demand}
\label{subsec:calib-scaling}

We calibrate the technology scaling parameter $\alpha$ via regression. Let $v$ denote the AI energy usage. The function $v=kx^{1+\alpha}$ is equivalent to the linear function $\log v = (1+\alpha)\log x +\log k$. We consider three sets of $(v,x)$ estimates in 2024, summarized in \cref{tab:alpha_estimate}. 

\begin{table}[htbp]
\centering
\small
\caption{Data-Center Electricity Consumption and Model Capability by Region, 2024}
\label{tab:alpha_estimate}
\begin{tabular}{@{}lccc@{}}
\toprule
 & United States & China & European Union \\
\midrule
Data-center electricity (TWh) & 177.51 & 115.42 & 63.51 \\
MMMU score               &  78.2  &  70.3  & 52.5  \\
\bottomrule
\end{tabular}
\end{table}

The energy consumption figures are based on estimates reported in Figure 2.12 by the International Energy Agency (IEA 2025)\footnote{\url{https://iea.blob.core.windows.net/assets/de9dea13-b07d-42c5-a398-d1b3ae17d866/EnergyandAI.pdf} \label{fn:iea}}. These figures measure data-center electricity consumption across all workloads rather than AI alone: AI is the principal driver of recent growth, but it is one workload among several. We therefore treat them as anchoring magnitudes for the energy-demand scale $k$ rather than as measurements of AI-only demand. %
The MMMU scores are the same as those shown in \cref{tab:lambda_estimate}. Linear regression based on these estimates yields an $\alpha \approx 1.467$. This three-point estimate carries the same caveat as the $\lambda$ regressions (\cref{tab:lambda_ci}). The power-law form and curvature of the technology side are, however, documented independently and densely in the scaling-law literature \citep{kaplan2020scaling,hoffmann2022train}, whereas the market exponent has no comparable corroboration; we therefore concentrate scenario uncertainty in $\lambda$. Two further caveats apply. First, the procedure is two-step: the pooled regression identifies the slope under a common intercept, and the region-specific $k$ values are then re-anchored, so $\alpha$ is best read as a common-curvature assumption disciplined by cross-regional variation rather than a jointly identified parameter. Second, log-log slopes depend on the benchmark's origin: MMMU is multiple-choice with a chance floor near 25, and re-anchoring scores at that floor yields materially smaller exponents (about 1.5 rather than 2.5 for $1+\alpha$, with $\lambda$ falling commensurately) while preserving all three regime labels.

Although the regression analysis also provides a $k$ estimate, it does not account for the regional energy usage differences that we intend to capture through the different instances in the case study. To reflect such heterogeneity, we anchor the $k$ estimate in each instance based on the actual energy consumption in the corresponding region. Specifically, we calculate $k$ as $\frac{v}{x^{1+\alpha}}$ where $v$ and $x$ are the energy consumption and MMMU score shown in \cref{tab:alpha_estimate}, and obtain 325.6 TWh, 275.32 TWh, and 311.32 TWh for Instances A--C, respectively.  %

The next subsection sets energy cost, emissions, and renewable investment parameters.

\subsection{Energy Costs, Emissions, and Renewable Investment}
\label{subsec:calib-energy}

We set $c_r$ to be the lowest cost of a renewable PPA with 80\% hourly matching, which is approximately 50, 48, and 65 USD per MWh according to Figure 2.19 by IEA (2025).\footref{fn:iea}

We estimate $c_f$ based on the industrial retail price range of grid electricity reported in Figure 2.19 by IEA (2025).\footref{fn:iea} This range is about $[79.27,93.90]$, $[87.80,150]$, and $[109.76,200]$ for the U.S., China, and EU, respectively (all numbers are in USD/MWh). Considering that the European range covers the entire EU and thus may be an overestimate of the volatility of the industrial rates in data-center hubs, we instead focus on the grid rates in Frankfurt, which has the highest concentration of data centers in Europe. The wholesale rates to data centers in Frankfurt  are reported to be €110--€135/MWh,\footnote{\url{https://techplustrends.com/ai-data-center-energy-cost-europe-market-guide/}. We also find that the result is insensitive to grid rate ranges as long as they are above a certain threshold. Hence, results based on the rates in Frankfurt generally apply to other hubs (FLAP-D).} which lead to a range in USD/MWh of approximately $[125.13,153.63]$. %
Consistent with the model's interpretation of $c_f$ and $c_r$ as unit generation costs, we proxy both by observed procurement prices (grid rates and PPA prices, respectively). Because the two sources are proxied symmetrically and the analysis operates mainly through the difference $c_f-c_r$, a common retail margin largely nets out. We assume a uniform distribution of the grid rate, denoted by $r_g$, in the ranges defined above. To reflect developers' risk aversion to grid rate volatility, we adopt a mean-variance cost based on the following formula: $c_f=\mathbb{E}[r_g]+0.1\,\operatorname{Var}(r_g)$, where the coefficient $0.1$ carries units of MWh per USD so that the adjustment is expressed in USD/MWh. Accordingly, we can calculate $c_f$ to be approximately 88, 151, and 146 USD per MWh in the three instances, respectively. %

We next calibrate the renewable investment cost function $V(y)$. In practice, renewable-investment costs may exhibit different curvature across the deployment path: learning, standardization, and economies of scale can initially reduce marginal costs, whereas increasingly constrained siting, longer interconnection distances, and the need for transmission and grid expansion can eventually cause marginal costs to rise. Because our analysis focuses on the additional renewable capacity required to support future AI growth, we model only the emerging convex segment. We adopt a power-law formula for this cost function: $V(y)=g\cdot y^\mu$. The calibration is based on solar farm investment costs in practice. To estimate $\mu$, we collect investment information on major solar projects in China in recent years, summarized in \cref{tab:solar}. Accordingly, it can be calculated that the average investment cost of existing solar capacities (i.e., those of the first three projects) is 0.60 billion USD/GW. The Ordos project will increase the total capacity by 94\% and, at the same time, raise the average cost to 0.76 billion USD/GW, representing a 25\% increase. Given cost function $V(y)=g\cdot y^\mu$, the average investment cost is $g\cdot y^{\mu-1}$. Hence, the data collected implies $1.94^{\mu-1}=1.25$. Solving this equation yields $\mu\approx 1.34$. %
\begin{table}[htbp]
\small
\centering
\caption{Major Solar Projects in China}
\label{tab:solar}
\begin{threeparttable}
\begin{tabular}{@{}lcccc@{}}
\toprule
 & Midong & Mengxi Lanhai & Tianwan & Ordos \\
\midrule
Capacity (GW)            & 3.5  & 3.0  & 2.0  & 8.0            \\
Investment (billion USD) & 2.13 & 1.60 & 1.40 & 7.33\tnote{*}  \\
Grid connection          & 2024 & 2024 & 2025 & 2027 (expected) \\
\bottomrule
\end{tabular}
\begin{tablenotes}[flushleft]
\footnotesize
\item[*] Investment estimated based on total project cost of USD 11 billion, allocated by capacity share (8 GW solar, 4 GW wind).
\item \textit{Sources:} PV Magazine (Midong,\footnote{\url{https://www.pv-magazine.com/2024/06/06/worlds-largest-solar-plant-goes-online-in-china-2/}} Mengxi Lanhai\footnote{\url{https://www.pv-magazine.com/2024/11/12/worlds-second-largest-solar-plant-goes-online-in-china/}}); Saur Energy\footnote{\url{https://www.saurenergy.com/solar-energy-news/construction-of-2gw-offshore-solar-plant-in-china-starts}} (Tianwan); Reuters\footnote{\url{https://www.reuters.com/business/energy/china-three-gorges-renewables-plans-11-bln-new-energy-project-inner-mongolia-2024-06-28/}} (Ordos).
\end{tablenotes}
\end{threeparttable}
\end{table}

Renewable investment, from a policymaker's perspective, includes not only the installation cost of the plant and on-site infrastructure, but also the broader system investments required to connect and deliver its output, including transmission expansion, grid reinforcement, land, rights-of-way, and financing. Accordingly, we estimate $g$ based on a  calibration of the all-in public-system cost of utility-scale solar instead of the directly observed installation cost. We begin with the U.S.\ Energy Information Administration's capacity-weighted average construction cost of \$1.865 million/MW for solar generators entering service in 2024.\footnote{\url{https://www.eia.gov/electricity/generatorcosts/}. This measure already includes site preparation, equipment, engineering and construction management, permitting and other owner costs, and the direct electrical interconnection to a nearby transmission system, but excludes land acquisition and financing; hence, we do not add a separate project-level interconnection charge.}  To capture broader transmission investment beyond the immediate grid tie-in, we assume that each MW requires an attributable 100 MW-mile of additional transfer capability and apply the midpoint (approximately \$5,500/MW-mile) of the Department of Energy's geographically varying assumptions of roughly \$2,000--\$9,000/MW-mile,\footnote{\url{https://www.energy.gov/sites/default/files/2024-10/NationalTransmissionPlanningStudy-Chapter2.pdf}. Figure A-1.} yielding an allowance of \$0.55 million/MW; this assumption is intended to reflect that marginal projects are increasingly located farther from load centers and existing transmission. We further assume eight acres/MW, within DOE's six-to-eight-acre planning range,\footnote{\url{https://www.energy.gov/sites/default/files/2021-12/developing-utility-scale-renewable-electricity.pdf}} and value the land at the 2025 national average farm-real-estate value of \$4,350 per acre,\footnote{\url{https://www.nass.usda.gov/Publications/Todays_Reports/reports/land0825.pdf}} adding approximately \$0.035 million/MW. Finally, because the EIA construction-cost measure excludes financing, we add a 10\% construction-finance allowance to the plant cost and a 5\% system-delivery contingency to the plant, transmission, and land subtotal to cover uncertainty in regional grid reinforcement, transmission rights-of-way, and public planning and mitigation requirements. The resulting estimate is $1.865+0.550+0.035+1.865\times 10\%+(1.865+0.550+0.035)\times 5\%\approx$ \$2.8 million/MW, or equivalently \$2.8 billion/GW.\footnote{The estimate excludes storage, ongoing operation and maintenance, and decommissioning costs.}

To translate this investment cost into the coefficient $g$ in $V(y)=g y^{1.34}$, we normalize the marginal investment cost at $y=1$ GW to the all-in unit cost of \$2.8 billion/GW estimated above. Because $V'(y)=1.34g y^{0.34}$, imposing $V'(1)=2.8$ yields $g=2.8/1.34\approx 2.089$ billion USD per $\mathrm{GW}^{1.34}$. We use 1 GW as the normalization point, rather than applying the calibration to the entire existing solar capacity, because the exponent 1.34, calibrated from projected changes in solar investment requirements between 2025 and 2027, is intended to characterize the local curvature of the emerging convex-cost segment rather than the full historical deployment path. A relatively low normalization point therefore better represents the onset of rising marginal costs associated with increasingly difficult siting, interconnection, and grid expansion to meet the AI energy demand. We also cross-validate the resulting annualized cost against prevailing estimates of the levelized cost of solar electricity and find that the calibration is of a comparable magnitude.\footnote{Specifically, this function we derive, $V(y)=0.049\cdot y^{1.34}$, implies a cost of 0.049 billion at 1TWh, which translates into \$49 per MWh, which is comparable with the prevailing range of the levelized cost of utility-scale solar in the U.S.} Assuming a 20-year economic life, the amortized annual capacity-based cost function becomes $0.104\cdot y^{1.34}$ billion USD per year, with $y$ measured in GW. The amortization is undiscounted; incorporating a cost of capital would raise $g$ and hence $V'(k)$, tightening condition (b) in \cref{prop:1}(i) and making the adaptation trap more likely rather than less. At a capacity factor of 20\%, one GW of installed solar capacity generates $8.76\times0.20=1.752$ TWh annually. Substituting $y_{\mathrm{GW}}=y_{\mathrm{TWh}}/1.752$ therefore gives an equivalent annual energy-based cost function of $0.049\cdot y^{1.34}$, where $y$ is measured in TWh. This is the $V(y)$ function used in Instance A.

To calibrate the coefficient $g$ for Instances B and C, we assume that the all-in solar investment cost across regions scales proportionally with the prevailing cost of utility-scale solar installation. Based on available industry estimates, we use installation costs of approximately 0.95, 0.57, and 0.70 USD/W for the U.S., China, and EU, respectively.\footnote{\url{https://a1solarstore.com/blog/what-is-a-solar-farm-and-how-much-money-can-it-make-you.html} \url{https://compareelectricity.com/research/how-much-does-a-solar-farm-cost}}$^,$\footnote{\url{https://www.solarpowereurope.org/press-releases/new-study-reveals-path-to-reshore-solar-manufacturing-in-europe}. This article estimates that the cost of utility solar installations in Europe is about 60.8 €ct/Wp compared to 50.0 €ct/Wp for a Chinese system.}
We therefore scale the benchmark coefficient $g=0.049$ for Instance A by the corresponding installation-cost ratios. This yields $g\approx 0.029$ and $g\approx 0.036$ in Instances B and C, respectively.

\enlargethispage{3\baselineskip}
Finally, we set $e_f$ in Instances A--C to be the carbon intensity of the grid in the U.S., China, and EU, respectively, which is approximately 0.367 ton/MWh (calculated by dividing the total emissions of approximately 1.53 billion metric tons by the total generation of approximately 4.18 trillion kilowatthours\footnote{\url{https://www.eia.gov/tools/faqs/faq.php?id=74&t=11}}), 0.614 ton/MWh,\footnote{\url{https://academic.oup.com/ijlct/article/doi/10.1093/ijlct/ctae181/7762370}} and 0.187 ton/MWh.\footnote{See Figure~1 at \url{https://www.eea.europa.eu/en/analysis/indicators/greenhouse-gas-emission-intensity-of-1}.}
We set the climate-damage weight $\xi$ to the carbon price in the corresponding regions. In the U.S., the carbon price tracks around \$28 per ton for regional cap-and-trade systems such as California's program;\footnote{\url{https://icapcarbonaction.com/en/ets/usa-california-cap-and-invest-program}} legislative frameworks such as the Whitehouse-Schatz bill targeted a starting fee of \$52 per ton.\footnote{\url{https://www.c2es.org/wp-content/uploads/2021/12/carbon-pricing-proposals-in-the-117th-congress.pdf}} Taking the average of these estimates, we set $\xi=\$40$ per ton in Instance A. The carbon price in China's national Emissions Trading Scheme hovers around 90 to 106 RMB (approximately 12 to 15 USD) per ton.\footnote{\url{https://www.ieta.org/uploads/wp-content/Resources/Busines-briefs/2025/IETA_Business_Brief-China-final-July2025.pdf}} Taking the average of these estimates, we set $\xi=\$13.5$ per ton in Instance B. The carbon price in the EU is approximately €83.5 (\$95) per ton,\footnote{\url{https://tradingeconomics.com/commodity/carbon} (accessed July 2026)} which we use as the $\xi$ value in Instance C. We anchor $\xi$ to observed carbon prices rather than to social-cost-of-carbon estimates. Because permit prices are policy-constrained and typically run well below the social cost of carbon, this is a conservative valuation of climate damages. %

\end{document}